\newcommand{\blind}{1}
\newcommand{\Ben}{\begin{enumerate}}
\newcommand{\Een}{\end{enumerate}}
\newcommand{\Bit}{\begin{itemize}}
\newcommand{\Eit}{\end{itemize}}
\newcommand{\Beq}{\begin{equation}}
\newcommand{\Eeq}{\end{equation}}
\newcommand{\Ba}{\begin{align*}}
\newcommand{\Ea}{\end{align*}}
\newcommand{\Mr}{\mathrm}
\newcommand{\Mbb}{\mathbb}
\newcommand{\Z}{\mathbb{Z}}
\newcommand{\N}{\mathbb{N}}
\definecolor{lightgray}{gray}{0.9}
\newcommand{\R}{{\mathbb R}}
\newcommand{\E}{{\mathbb E}}
\renewcommand{\P}{\mathbb{P}}
\newtheorem{Rq}{Remark} 
\theoremstyle{plain}
\newtheorem{theo}{Theorem}
\newtheorem{lemm}{Lemma}
\newtheorem{prop}{Proposition}
\theoremstyle{definition}
\newtheorem{defi}{Definition}
\newtheorem{assu}{Assumption}
\begin{document}

\def\spacingset#1{\renewcommand{\baselinestretch}%
{#1}\small\normalsize} \spacingset{1}


\if1\blind
{
  \title{\bf Forecasting with Markovian max-stable fields in space and time: An application to wind gust speeds}
  \author{Ryan Cotsakis,\thanks{
    The authors gratefully acknowledge the Expertise Center for Climate Extremes (ECCE) at the University of Lausanne for financial support.}\qquad Erwan Koch,\hspace{.2cm}\\
    \small Expertise Center for Climate Extremes (ECCE),\\
    \small Faculty of Business and Economics (HEC) - 
    \small Faculty of Geosciences and Environment (FGSE),\\
    \small University of Lausanne, CH-1015 Lausanne, Switzerland.\\
    \small and \\
    Christian-Yann Robert\\
    \small Laboratory of Actuarial and Financial Science (LSAF), 
    \small Université Lyon 1, Lyon, France.\\
    \small Laboratory in Finance and Insurance (LFA),\\ 
    \small Center for Research in Economics and Statistics (CREST), ENSAE, Paris, France.}
  \maketitle
} \fi

\if0\blind
{
  \bigskip
  \bigskip
  \bigskip
  \begin{center}
    {\LARGE\bf Forecasting with Markovian max-stable fields in space and time: An application to wind gust speeds}
\end{center}
  \medskip
} \fi

\bigskip
\vspace{-1cm}
\begin{abstract}
Hourly maxima of 3-second wind gust speeds are prominent indicators of the severity of wind storms, and accurately forecasting them is thus essential for populations, civil authorities and insurance companies. Space-time max-stable models appear as natural candidates for this, but those explored so far are not suited for forecasting and, more generally, the forecasting literature for max-stable fields is limited. To fill this gap, we consider a specific space-time max-stable model, more precisely a max-autoregressive model with advection, that is well-adapted to model and forecast atmospheric variables. We apply it, as well as our related forecasting strategy, to reanalysis 3-second wind gust data for France in 1999, and show good performance compared to a competitor model. On top of demonstrating the practical relevance of our model, we meticulously study its theoretical properties and show the consistency and asymptotic normality of the space-time pairwise likelihood estimator which is used to calibrate the model.
\end{abstract}

\noindent%
{\it Keywords:} Advection; Brown--Resnick model; Max-autoregressive model; Nowcasting; Space-time max-stable model; Weather forecasting
\vfill

\newpage
\spacingset{1.9} 

\section{Introduction}

Extreme wind events can trigger huge human impacts and are among the most financially devastating natural disasters globally. For instance, the Lothar windstorm in December 1999 resulted in losses exceeding \$8 billion, while in more recent years, individual storms in the south of Europe have caused more than \$4 billion in damage \citep{Gonalves2024}. Producing reliable nowcasts (lead time from $0$ to $6$ hours) as well as short-range (lead time from to $12$ to $72$ hours) and medium range (lead time from three to seven days) forecasts of their evolution is key to issue timely and accurate warnings, and is thus essential for populations, civil authorities, and insurance companies. Existing forecasting strategies include purely observation-based methods (e.g., persistence or analog methods), traditional statistical techniques (using, e.g., autoregressive processes for nowcasting), the use of complex numerical weather prediction (NWP) models \citep{bauer2015quiet}, and recently developed artifical intelligence (AI)-based methods \citep[e.g.,][and references therein]{rasp2024weatherbench}. Although NWP and AI-based approaches often produce the most accurate forecasts at the aforementioned lead times, purely statistical models have the advantages to be interpretable and to allow easy uncertainty quantification. In this paper we leverage spatio-temporal extreme-value
theory (EVT) to propose a parsimonious statistical model which, in addition to
the aforementioned advantages, offers an explicit Markovian representation of the temporal dynamics and thus allows straightforward ensemble forecasting.

We aim at forecasting---in time---hourly maxima of $3$-second wind gust speeds, as they are key indicators of storm severity owing to their damage potential. Such hourly maxima are taken over a large number (1200) of measurements. Despite the strong temporal dependence, the branch of EVT dealing with maxima (block-maxima type of approach) turns out to be appropriate for such data, although often being used for larger blocks (weeks, months or years). Since we are interested in the full spatial field of these hourly data and in its temporal evolution, we need to resort to EVT for pointwise maxima, i.e., the theory of max-stable random fields. Max-stable fields \citep[e.g.,][]{haan1984spectral, de2007extreme, davison2012statistical}, which constitute an extension of multivariate generalized extreme-value random vectors to the functional setting,  indeed naturally arise as limits of properly scaled pointwise maxima. Common models include the Smith \citep{smith1990max}, Schalther \citep{schlather2002models}, Brown--Resnick \citep{brown1977extreme, kabluchko2009spectral} and extremal-$t$ \citep{opitz2013extremal} fields. In order to perform reliable forecasts, we have to suitably model the temporal dynamics, which requires us to be in a space-time setting. \cite{davis2013max}, \cite{huser2014space}, and \cite{buhl2015anisotropic} constructed space-time max-stable models by considering a $d$-dimensional max-stable models and labeling one of the equivalent dimensions as temporal, and the other $d-1$ dimensions as spatial. One limitation of this strategy is that the temporal dynamics exhibit the same structure as the spatial dependence, making the temporal dynamics possibly inadequate, unexplicit, and difficult to interpret. Moreover, forecasting using these models is difficult since the forecast at a future time point involves a conditional (on all observations in a half-space of $\R^{d}$) distribution which is often intractable and difficult to sample from. 

More generally, forecasting with max-stable random fields presents significant theoretical challenges since their conditional distributions are typically intractable, and the associated literature \citep{davis1989basic, davis1993prediction, cooley2007prediction, lebedev2009nonlinear, qian2022class, tang2021uncertain, wang2011conditional} primarily focuses on spatial-only or temporal-only settings. None of these approaches directly addresses the fundamental challenge of temporal forecasting in a way that can be practically applied to our setting with two spatial dimensions and one temporal dimension.

The broad class of models proposed by \cite{embrechts2016space} addresses some of the limitations of the three aforementioned space-time max-stable models. We utilize a specific model from this class that allows easy forecasting due to its Markov property in time and which is well-suited to atmospheric applications due to the presence of an advection parameter that can model propagation of air masses. This enables us to address a significant concrete weather-related problem while filling a gap in the literature dedicated to forecasting with max-stable fields. Although this model was introduced by \cite{embrechts2016space} along with some properties and a brief simulation-based study of the space-time maximum pairwise likelihood estimator, its detailed theoretical properties, the associated forecasting strategy and its practical usefulness for real-life problems remain unexplored.

Our contribution is threefold. First, we provide a detailed study of the model's properties and establish the strong consistency and asymptotic normality of the space-time maximum pairwise likelihood estimator as both spatial and temporal dimensions approach infinity. Second, we develop a novel methodology for forecasting using this model. Finally, we demonstrate our model's practical utility through an application to wind gust speeds over Northwestern France in 1999, showing superior performance compared to the model by \cite{davis2013max}. Our approach exhibits better skill both in capturing the genuine temporal evolution of the field and in producing accurate forecasts, as evidenced by a more realistic representation of the space-time correlation structure and improved forecasting scores. These improvements stem from the explicit temporal dynamics through the Markov property and the advection component, in contrast to the implicit temporal structure in \cite{davis2013max}'s approach.

The remainder of the paper is organized as follows. Section \ref{Sec_Data} describes the data we consider and provides a brief reminder about max-stable random fields. Then, we present the model and our forecasting strategy in Section \ref{Sec_ModelForecasting}. Section \ref{Sec_InferenceSimulations} details the estimation procedure and provides asymptotic properties of the pairwise likelihood estimator. We apply our model to the mentioned dataset in Section \ref{Sec_CaseStudy}. Finally, Section \Ref{Sec_Discussion} provides a summary of our results as well as some perspectives. The supplementary material (Sections~\ref{App_Operational}--\ref{sec:single_site_params}, provided separately) gathers an explanation of how to use our model for operational weather forecasts, proofs, simulation experiments, and some diagnostics. Throughout the paper, $\overset{d}{=}$ and $\overset{d}{\rightarrow}$ denote equality and convergence in distribution, respectively; in the case of random fields, the distribution should be understood as the set of all finite-dimensional multivariate distributions. Moreover, $\overset{\mathrm{a.s.}}{\longrightarrow}$ denotes almost sure convergence. In the following, ``$\bigvee$'' denotes the supremum when applied to a countable set.

\section{Data and preliminaries}
\label{Sec_Data}



\subsection{Data}\label{Sec_Data_Data}


We focus on hourly maxima of wind gust data taken every three seconds. The measurements are taken at $10$ m height (as defined by the World Meteorological Organization) from 19 December 1999 05:00 central European time (CET) to 23 December 1999 13:00 CET over a rectangle domain extending from $-1^{\circ}$ to $3.25^{\circ}$ longitude and $46.5^{\circ}$ to $49.25^{\circ}$ latitude (see Figure \ref{Fig_ConsideredRegion}); 
the spatial resolution is $0.25^{\circ}$ longitude and $0.25^{\circ}$ latitude. We thus have $105$ temporal observations at each of the $216$ grid points. The data were obtained from the publicly available ERA5 (European Centre
for Medium-Range Weather Forecasts Reanalysis
$5^{th}$ Generation) dataset\footnote{\url{https://cds.climate.copernicus.eu/datasets/reanalysis-era5-single-levels?tab=download}}; more precisely we used the ``10 m wind gust since previous post-processing'' variable.

\begin{figure}[!t]
\centering 
\includegraphics[scale = 0.5]{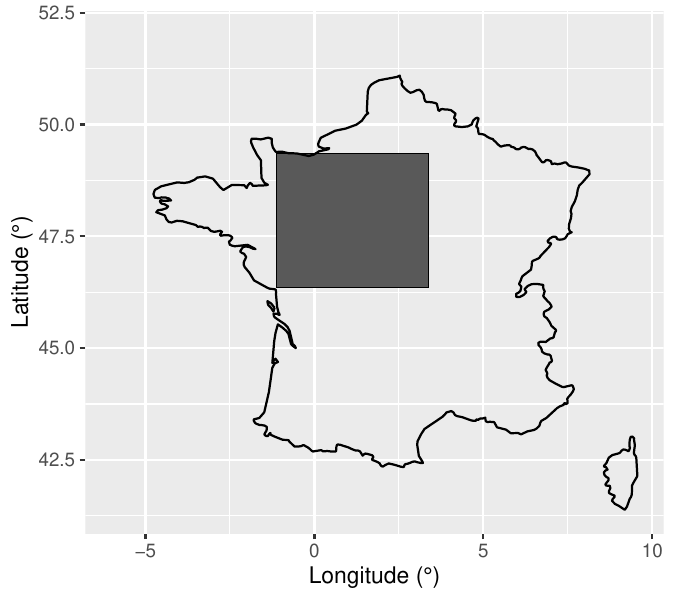}
\caption{Considered region (indicated by the shaded rectangle).}
\label{Fig_ConsideredRegion}
\end{figure}

Figure~\ref{Fig_Snapshots} clearly shows that, from one hour to the next, the main spatial patterns propagate to the East/South-East, which is classic during wet winter periods, where a westerly regime is prevailing. Spatial propagation (advection) takes place for many atmospheric variables (temperature, rainfall, pollutant concentration) and around the world.

\begin{figure}[h!]
\centering 
\includegraphics[scale = 0.6]{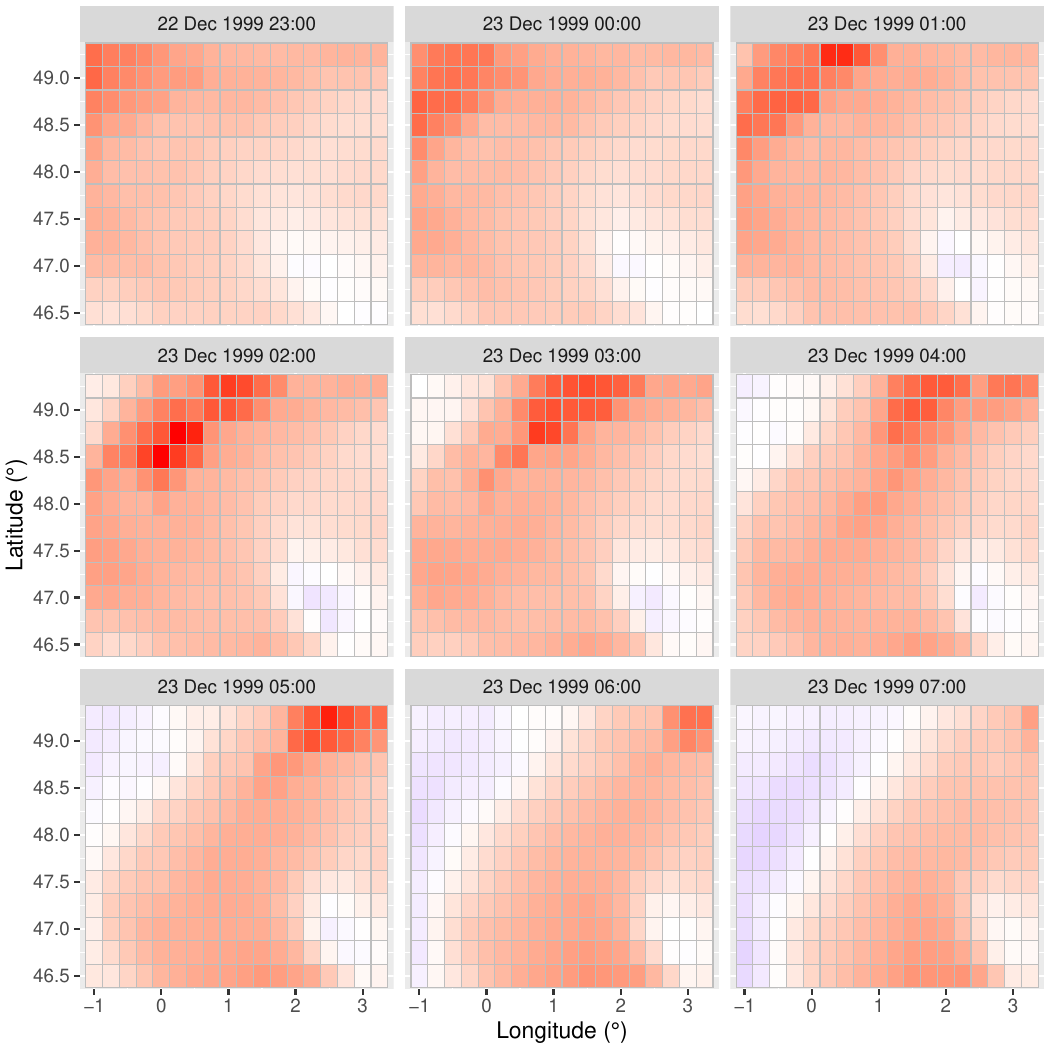}
\caption{Fields of pointwise hourly maxima of $3$-second wind gust on the considered region from 22 December 1999 at 23:00 CET to 23 December 1999 at 07:00 CET. The data have been transformed to follow the standard Gumbel distribution (see Section~\ref{subsect_remindermaxstable}) at each grid point. Blue, white, and red indicate negative, zero, and positive values, respectively. The colour intensity is proportional to the values, with darkest blue corresponding to $-1.22$ m s$^{-1}$ and darkest red to $6.63$ m s$^{-1}$ (on the Gumbel scale).}
\label{Fig_Snapshots}
\end{figure}

\subsection{Reminder about max-stable random fields}
\label{subsect_remindermaxstable}

Let $S_1, \ldots, S_n$ be independent replications of a  random field $\{ S(\bm{x})  \}_{\bm{x} \in \mathbb{R}^d}$, and $(a_n(\bm{x}), \bm{x} \in \mathbb{R}^d)_{n \geq 1}> 0$ and $(b_n(\bm{x}), \bm{x} \in \mathbb{R}^d)_{n \geq 1} \in \mathbb{R}$ be sequences of functions. If there exists a non-degenerate random field $\{ X(\bm{x})  \}_{\bm{x} \in \mathbb{R}^d}$ such that,
$$
\left \{ \frac{ \bigvee_{i=1}^{n} S_i(\bm{x}) -b_n(\bm{x} )}{a_n(\bm{x} )}  \right \}_{\bm{x} \in \mathbb{R}^d } \overset{d}{\to} \left \{ X(\bm{x}) \right \}_{\bm{x} \in \mathbb{R}^d},
$$
then $X$ is necessarily max-stable \citep{haan1984spectral}, which explains the relevance of max-stable fields as models for pointwise maxima of random fields. 
 
Max-stable fields having standard Fréchet margins, i.e., such that $\mathbb{P}(Z(\bm{x}) \leq z) = \exp(-1/z)$, $z>0$, $\bm{x} \in \mathbb{R}^d$, are said to be simple. Sometimes, max-stable fields are also standardized to have Gumbel margins (as, e.g., in Figure~\ref{Fig_Snapshots}), whose distribution function is $\exp(-\exp(-x))$, $x \in \mathbb{R}$.
If $\{ X(\bm{x}) \}_{\bm{x} \in \mathbb{R}^d}$ is max-stable, there exist deterministic functions $\mu(\cdot) \in \Mbb{R}$, $\sigma(\cdot)>0$ and $\xi(\cdot) \in \Mbb{R}$ defined on $\Mbb{R}^d$, called the
location, scale and shape functions, such that
\Beq
\label{Eq_Link_Maxstb_Simple_Maxstab}
X(\bm{x}) = 
\left \{
\begin{array}{ll}
\mu(\bm{x})-\sigma(\bm{x})/\xi(\bm{x}) + \sigma (\bm{x})Z(\bm{x})^{\xi(\bm{x})}/\xi(\bm{x}), & \quad \xi(\bm{x}) \neq 0, \\
\mu(\bm{x}) + \sigma \log Z(\bm{x}), & \quad \xi(\bm{x}) = 0,
\end{array}
\right.
\Eeq
where $\{ Z(\bm{x}) \}_{\bm{x} \in \mathbb{R}^d}$ is simple max-stable. This comes from the fact that, for any $\bm{x} \in \mathbb{R}^d$, $X(\bm{x})$ follows the generalized extreme-value (GEV) distribution with location, scale, and shape parameters  
$\mu(\bm{x})$, $\sigma(\bm{x})$, and $\xi(\bm{x})$. 

Any simple max-stable field can be written as \citep[][]{haan1984spectral}
\begin{equation}\label{eq2}
Z(\bm{x}) = \bigvee_{i=1}^{\infty} U_i Y_i(\bm{x}), \quad \bm{x} \in \mathbb{R}^d,
\end{equation}
where the $(U_i)_{i \geq 1}$ are the points of a Poisson point process on $(0, \infty)$ with intensity function $u^{-2} \mathrm{d}u$ and each $(Y_i)_{i \geq 1}$ is an independent replicate of a non-negative random field $Y$ on $\Mbb{R}^d$, such that $\Mbb{E}[Y(\bm{x})]=1$ for any $\bm{x} \in \mathbb{R}^d$. Additionally, any field defined by \eqref{eq2} is simple max-stable, and this has enabled the construction of parametric models of max-stable fields. 

The best known are the Smith \citep{smith1990max}, Schlather \citep{schlather2002models}, Brown--Resnick \citep{brown1977extreme, kabluchko2009stationary}, and extremal-$t$ \citep{opitz2013extremal} models; the last two have been found to be flexible models that capture environmental extremes well.
Write 
$Y(\bm{x}) = \exp \left\{ \epsilon(\bm{x}) - \mathrm{Var}(\epsilon(\bm{x}))/2
\right\}$, $\bm{x} \in \mathbb{R}^d$, where $\mathrm{Var}$ denotes variance, and $\{ \epsilon(\bm{x}) : \bm{x} \in \mathbb{R}^d \}$ is a centred Gaussian random field with stationary\footnote{Throughout, stationarity refers to strict stationarity, i.e., all finite-dimensional margins are invariant by a shift in space and/or time.} increments and semivariogram $\gamma$ \citep[see, e.g.,][]{matheron1963principles}. Taking this $Y$ in \eqref{eq2} leads to the Brown--Resnick random field associated with the semivariogram $\gamma$. A frequently used isotropic semivariogram is
$ \gamma(\bm{x}) = \left( \| \bm{x} \|/\kappa \right)^{2H}$, $\bm{x} \in \mathbb{R}^d$, where $\kappa > 0$ and $H \in (0, 1]$ are the range and Hurst parameters, respectively, and $\|.\|$ is the Euclidean distance. Note that twice the Hurst index is often referred to as the smoothness parameter.

For any simple max-stable field $Z$ and $\bm{x}_1, \ldots, \bm{x}_D \in \R^d$, we have
\begin{equation}
\label{Eq_Multivariatedf_Maxstable}
    \mathbb{P}(Z(\bm{x}_1) \leq z_1, \ldots, Z(\bm{x}_D) \leq z_D)) = \exp(-V_{Z; \bm{x}_1, \ldots, \bm{x}_D}(z_1, \ldots, z_D)),
    \end{equation}
where $V_{Z; \bm{x}_1, \ldots, \bm{x}_D}$ is the exponent measure of the random vector $(Z(\bm{x}_1), \ldots, Z(\bm{x}_D))^{\prime}$ \citep[e.g.,][]{de2007extreme}, with $^{\prime}$ denoting transposition.
The $D$-dimensional multivariate density of a max-stable vector is often intractable as the exponent measure is difficult to characterize unless $D$ is small, and the exponential leads to a combinatorial explosion of the number of terms in the density. Thus, it is common to estimate max-stable fields using the composite likelihood, most often the pairwise likelihood \citep[e.g.,][]{padoan2010likelihood}.

The bivariate extremal coefficient function for a simple max-stable field $Z$ is defined, for $z>0$, by 
$\mathbb{P}(Z(\bm{x}_1) \leq z, Z(\bm{x}_2) \leq z)) = \exp\left(- \Theta(\bm{x}_1, \bm{x}_2) / z \right)$, $\bm{x}_1, \bm{x}_2 \in \mathbb{R}^d$.
If $Z$ is stationary, then $\Theta$ depends on the lag vector $\bm{h}=\bm{x}_2-\bm{x}_1$ only.

\medskip

Apart from \cite{embrechts2016space}, the main approach \citep[e.g.,][]{davis2013max, huser2014space} used so far to build models for space-time max-stable fields consists of using Representation \eqref{eq2}, noting that $\mathbb{R}^{d} = \mathbb{R}^{d-1} \times \mathbb{R}$, and assigning $\mathbb{R}^{d-1}$ to space and $\mathbb{R}$ to time, i.e., writing $\bm{x}=(\bm{s}, t)^{\prime}$ where $\bm{s} \in \mathbb{R}^{d-1}$ denotes the spatial index and $t \in \mathbb{R}$ denotes the time index. In the space-time setting, very commonly and as is the case in this paper, we consider space to be $2$-dimensional, and \eqref{eq2} therefore becomes 
\begin{equation}\label{Eq_SpaceTimeBR}
Z(\bm{s}, t)=\bigvee_{i=1}^{\infty} U_i Y_i(\bm{s}, t), \quad \bm{s} \in\mathbb{R}^2, t \in \mathbb{R}.
\end{equation}
In this context, the space-time Brown--Resnick field introduced by \citet{davis2013max}, referred to as the DKS model in the following, is given by~\eqref{Eq_SpaceTimeBR}, with $t\geq 0$ and with each $Y_i$ an independent replication of $Y(\bm{s}, t) = \exp \left\{ \epsilon(\bm{s}, t) - \mathrm{Var}(\epsilon(\bm{s}, t))/2
\right\}$, where $\epsilon$ is a space-time Gaussian random field with stationary increments.

Returning to \eqref{Eq_Link_Maxstb_Simple_Maxstab} in the space-time context, we will consider temporal stationarity and thus define the functions $\eta$, $\tau$ and $\xi$ as  functions of space only: $\eta(\bm{s})$, $\tau(\bm{s}) $, and $\xi(\bm{s})$. 

\section{Model and forecasting}
\label{Sec_ModelForecasting}

\subsection{A space-time max-autoregressive model with advection}
\label{Subsec_Model}

We now present the model that will be used in our application to wind gust reanalysis data, once these have been transformed to the standard Fréchet scale. The model is a max-autoregressive space-time max-stable field \citep[belonging to the class introduced by][]{embrechts2016space} with an advection component, and is therefore suitable for forecasting and accommodates the spatial propagation often observed in atmospheric phenomena. Note that max-autoregressive models are the only space-time max-stable models to exhibit the Markovian property in time. The presented model handles the spatio-temporal dependence in the ``standardized (Fréchet) world'' and the marginal distribution at each grid point thus also needs to be modeled; it can be a GEV distribution with parameters specific to that grid point, belonging to a trend surface, or any other distribution depending on the purpose. 

Let $(U_i)_{i \geq 1}$ be the points of a Poisson point process on $(0, \infty)$ with intensity function $u^{-2} \Mr{d}u$ and $(Y_i)_{i \geq 1}$ be independent replications of a non-negative random field $Y$ on $\Mbb{R}^2$ such that $\Mbb{E}[Y(\bm{s})]=1$ for any $\bm{s} \in \mathbb{R}^2$. We consider a parametric spatial max-stable (written as in \eqref{eq2} but in the spatial setting) random field  
\begin{equation}
\label{Eq_InnovationField}
    W(\bm{s}) = \bigvee_{i=1}^{\infty} U_i Y_i(\bm{s}),\qquad \bm{s}\in\R^2,
\end{equation}
where the distribution of the $Y$ field is assumed to depend on a parameter that we denote by $\bm{\theta}$; e.g., $W$ can be a spatial Schlather or Brown--Resnick model. We also introduce a family $(W_t(\bm{s}))_{t \in \mathbb{N}}$ of independent replications of $W$. Our space-time max-stable model $Z$ is then defined as follows:
\Ben
\item \textit{Initialization}: $Z(\bm{s}, 0)=W_0(\bm{s}), \quad \bm{s} \in \Mbb{R}^2$.
\item \textit{Recurrence equation}: for any $t \in \Mbb{N}^+$,
\begin{equation}\label{eqn:max_auto}
    Z(\bm{s}, t) = \max\big\{ a Z(\bm{s}-\bm{\tau}, t - 1),\ (1-a)W_t(\bm{s})\big\}, \quad \bm{s} \in \Mbb{R}^2,
\end{equation}
where $a \in (0, 1)$ and $\bm{\tau} \in \Mbb{R}^2$.
\Een
By $\mathbb{N}^+$ we mean $\mathbb{N} \backslash \{ 0 \}$. This model fundamentally differs in spirit from the space-time max-stable models developed in \cite{davis2013max}, \cite{huser2014space}, and \cite{buhl2015anisotropic}, owing to its explicit dynamics and its causal representation. As already mentioned, it is a max-autoregressive random field; the value of $Z$ at time $t$ and site $\bm{s}$ either corresponds to an attenuated value of the realization of $Z$ at site $\bm{s} - \bm{\tau}$ and time $t-1$ or to a scaled version of the realization of the innovation field $W_t(\bm s)$.
The parameter $a$ governs the strength of influence of the past and is related to the rate at which dependence decays in time. The parameter $\bm{\tau}$ creates a propagation of the spatial patterns with time and thus allows one to capture advection which is an essential feature of atmospheric phenomena (see, e.g., Figure \ref{Fig_Snapshots} in the case of wind gust speed); $\bm{\tau}$ can therefore be seen as a velocity vector. Contrary to $a$ and $\bm \tau$ which control the dynamics, the remaining parameters of $Z$, gathered in $\bm{\theta}$, are inherited from the parametrization of $W$ and only characterize the spatial dependence structure. For any $t \in \mathbb{N}$, the spatial field $\{Z(\bm{s}, t)\}_{\bm{s} \in \mathbb{R}^2}$ is max-stable with the same distribution as that of $W$ \citep[see][Section 3.1]{embrechts2016space}.

It is worthwhile to note that, for any $t\in \N^+$ and $u \in \{1,\ldots,t\}$,
\begin{equation}
\label{Eq_RecurrenceLagu}
    Z(\bm{s}, t) = \max\big\{ a^u Z(\bm{s}-u\bm{\tau}, t - u),\ (1-a^u)\widetilde W_{t-u}^t(\bm{s})\big\}, \quad \bm{s} \in \Mbb{R}^2,
\end{equation}
where for $t_1, t_2 \in \N$, with $t_1 < t_2$,
\begin{equation}\label{eqn:W_tilde}
\widetilde W_{t_1}^{t_2}(\bm s) = \frac{1-a}{1-a^{t_2-t_1}}\bigvee_{k=0}^{t_2-t_1-1}a^k W_{t_2-k}(\bm s - k\bm \tau),\qquad \bm s \in \R^2.
\end{equation}
In addition, for $t_1 < t_2 < t_3$ in $\N$, the random fields $\widetilde W_{t_1}^{t_2}$ and $\widetilde W_{t_2}^{t_3}$ are independent and equal in distribution to $W$. This statement, together with~\eqref{Eq_RecurrenceLagu}, are proved in Section~\ref{Sec:proof_of_closed_form}. Equation~\eqref{Eq_RecurrenceLagu} turns out to be useful for the forecasting (described in Section~\ref{Sec_ModelForecasting}) as it provides the same recurrence pattern as~\eqref{eqn:max_auto} for time steps larger than unity.

By construction, our model is time-Markovian, meaning that the conditional distribution of $Z(\bm{s}, t+u)$ given $\{Z(\tilde{\bm{s}}, \tilde{t}): \tilde{\bm{s}} \in \mathbb{R}^2, \tilde{t} \in \{ 1, \ldots, t \}\}$ is the same as that of $Z(\bm{s}, t+u)$ given $\{Z(\tilde{\bm{s}}, t): \tilde{\bm{s}} \in \mathbb{R}^2 \}$. The Markovian property is even stronger in that, for any $u \in \N^{+}$, the distribution of $Z(\bm{s}, t+u)$ given $\{Z(\tilde{\bm{s}}, t): \tilde{\bm{s}} \in \mathbb{R}^2 \}$ is the same as that of $Z(\bm{s}, t+u)$ given $Z(\bm{s} - u\bm{\tau}, t)$, 
i.e., the only relevant information to forecast $Z(\bm{s}, t+u)$ is $Z(\bm{s} - u\bm{\tau}, t)$.  

As shown in \cite{embrechts2016space}, the space-time field $Z$ defined above is stationary in time. If, in addition, $W$ is stationary in space, which we assume in the following, then $Z$ is stationary in space and time.
According to Proposition 1 in \cite{embrechts2016space}, the bivariate exponent measure of the space-time field $Z$ is written, for $\bm{h} \in \R^{2}$, $u \in \N$,
\begin{align}\label{eqn:bivariate_exponent_measure_general}
    V_{Z; \bm{h}, u}(z_1,z_2) = &\ -\log\P\big(Z(\bm{0},0) \leq z_1, Z(\bm{h},u) \leq z_2\big)\nonumber\\
    = &\ V_{W;\bm{h} - u\bm{\tau}}(z_1, a^{-u}z_2) + \frac{1-a^u}{z_2}, \quad z_1, z_2 >0,
\end{align}
where 
$V_{W;\bm{h}}(z_1,z_2) = -\log\P\big(W(\bm{0}) \leq z_1, W(\bm{h}) \leq z_2\big)$, for $z_1, z_2 >0$,
is the corresponding bivariate exponent measure of $W$.
Hence the expanded expression of $V_{Z;\bm h, u}$ depends on the choice of the innovation field $W$. If $\bm{h}-u\bm{\tau} = 0$, $W$ does not appear in the exponent measures, and one obtains
\begin{equation}
\label{eqn:bivariate_exponent_measure_general_degeneratecase}
    V_{Z;u\bm\tau,u}(z_1,z_2) = \frac{1}{\min \{ z_1,a^{-u}z_2 \} } +\frac{1-a^u}{z_2},\qquad u \in \N.
\end{equation}
The bivariate extremal coefficient, which is defined by
$\Theta_Z(\bm{h},u)=V_{Z;\bm h, u}(1,1)$, for $\bm h\in \R^2, u \in \N$, takes the specific form $\Theta_Z(u\bm\tau,u) = 2 - a^u$ in the case where $\bm{h}=u\bm\tau$.

Provided that the spatial field $W$ is mixing (in the sense of Definition 2.1 in \cite{kabluchko2010ergodic}), our model $Z$ is shown to be space-time mixing in Lemma~\ref{lem:X_mixing}~ (see
Section~\ref{sec:pwle_consistency}), which allows us to establish the asymptotic properties of the pairwise maximum likelihood estimator (see Section~\ref{Sec_InferenceSimulations}) when $W$ takes the form of the spatial Brown--Resnick model.

Owing to the time Markovian property, the information about future time points is entirely described by the current state of the field. It is thus relatively straightforward to express the distribution of the field at a later space-time point conditionally on the value taken at an appropriately chosen spatial point at the current time. Indeed, the conditional distribution of $Z(\bm{s},t+u)$ given that $Z(\bm{s}- u\bm\tau,t)=x_1$ is not influenced by the spatial dependence structure of the innovation field $W$, and is given by
\begin{equation}\label{eqn:conditional_distribution}
    \mathbb{P} \big(Z(\bm{s},t+u) \leq z_2\ |\ Z(\bm{s}-u \bm{\tau} ,t)=z_1\big) = \mathbb{I}(z_2 \geq a^u z_1)\,\mathrm{exp}\left({-\frac{1-a^u}{z_2}}\right).
\end{equation}
An immediate consequence of~\eqref{eqn:conditional_distribution} is that
\begin{equation}
    \mathbb{P} \big(Z(\bm{s},t+u) = a^uz\ |\ Z(\bm{s}-u \bm{\tau} ,t)=z\big) = \mathrm{exp}\left({-\frac{a^{-u}-1}{z}}\right) > 0.
    \label{Eq_ExplicitDirac}
\end{equation} 
Intuitively, this non-zero conditional probability of $Z(\bm{s},t+u) = a^uz$ arises from taking the maximum of a deterministic and a random term in~\eqref{eqn:max_auto}. 
We see that the conditional distribution contains a mass at $a^u z$, and so the pairwise distribution is a mixture of a Dirac distribution and an absolutely continuous distribution. For two space-time points chosen such that the spatial lag is $\bm \tau$ times the temporal lag $u$, the distribution of the pair $(Z(\bm{s},t+u), Z(\bm{s}-u \bm{\tau} ,t))^{\prime}$ has a mass in $(a^u z, z)$ for any $z>0$.

We end this subsection by noting that in the specific case where the innovation field $W$ is taken to be the spatial Brown--Resnick model on $\mathbb{R}^2$ (see Section~\ref{subsect_remindermaxstable}), the exponent measure of $Z$ in~\eqref{eqn:bivariate_exponent_measure_general} becomes
\begin{align}\label{eqn:bivariate_exponent_measure_BR}
    V_{Z;\bm{h},u}(z_1,z_2) = & \frac{1}{z_1}\Phi\bigg(\frac{\log\big(z_2/(a^uz_1)\big)}{\sqrt{2\gamma(\bm{h}-u\bm{\tau})}} + \sqrt{\frac{\gamma(\bm{h}-u\bm{\tau})}{2}}\bigg)\\
    & + \frac{a^u}{z_2}\Phi\bigg(\frac{\log(a^uz_1/z_2)}{\sqrt{2\gamma(\bm{h}-u\bm{\tau})}} + \sqrt{\frac{\gamma(\bm{h}-u\bm{\tau})}{2}}\bigg)\nonumber +\frac{1-a^u}{z_2},\nonumber
\end{align}
for $\bm{h}-u \bm\tau\neq \bm{0}$ and is given by \eqref{eqn:bivariate_exponent_measure_general_degeneratecase} otherwise. 
 
\subsection{Forecasting strategy}\label{Sec_ModelForecasting_ForecastStrategy}

Let us consider a specific site $\bm{s} \in \mathbb{R}^2$ and a specific time point $t \in \mathbb{N}$, and assume that we aim at forecasting using our model the considered variable at $\bm{s}$ at time $t+u$, for some $u \in \mathbb{N}^+$ (e.g., $u=1$). In other words, we wish to explicit the conditional distribution of $Z(\bm{s},t + u)$ given all observations of the field $Z$ at time $t$. Our forecasting strategy is based on the recurrence \eqref{Eq_RecurrenceLagu}, that can be rewritten
\begin{equation}
\label{Eq_RecurrenceLagu_New}
    Z(\bm{s}, t+u) = \max\big\{ a^u Z(\bm{s}-u\bm{\tau}, t),\ (1-a^u)\widetilde W^{t+u}_{t}(\bm{s})\big\}, \quad \bm{s} \in \Mbb{R}^2.
\end{equation}
In that case, our model allows for exact sampling from the forecasting distribution. The problem is more complex if the realization of $Z(\bm{s}-u\bm \tau,t)$ was not observed (i.e., $\bm{s}-u\bm \tau$ does not lie on our spatial grid), and we propose the following strategy, assuming that it is possible to perform conditional simulation of the spatial max-stable field $W$ in \eqref{Eq_InnovationField}. Using an algorithm for conditional simulation of max-stable fields (e.g., the one by \citet{dombry2013conditional}), one can simulate $N$ realizations, denoted by $y_1, \ldots, y_N$, from the conditional distribution of the random variable $Z(\bm{s}-u\bm \tau,t)$ given all, or a subset of, the available observations of the space-time field $Z$ at time $t$. Then, for any $i=1, \ldots, N$, we simulate a realization $w_i$ of $\widetilde W^{t+u}_{t}(\bm{s}) \stackrel{\mathrm d}= W(\bm s)$ by drawing from a standard Fréchet distribution, and then compute $z_i=\max\{a^u y_i, (1-a^u) w_i\}$ according to \eqref{Eq_RecurrenceLagu_New}; equivalently, the $z_i$ have been drawn from the conditional distribution in \eqref{eqn:conditional_distribution}. The $z_i$ form a random sample from the conditional distribution we are focusing on. 
In the following, we will take the spatial Brown--Resnick field for $W$ owing to its flexibility and suitability for environmental data. 

\section{Inference}
\label{Sec_InferenceSimulations}

This section presents our estimation procedure, which relies on pairwise likelihood techniques due to intractability of the full likelihood as mentioned in Section \ref{subsect_remindermaxstable}. We now take as innovation field $W$ the spatial Brown--Resnick field as it will be
the one we consider in the case study. It is defined by 
\begin{equation}
\label{Eq_Spatial_BR_Field}
W(\bm{s})=\bigvee_{i=1}^{\infty }U_{i}Y_{i}(\bm{s}),\quad \bm{s}\in \mathbb{R%
}^{2},
\end{equation}
where the $(U_i)_{i \geq 1}$ are the points of a Poisson point process on $(0, \infty)$ with intensity function $u^{-2} \Mr{d}u$ and, independently of the $U_i$, the $Y_{i}$ are independent replications of $Y(\bm{s})=\exp \left\{
\epsilon (\bm{s})-\mathrm{Var}(\epsilon (\bm{s}))/2\right\} $ and $%
\left\{ \epsilon (\bm{s})\right\}_{\bm{s}\in \mathbb{R}^{2}} $ is a centered
Gaussian random field with stationary increments and semivariogram $\gamma\left( \bm{h}\right) =\left( \left\Vert \bm{h}%
\right\Vert /\kappa \right) ^{2H}$ with $\bm{h}\in \mathbb{R}^{2}$ and $%
\bm{\theta }=\left( \kappa ,H\right)^{\prime} \in (0,\infty )\times \left(
0,1\right)$.

The true parameter vector is denoted as ${\bm{\psi }}^{\star }=(%
\bm{\theta }^{\star },\bm{\tau }^{\star },a^{\star })^{\prime}$ and is
assumed to belong to a compact set $\Psi \subset \R^{+}\times \left(
0,1\right) \times \R^{2}\backslash \{\bm{0}\}\times \left( 0,1\right)$. We assume that $\{Z(\bm{s},t)\}_{\bm{s}%
\in \R^{2},t\in \N}$ is sampled at locations that lie on a regular
two-dimensional grid with mesh distance $\mu >0$ 
\begin{equation}\label{eqn:Sm}
\mathcal{S}_{m}=\{(\mu \times i_{1},\mu \times i_{2}):(i_{1},i_{2})\in \N^{2}:1\leq
i_{1},i_{2}\leq m\},
\end{equation}
and at $T$ equidistant time points, $t_{i}=i$ for $i=1,\ldots,T$. Further,
for $r\geq 1$, denote by
\begin{equation}\label{eqn:full_design_mask}
\mathcal{H}_{r}=\{\bm{s}=\mu \bm{z}:\bm{z}\in \Z^{2}:||\bm{z}||\leq r\}
\end{equation}
the set of spatial lags between space-time pairs used in the estimation procedure.
Then for some fixed $r\geq 1$ and $p\in \N^{+}$, the pairwise log-likelihood is, for $\bm{%
\tau }\notin \{\bm{h}/u : \bm{h} \in \mathcal{H}_r, u=1, \ldots, p\}$,
\begin{equation}
\mathrm{PL}^{(m,T)}({\bm{\psi }})=\sum_{\bm{s}\in \mathcal{S}%
_{m}}\sum_{t=1}^{T}\sum_{\substack{ \bm{h}\in \mathcal{H}_{r}  \\ 
\bm{s}+\bm{h}\in \mathcal{S}_{m}}}\sum_{\substack{ u=1  \\ t+u\leq T
}}^{p}\log f_{\bm{h},u}\big(Z(\bm{s},t),Z(\bm{s}+\bm{h},t+u);%
{\bm{\psi }}\big),  \label{eqn:pl-likelihoood}
\end{equation}%
where $f_{\bm{h},u}(\cdot,\cdot;{\bm{\psi }})$ is the bivariate
density of $\left( Z(\bm{s},t),Z(\bm{s}+\bm{h},t+u)\right)^{\prime}$ with detailed expression given in Section~\ref{sec:density_function}. If $\bm{%
\tau }\in \{\bm{h}/u : \bm{h} \in \mathcal{H}_r, u=1, \ldots, p\}$,
the distribution of $\left( Z(\bm{s},t),Z(\bm{s}+\bm{h},t+u)\right)^{\prime} $ has an additional
Dirac component as shown in~\eqref{Eq_ExplicitDirac}, in which case the pairwise log-likelihood is undefined for some values of the parameter $a$.

Without loss of generality we assume that $\bm{\tau }^{\star }\neq 
\bm{h}/u$ for all $\bm{h}\in \mathcal{H}_{r}$ and $u\in \{1,\ldots,p\}$, which is not too restrictive as there is no reason for the grid of sites to be related to $\bm{
\tau }^{\star }$. We propose in Section~\ref{App_Diagnos_ratio} a diagnostic to check this assumption using the ratio random field $\chi$ defined in \eqref{eqn:chi_def}. It is thus unnecessary to compute $\mathrm{PL
}^{(m,T)}({\bm{\psi }})$ for ${\bm{\psi }}$ such that $
\bm{\tau }=\bm{h}/u$ for some $\bm{h}$ and $u$. We therefore define the pairwise log-likelihood estimator of 
${\bm{\psi }}^{\star }$ by
\begin{equation}\label{eqn:pl-estimator}
{\bm{\hat{\psi}}}=\arg \max_{{\bm{\psi \in }}\Psi
_{\varepsilon }}\mathrm{PL}^{(m,T)}({\bm{\psi }}),
\end{equation}
where 
\begin{eqnarray}\label{eqn:psi_epsilon}
\Psi_{\varepsilon } &=&[\varepsilon ,\varepsilon ^{-1}]\times \lbrack
\varepsilon ,1-\varepsilon ]\times \lbrack -\varepsilon^{-1} ,\varepsilon
^{-1}]^{2}\times \lbrack \varepsilon ,1-\varepsilon ] \\
&&\cap \left\{ {\bm{\psi } \in \R^5}: ||\bm{\tau }-\bm{h}
/u|| \geq \varepsilon \text{ for all } \bm{h}\in \mathcal{H}_{r},\ 
u = 1,\ldots,p\right\},\nonumber
\end{eqnarray}
with $0<\varepsilon <\min \{1/2,\mu /p\}$ so that $\Psi _{\varepsilon }$ is not empty. Throughout this section,
when optimizing functions with respect to a subset of the components of $\bm \psi$, the search space is assumed to be the associated projection of $\Psi_{\varepsilon}$.   We show in Section~\ref{sec:pwle} that, if ${\bm{\psi }}^{\star } = (\kappa^\star, H^\star, \bm\tau^\star, a^\star)^{\prime}\in
\Psi _{\varepsilon }$, then ${
\bm{\hat{\psi}}}$ is almost surely consistent (see Theorem \ref{thm:consistency}) and asymptotically
normal (see Theorem \ref{thm:normality}) as the number of spatial and temporal observations increase to infinity (i.e., $m,T\to \infty $). These results mainly stem from the mixing properties of our space-time max-stable model.

In practice we estimate ${\bm{\psi }}^{\star }$ similarly as in \cite{embrechts2016space}. 
Let us denote the observed data by $\{Z_{\bm{s}, t}\}_{(\bm{s},t) \in \mathcal{S}_m \times\{1, \ldots, T\}}$. As a first step, the estimation
of $\bm{\theta}^{\star}$ is carried out by maximizing the spatial pairwise log-likelihood \citep[see][Section 3.2]{padoan2010likelihood} defined by
\begin{equation}
\mathrm{PL}_{\mathrm{S}}^{(m,T)}({\bm{\theta }})=\sum_{\bm{s}\in \mathcal{S}%
_{m}}\sum_{t=1}^{T}\sum_{\substack{ \bm{h}\in \mathcal{H}_{r}  \\ 
\bm{s}+\bm{h}\in \mathcal{S}_{m}}} \log f_{\bm{h},0}\big(Z_{\bm{s},t},Z_{\bm{s}+\bm{h},t};%
{\bm{\theta }}\big), \label{eqn:pl-likelihoood_spatial}
\end{equation}%
where the parameters $a$ and $\bm \tau$ do not appear in the expression of $f_{\bm{h},0}$.
Once $\bm{\theta}^{\star}$ is known, it is held fixed and we estimate $a^{\star}$ and $\bm \tau^{\star}$ by maximizing \eqref{eqn:pl-likelihoood} with respect to $a$ and $\bm \tau$. In that second step, consistently with our assumption, we exclude from the optimization procedure the values of $\bm \tau$ within a distance $\varepsilon$ of the set $\{\bm{h}/u: \bm{h} \in \mathcal{H}_r, u=1, \ldots, p\}$. The robustness with respect to the choice of $\varepsilon$ should be assessed.

Finally, to derive confidence bounds for the maximum pairwise log-likelihood estimator of ${\bm{\psi }}^{\star }$, we employ the following non-parametric bootstrap procedure which only involves the terms of the
pairwise log-likelihood function and does not require creating new datasets based on rearrangements (except when accounting for the marginal uncertainty). We take many (e.g., 100) bootstrap samples of the set of time points $\{1,\ldots,T\}$, and, for each bootstrap sample $\mathcal{B}$, we estimate the parameters of the model as follows. First, the margin at each grid point is transformed to a standard Fr\'echet distribution by fitting a GEV distribution to the data at times in $\mathcal{B}$. The spatial parameters' estimates $\hat \kappa$ and $\hat H$ are then obtained by computing
\begin{equation*}
        (\hat \kappa, \hat H) = \arg \max_{\kappa, H}
        \sum_{\bm{s}\in \mathcal{S}_m}
        \sum_{t\in \mathcal{B}}
        \sum_{\substack{\bm{h}\in\mathcal{H}_r \\
              \bm{s} + \bm{h} \in \mathcal{S}_m}}
        \log f_{\bm{h},0}\big(Z^{\mathcal{B}}_{\bm{s},t}, Z^{\mathcal{B}}_{\bm{s} + \bm{h},t};\kappa,H\big),
\end{equation*}
where $\{Z^{\mathcal{B}}_{\bm s,t}\}_{(\bm{s},t) \in \mathcal{S}_m \times\{1, \ldots, T\}}$ is the transformed dataset.
Secondly, to estimate the temporal parameters $\bm \tau^{\star}$ and $a^{\star}$, we transform the entire original dataset (without bootstrapping) to have standard Fr\'echet margins by fitting a GEV distribution to the original time series at each grid point. Using this transformed dataset $\{\tilde{Z}_{\bm s, t}\}_{(\bm{s},t) \in \mathcal{S}_m \times\{1, \ldots, T\}}$, we obtain the temporal parameters' estimates $\hat{\bm\tau}$ and $\hat{a}$ by computing
\begin{equation}
\label{Eq_BoostrapSecondStep}
        (\hat {\bm \tau}, \hat a) = \arg \max_{\bm \tau, a}
        \sum_{\bm{s}\in S_m}
        \sum_{t\in \mathcal{B}}
        \sum_{\substack{\bm{h}\in \mathcal{H}_{r} \\
            \bm{s}+\bm{h}\in \mathcal{S}_{m}}}
        \sum_{\substack{u = 1 \\ t+u\leq T}}^p
        \log f_{\bm{h},u}\big(\tilde{Z}_{\bm s, t}, \tilde{Z}_{\bm s + \bm h, t+u};\hat\kappa,\hat H, \bm\tau, a\big).
\end{equation}
The necessity of fitting the GEV distribution to the entire time series stems from~\eqref{Eq_BoostrapSecondStep}, which involves data at $t+u$ for $t \in \mathcal{B}$ although $t+u$ may not belong to $\mathcal{B}$. Thus, when fitting the GEV distribution to data associated with times in $\mathcal{B}$ only, the obtained parameters are incompatible with some data points at some grid points, which may create undefined values in the resulting transformed dataset. It is for this reason that $\tilde{Z}$ is constructed from all available data points.

We bootstrap the terms in the pairwise log-likelihood rather than the observations themselves. Compared to the well-known block-bootstrap \citep{kunch1989}, this has the advantage to fully preserve dependence in both space and time by avoiding the decomposition into blocks and their rearrangement. No arbitrary choice of block size is needed and no points are privileged or under-represented since there are no block boundaries. 

\medskip

We conclude this section with a discussion of the inference method for the DKS model by \cite{davis2013statistical}. The pairwise log-likelihood for their model is also given by~\eqref{eqn:pl-likelihoood}, where the appropriate bivariate densities are used. The authors restricted the design mask $\mathcal{H}_r$ to vectors with non-negative integer components, and further excluded the $\bm{0}$ vector. We choose to keep these vectors to calibrate their model, as justified in Section~\ref{sec:design_mask_app}.

\section{Case study}
\label{Sec_CaseStudy}

Let us now return to the hourly data presented in Section~\ref{Sec_Data_Data}. For each of the 216 grid points, we model the margin in space at each grid point $\bm{s}_i$, $i=1, \ldots, D$, by a GEV distribution with location, scale and shape parameters $\mu_{\bm{s}_i}$ , $\sigma_{\bm{s}_i}$ and $\xi_{\bm{s}_i}$, fitted to the 105 temporal observations using maximum likelihood. 

Figure~\ref{fig:gev_params} in Section~\ref{sec:single_site_params} shows that the location and scale parameters are higher towards north-west, i.e., closer to the sea, consistent with physical intuition. The estimated GEV parameters are used to transform the spatial margins to the standard Fr\'echet distribution. This procedure, as opposed to first modeling these parameters using trend surfaces, maintains the most accurate view of the spatio-temporal dependence structure since the distribution of the observations at each grid point is well-approximated by the standard Fr\'echet distribution.

Throughout we use on the standardized dataset the model presented in Section \ref{Subsec_Model} with as innovation $W$ the Brown--Resnick field \eqref{Eq_Spatial_BR_Field} associated with the semivariogram $\gamma(\bm h) = (\| \bm  h \|/\kappa^\star)^{2H^\star}$ for some parameter $\bm\psi^\star = (\kappa^\star, H^\star, \bm\tau^\star, a^\star)^{\prime}$, where $\bm \tau^\star = (\tau_1^\star,\tau_2^\star)^{\prime}$ and $a^\star$ are the advection and decay parameters, respectively. The induced temporal stationarity is appropriate since the considered dataset involves a relatively short time window (of the order of four days), i.e., corresponding to a single meteorological event.  

We assess the various goodness-of-fits in-sample, rather than out-of-sample on a validation set randomly subsampled from our data. This is imposed by our forecasting procedure which requires the observations at all grid points at the time the forecast is performed, thereby excluding the possibility of removing individual data points. However, this appears to be suitable owing to the parsimony of our model (involving only five scalar parameters) which leads to rather low overfitting risks. An effective alternative validation strategy would involve testing our model on a comparable period in terms of synoptic weather situation or weather regime (see Section~\ref{Sec_Discussion} and Section~\ref{App_Operational} for more details about weather regimes), but we believe that this is out of the scope of this work.


\subsection{Calibration to data}\label{Sec_CaseStudy_Performance}

We follow the estimation procedure outlined in Section~\ref{Sec_InferenceSimulations}, which requires the absence of pairs $\bm h\in \mathcal{H}_r$ and $u\in {1,\ldots,p}$ such that $\bm\tau^\star = \bm h/u$. Our diagnostic analysis (Figure~\ref{Fig_DataCurves} in Section~\ref{App_Diagnos}) reveals no violations of this assumption in the observed data.
 
In the first step we estimated $\bm{\theta^\star}$ by maximizing \eqref{eqn:pl-likelihoood_spatial} with $r=21$ (to account for all pairs in space). In the second one, we estimated $\bm \tau^\star$ and $a^\star$ by maximizing \eqref{eqn:pl-likelihoood} with the same value of $r$ as previously and $p=1$ in order to avoid using too many pairs in time as explained in \citet[][Section 7]{davis2013b}. The huge number of space-time pairs ($215 \times 216 / 2 \times 105 = 2\,438\,100$ in the first step and $216^2 \times 104 = 4\,852\,224$ in the second one) allows us to use our theoretical results about the asymptotic behavior of the maximum pairwise likelihood estimator to guarantee the accuracy of our estimates. The symmetric 95\% confidence bounds were derived using the bootstrap procedure expounded in Section~\ref{Sec_InferenceSimulations}.

Table~\ref{tab:estimates} shows that the estimate of $\kappa^\star$ is quite large relative to the size of the domain, indicating a large-scale (synoptic) system, and that the estimated smoothness parameter $2 H^\star$ is quite typical for wind gust data. The estimate of the advection parameter $\bm \tau^\star = (\tau_1^\star,\tau_2^\star)^{\prime}$ indicates a general movement of the spatial patterns towards the east ($\tau_1^\star > 0$), with a small component of the velocity in the southern direction ($\tau_2^\star < 0$). We may thus interpret this phenomenon as being driven by west-northwest winds, which is consistent with what has been observed in Figure~\ref{Fig_Snapshots}. The rate of decay of temporal dependence $a^\star$ is estimated to be close to unity, which implies a slow decay of dependence in time that is consistent with large-scale weather features being persistent over several hours.
\begin{table}[]
    \centering
    \caption{Estimated model parameters using pairwise likelihood and associated bootstrap $95\%$ confidence interval (CI).}
    \begin{tabular}{c|ccccc}\label{tab:estimates}
         &  $\kappa^{\star}$ & $2H^{\star}$ & $\tau_1^{\star}$ & $\tau_2^{\star}$ & $a^{\star}$ \\
         \hline
         Estimate & 2.19 & 1.33 & 0.35 & -0.14 & 0.97\\
         CI & (1.83 -- 2.51) & (1.24 -- 1.41) & (0.30 -- 0.38) & ($-$0.17 -- $-$0.11) & (0.94 -- 0.99)
    \end{tabular}
    \label{tab:my_label}
\end{table}


The fit of our model to the data using the parameters in Table~\ref{tab:estimates} is evaluated by three strategies that we outline in the remainder of this section. The first concerns the marginal and spatial features of our model. The second is a comparison of the cross-correlations of the model with those of the data. In the third, we use the model to forecast wind gust speeds at later time steps, and compute a score for our predictions based on what was actually observed.

\subsection{Single-site marginal and spatial goodness-of-fits}

In this section we assess the single-site marginal and spatial performance of our model fitted to the data. The left panel of Figure~\ref{Fig_GoodnessFitBR} shows that the GEV distribution fitted to the data at the (randomly) chosen grid point matches the empirical distribution, and the right one indicates that the proposed model fits the pairwise extremal dependence structure of the data reasonably well, suggesting that the spatial Brown--Resnick random field (as mentioned in Section~\ref{Subsec_Model}, for any $t \in \mathbb{N}$, the spatial field $\{Z(\bm{s}, t)\}_{\bm{s} \in \mathbb{R}^2}$ is max-stable with the same distribution as that of $W$, i.e., a spatial Brown--Resnick field) is a fairly good model for the spatial dependence in our data. 
\begin{figure}[h]
    \centering
    \begin{subfigure}[b]{0.4\textwidth}
        \centering
\includegraphics[width=\textwidth]{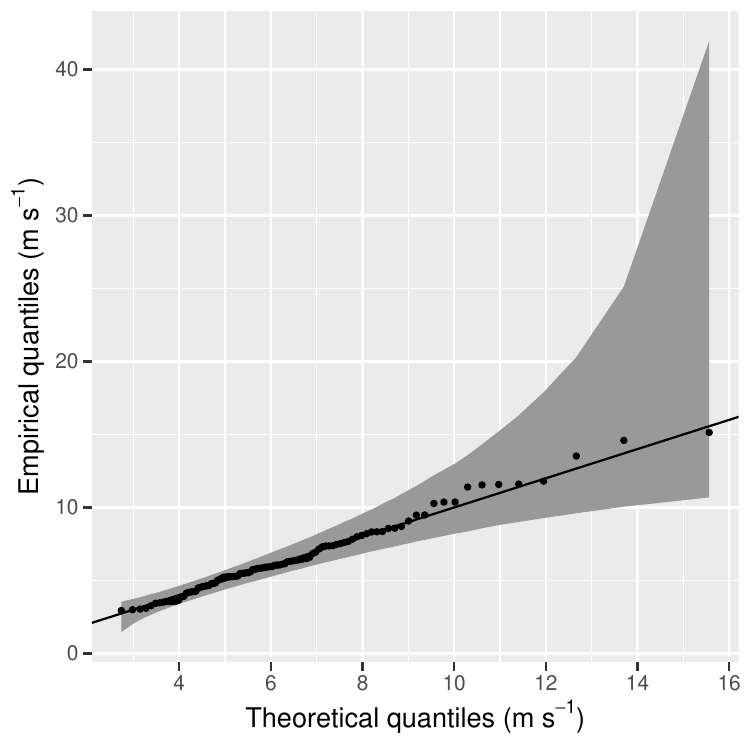}
        \end{subfigure}
        \begin{subfigure}[b]{0.4\textwidth}  
            \centering 
\includegraphics[width=\textwidth]{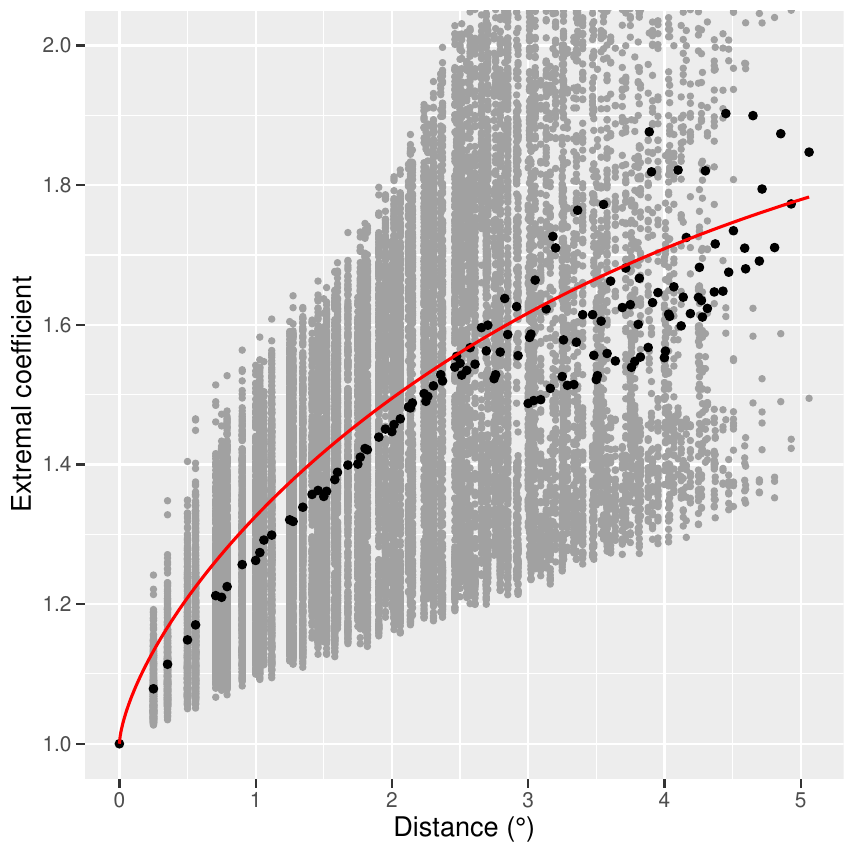}
        \end{subfigure}
        \caption{Validation of our model in terms of single-site marginal distributions and spatial dependence structure. On the left, quantile-quantile plot of the fitted GEV distribution for the grid point with coordinates $2.5^{\circ}$ longitude and $47^{\circ}$ latitude. On the right, theoretical spatial pairwise extremal coefficient function from the fitted Brown–Resnick model (red line) and empirical spatial pairwise extremal coefficients (dots). The grey and black dots are pairwise and binned estimates, respectively. The empirical extremal coefficients have been computed based on the empirical F-madogram using the obtained GEV parameters. The binned estimates have been obtained by first averaging, for any distance, the F-madogram estimates over all pairs of grid points at that distance.}
\label{Fig_GoodnessFitBR}
\end{figure}

\subsection{Goodness-of-fit of cross-correlations}

In order to also assess the temporal dynamics of our model, we now compare the associated cross-correlations with those observed in the data. Recall that after a marginal transformation, the resulting dataset $\{Z_{\bm{s}, t}\}_{(\bm{s},t) \in \mathcal{S}_m \times\{1, \ldots, T\}}$ has margins that are approximately standard Fréchet, a distribution that does not have finite second-order moments. To remedy this issue, we consider the logarithm of our data, i.e., $\{\log Z_{\bm{s}, t}\}_{(\bm{s},t) \in \mathcal{S}_m \times\{1, \ldots, T\}}$, which have approximately standard Gumbel margins and thus finite second-order moments.

By stationarity, the cross-correlations between two observations of $\log Z$ depend only on the space-time lag $(\bm h, u)$, where $\bm{h}\in \R^2$ and $u\in \N$. Thus, we define
$$\rho_{\bm h, u} = \mathrm{Corr}\big(\log Z(\bm 0, 0), \log Z(\bm h, u)\big),$$
where $Z$ refers to our model or the DKS one depending on the context. 

For each of the space-time lags $(\bm{h},u)$ considered in Figure~\ref{fig:CrossCorrs}, we compute an empirical cross-correlation coefficient $\bar\rho_{\bm{h},u}$, which is the average over the set
\begin{equation}\label{eqn:set_of_rho_hat}
    \{\hat\rho_{\bm{s},\bm{h},u}: \bm{s}, \bm{s}+\bm{h}\in S_m\},
\end{equation}
where
\begin{equation}\label{eqn:correlation_estimator_def}
\hat\rho_{\bm{s},\bm{h},u} = \frac{6}{(n-u)\pi^2} \sum_{t=1}^{n-u} \left( \log Z_{\bm{s}, t} - \hat{\mu}_{\bm{s}} \right)\left( \log Z_{\bm{s} + \bm{h}, t+u} - \hat{\mu}_{\bm{s}+\bm{h}} \right),
\end{equation}
and
$$ \hat{\mu}_{\bm{s}} = \frac{1}{n} \sum_{t=1}^{n} \log Z_{\bm{s}, t}, \qquad \bm s\in S_m.$$
The factor of $6/\pi^2$ appearing in~\eqref{eqn:correlation_estimator_def} corresponds to the inverse of the variance of the Gumbel distribution.

The theoretical cross-correlations were computed using numerical integration based on Hoeffding's lemma which states, for any random variables $X, Y$ with finite second-order moments, that
\begin{equation}
\label{Eq_Hoeffding_Lemma}
    \mathrm{Cov}(X,Y) = \int_{-\infty}^\infty\int_{-\infty}^\infty \left[ \P(X \leq x, Y \leq y) - \P(X \leq x)\P(Y\leq y) \right] \, \mathrm{d}x\mathrm{d}y.
    \end{equation}
The cross-correlation for the spatial lag $\bm h$ and the temporal lag $u$ is obtained by taking $X = \log Z_{\bm s, t}$ and $Y = \log Z_{\bm s + \bm h, t + u}$ in \eqref{Eq_Hoeffding_Lemma}, where $\P(X \leq x, Y \leq y)$ can be deduced by combining \eqref{Eq_Multivariatedf_Maxstable} with the exponent measure~\eqref{eqn:bivariate_exponent_measure_BR} and using the model parameters in Table~\ref{tab:estimates}. To compute the cross-correlations for the DKS model, we use the estimates
$$(\hat \kappa_s, \hat \kappa_t, \hat \psi_s, \hat \psi_t)^{\prime} = (6.98, 4.72, 1.82, 1.47)^{\prime},$$
which were obtained using the pairwise likelihood-based strategy outlined in \cite{davis2013statistical}, with $\mathcal{H}_r$ defined as in~\eqref{eqn:full_design_mask} (see also, Section~\ref{sec:design_mask_app}).

\begin{figure}
    \centering
    \includegraphics[width=0.55\linewidth]{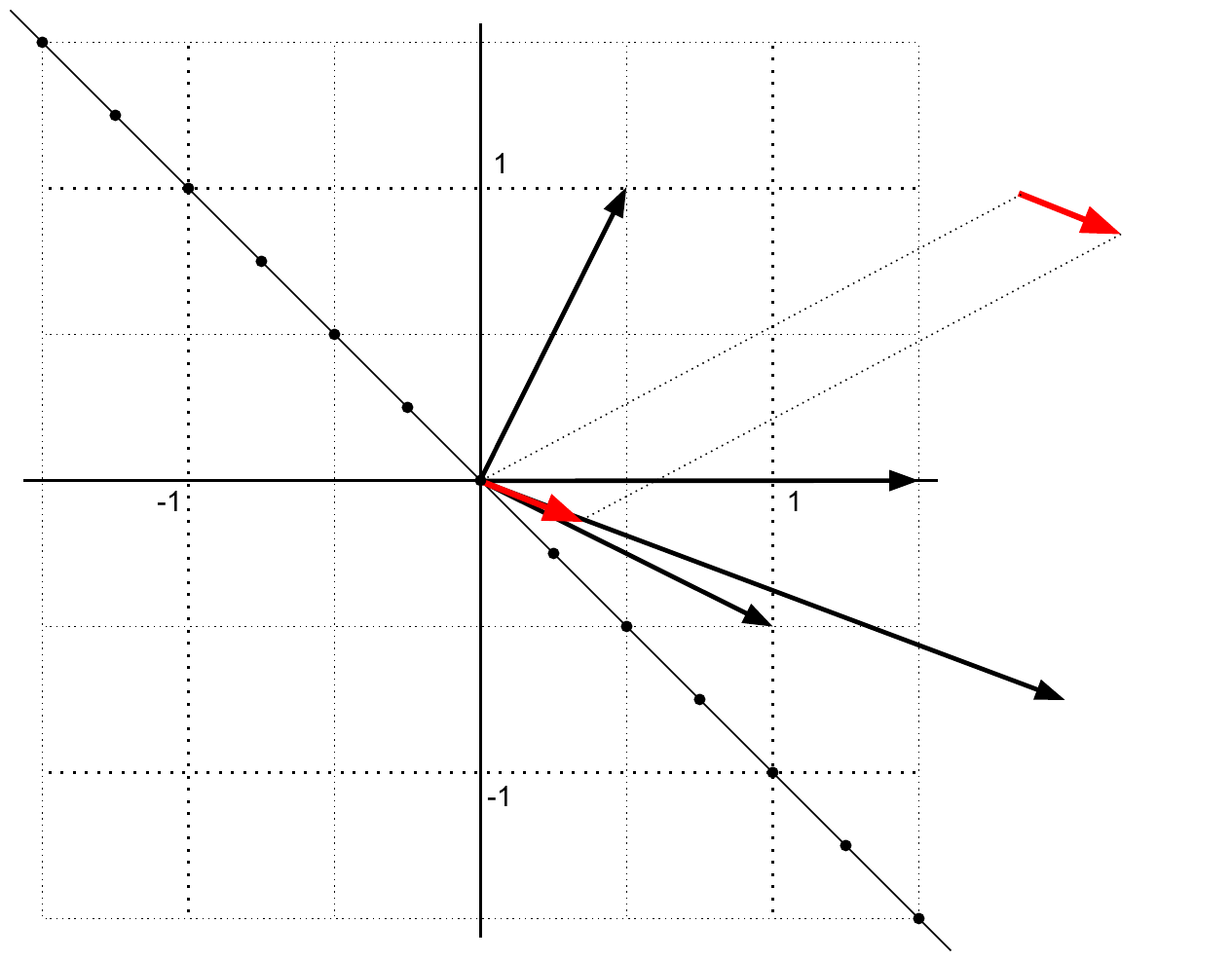}
    \put(-260,200){Figure~\ref{Fig_CorrelStructure1}}
    \put(-70,105){$(1.5,0)$ Figure~\ref{Fig_CrossCorr60}}
    \put(-110,55){$(1,-0.5)$}
    \put(-96,42){Figure~\ref{Fig_CrossCorr4min2}}
    \put(-40, 65){$(2,-0.75)$ Figure~\ref{Fig_CrossCorr8min3}}
    \put(-150, 165){$(0.5,1)$ Figure~\ref{Fig_CrossCorr24}}
    \put(-25, 155){$(\hat{\tau}_1, \hat{\tau}_2)$ Table~\ref{tab:estimates}}
    \caption{Points on the diagonal line represent the spatial lags used in Figure~\ref{Fig_CorrelStructure1}. The various spatial lags used in Figure~\ref{fig:CrossCorrs} are shown as black vectors. Finally, the vector $(\hat{\tau}_1, \hat{\tau}_2)^{\prime}$ obtained from pairwise likelihood estimation is shown in red.}
    \label{fig:vectors}
\end{figure}

\begin{figure}[!h]
\centering 
\includegraphics[scale = 0.6]{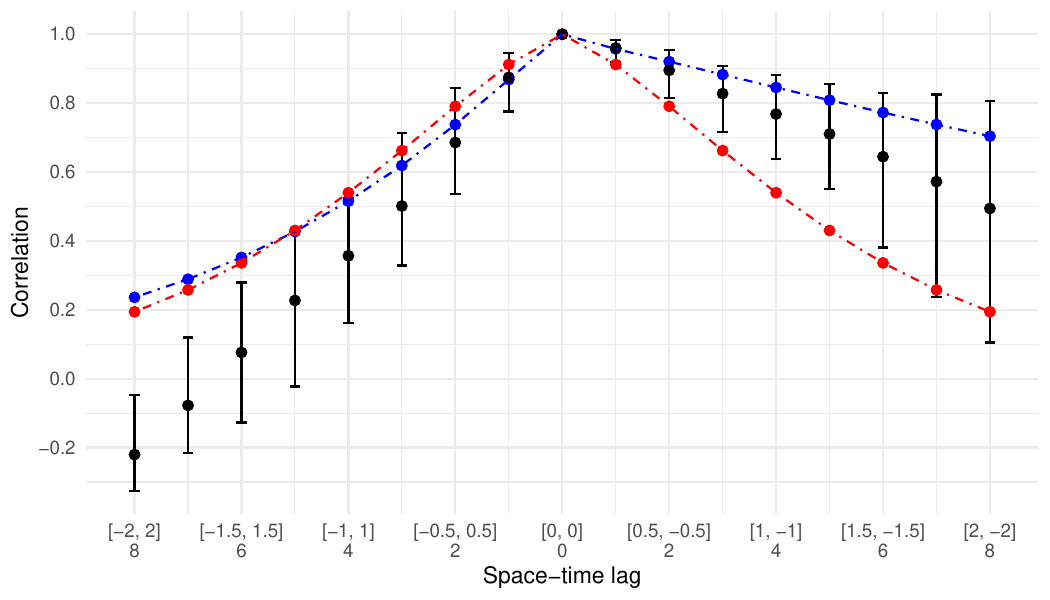} 
\caption{For each of the space-time lags shown on the x-axis (top: $\bm{h}$; bottom: $u$), the empirical cross-correlation $\bar\rho_{{\bm{h},u}}$ is plotted in black. The theoretical cross-correlations of our model fitted to the same data are plotted in blue. Likewise, the cross-correlations of the DKS model are plotted in red. Error bars show the 2.5\% and 97.5\% quantiles of the set in \eqref{eqn:set_of_rho_hat} for the space-time lags $(\bm h, u)$ considered.}
\label{Fig_CorrelStructure1}
\end{figure}

\begin{figure}
    \centering
    \begin{subfigure}{0.49\textwidth}
        \centering
        \includegraphics[width=\textwidth]{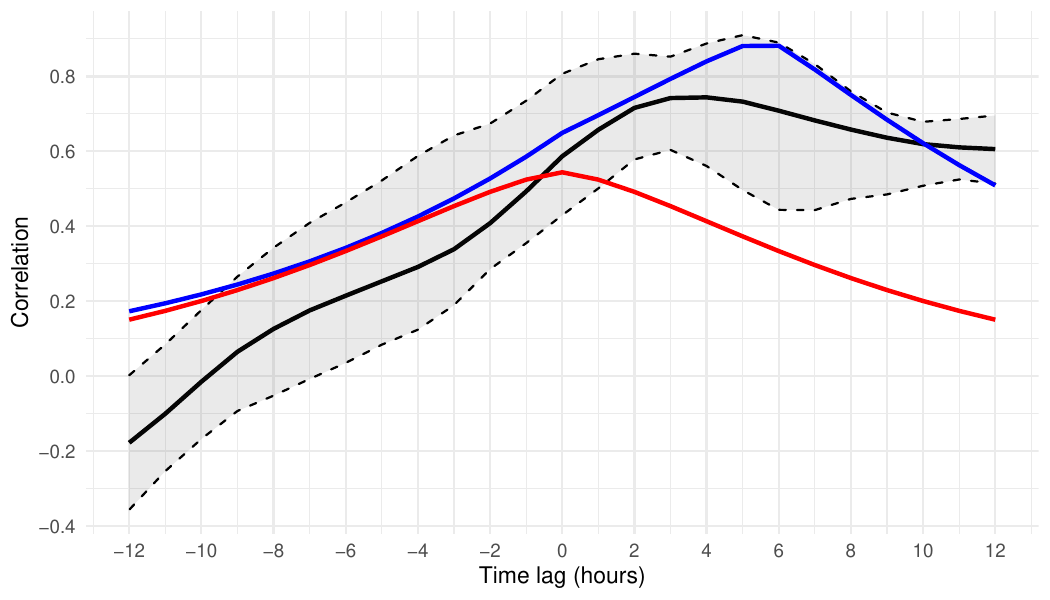}
        \caption{$\bm{h}=(2,-0.75)^{\prime}$}
        \label{Fig_CrossCorr8min3}
    \end{subfigure}
    \begin{subfigure}{0.49\textwidth}
        \centering
        \includegraphics[width=\textwidth]{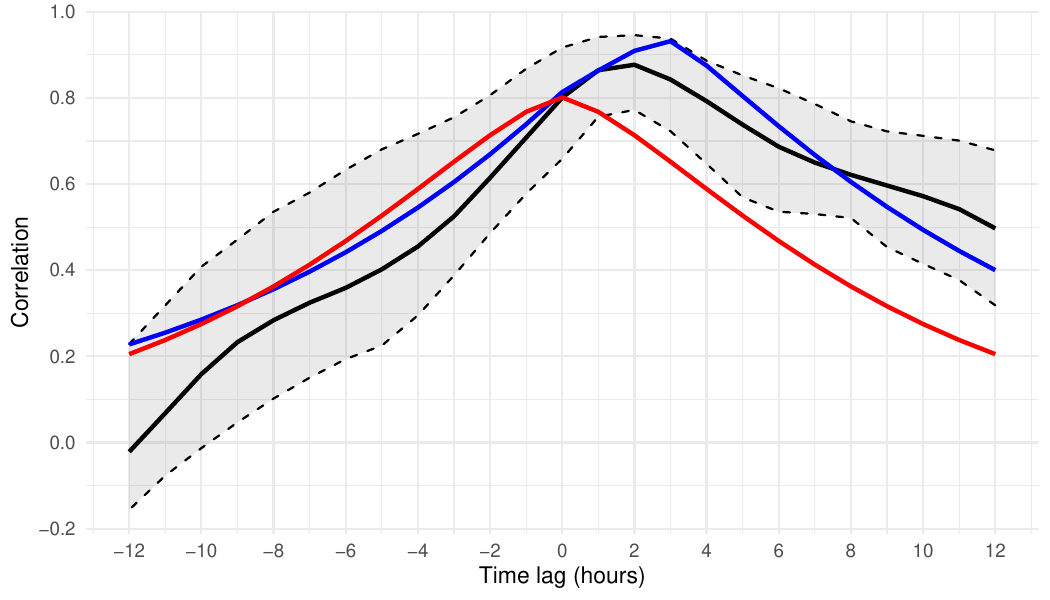}
        \caption{$\bm{h}=(1,-0.5)^{\prime}$}
        \label{Fig_CrossCorr4min2}
    \end{subfigure}
    \begin{subfigure}{0.49\textwidth}
        \centering
        \includegraphics[width=\textwidth]{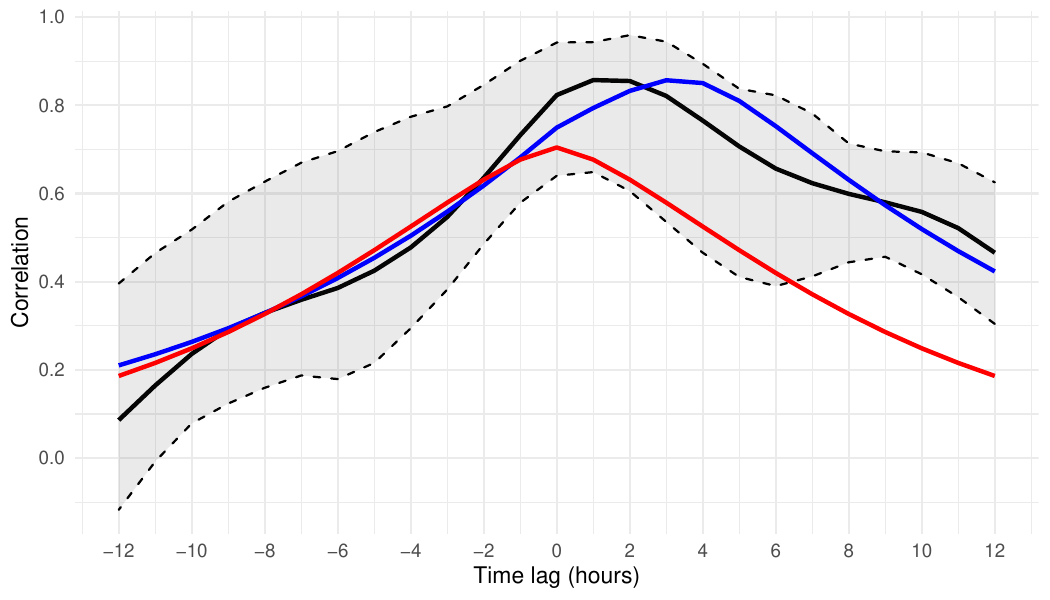}
        \caption{$\bm{h}=(1.5, 0)^{\prime}$}
        \label{Fig_CrossCorr60}
    \end{subfigure}
    \begin{subfigure}{0.49\textwidth}
        \centering
        \includegraphics[width=\textwidth]{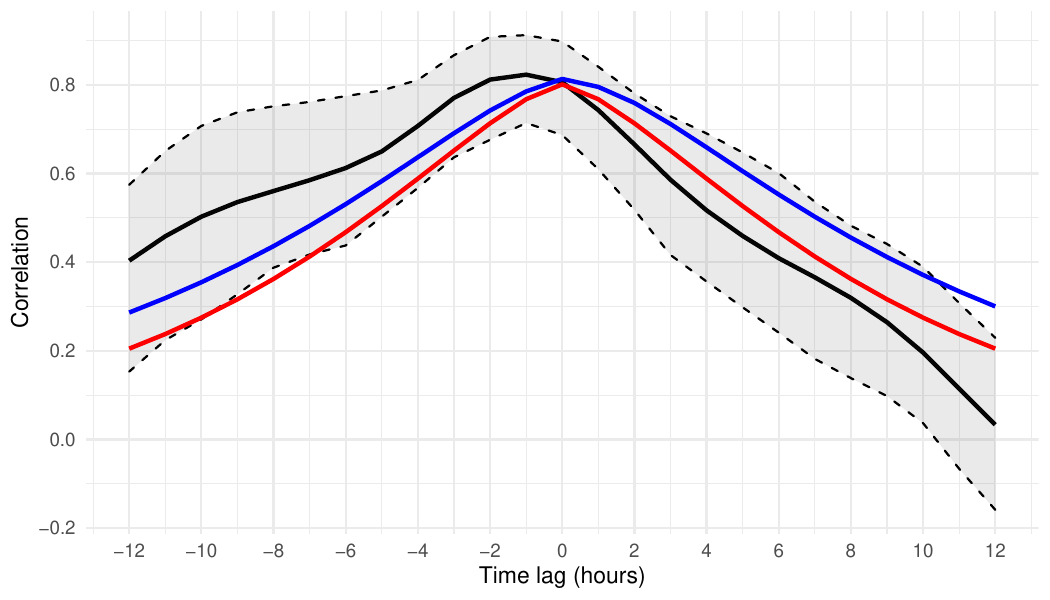}
        \caption{$\bm{h}=(0.5,1)^{\prime}$}
        \label{Fig_CrossCorr24}
    \end{subfigure}
    \caption{For each plot (corresponding to a specific spatial lag $\bm h$),  the empirical cross-correlation $\bar\rho_{{\bm{h},u}}$ is plotted in black for each of the time lags $u$ shown on the x-axis. The theoretical cross-correlations of our model fitted to the same data are plotted in blue. Likewise, the cross-correlations of the DKS model are plotted in red. The dashed lines delimiting the shaded area depict the 2.5\% and 97.5\% quantiles of the set in \eqref{eqn:set_of_rho_hat} for the space-time lags $(\bm h, u)$ considered.}
    \label{fig:CrossCorrs}
\end{figure}


The presence of the $\bm \tau$ parameter in our model allows us to account for the advection and thus for the resulting asymmetric spatio-temporal correlation structure often encountered in atmospheric data. By contrast, the DKS model imposes a symmetry on the correlation structure (specifically, $\rho_{\bm h, u} = \rho_{-\bm h, u}$ and $\rho_{\bm h, u} = \rho_{\bm h, -u}$ for any $\bm h \in \mathbb{R}^2$, $u \in \mathbb{N}^{+}$) which may not adequately capture the dynamics typically observed in atmospheric applications.

To investigate this, in Figure~\ref{Fig_CorrelStructure1}, we choose $(-\bm h, u)$ as $x$-coordinate on the left when $(\bm h, u)$ appears on the right, so that $\rho_{\bm h, u}$ and $\rho_{-\bm h, u}$ can be compared. Moreover, we vary $u$ with the lag $\bm h$ such that $\bm h/u = (0.25, -0.25)^{\prime}$ for all space-time lags on the right side of the plot, the idea being that $\bm h $ is not too far from $ u \bm \tau^\star$, allowing us to track the system's advective motion through time. In the data, we expect the correlation to be largest when $|u|$ is small (i.e., when the points are close in time) and when $\bm h \approx u\bm \tau^\star$ (in the direction of the advection). In Figure~\ref{Fig_CorrelStructure1}, $\bm h/u = (0.25, -0.25)^{\prime}$ and $(-0.25, 0.25)^{\prime}$ on the right and the left, respectively. Although $(0.25, -0.25)^{\prime}$ does not fall into the confidence bounds obtained for $(\tau_1^\star, \tau_2^\star)$ in Table~\ref{tab:my_label} (i.e., we are not exactly following the advection; see Figure~\ref{fig:vectors}), it is much closer to $\bm \tau^\star$ (estimated at $(0.350, -0.139)^{\prime}$) than $(-0.25, 0.25)^{\prime}$ is, explaining why the observed correlation is larger for the lags on the right than for those on the left. This is captured fairly well by our model, but not by the DKS model which shows a symmetric curve associated with underestimated correlations for the lags on the right and overestimated ones for those on the left, in such an extent that theoretical cross-correlations are not contained by most of the 95\% confidence intervals of the observed cross-correlations (see Figure~\ref{Fig_CorrelStructure1}).
On the other hand, owing to our model's ability to capture asymmetry, all related theoretical cross-correlations on the right of Figure~\ref{Fig_CorrelStructure1} are contained in the confidence intervals. A statistical hypothesis test based on these confidence intervals would reject the DKS model but not ours. On the left of the plot, where space-time lags correspond to a direction which is roughly opposite to the true advection, both models have similar cross-correlations that exceed the correlations observed in the data (especially for large space-time lags).

Each plot in Figure~\ref{fig:CrossCorrs} features a symmetry about the time lag $u=0$, such that if $(\bm h, u)$ appears on the right, then $(\bm h, -u)$ appears on the left. As discussed previously, this leads to symmetry in the cross-correlations for the DKS model, where maximum correlation is modeled at $u=0$. The aforementioned figures show that the data exhibit a clear asymmetry which, unlike the DKS model, our model can capture. Indeed, the point of maximal correlation seen in our model can be skewed away from $u=0$. In the case of large dependence in time  ($a\approx 1$), the time lag at which our model exhibits maximal correlation can be shown to be
$u \approx \langle \bm h,\bm \tau^\star \rangle / \|\bm \tau^\star\|^2$, where $\langle \cdot, \cdot\rangle$ denotes the scalar product. Thus, we expect maximal correlation on the right side of the plots if $\langle\bm h,\bm \tau^\star\rangle > 0$, and on the left side if $\langle\bm h,\bm \tau^\star\rangle < 0$. Moreover, by choosing $\bm h$ in the direction of $\hat{\bm \tau}$, the asymmetry in the curve corresponding to our model is expected to increase with $\|\bm h\|$ (see Figure~\ref{Fig_CrossCorr8min3} in comparison to Figure~\ref{Fig_CrossCorr4min2}). 

In Figure~\ref{Fig_CrossCorr4min2}, the spatial lag $\bm h$ is chosen to be nearly in line with $\hat {\bm \tau}$ as in Figure~\ref{Fig_CrossCorr8min3}, but with $\|\bm h\|$ smaller; see Figure~\ref{fig:vectors}. In both cases, the cross-correlations for the DKS model fall outside of the 95\% confidence intervals for positive time lags. Figure~\ref{Fig_CrossCorr60} assesses the model's performance when $\bm h$ is neither in the direction of, nor perpendicular to $\hat {\bm \tau}$. Our model's theoretical cross-correlations remain within the confidence bounds everywhere while those associated with the competing model fail to do so for large positive time lags. In Figure~\ref{Fig_CrossCorr24}, $\bm h$ is chosen to be nearly perpendicular to $\hat{\bm \tau}$. This leads to symmetric cross-correlations for our model, although the data show a peak correlation for $u\approx -1$, which hints that our model does not capture perfectly all features of the data. Nevertheless, the cross-correlations of both models fall within the confidence intervals for most of the chosen space-time lags.

These diagnostic plots demonstrate that, over a large range of space-time lags, our model's cross-correlations are much more compatible with the observed ones than the DKS's ones are. This is especially so when the spatial lag tends to align with the storm's advection direction.

\subsection{Forecasting skill}\label{Sec_CaseStudy_Forecasting}

In addition to comparing the cross-correlations of the models with those of the data, we use the strategy described in Section~\ref{Sec_ModelForecasting_ForecastStrategy} to generate forecasts from our model, assess its forecasting skill and compare it to that of the DKS model.

Our aim is to forecast wind speeds at the space-time point $(\bm s_0, t_0+u)$ based on gridded observations taken at or before time $t_0$, where $u$ represents the forecast horizon (lead time). For our model, if there are no data at $(\bm s_0-u\hat{\bm \tau}, t_0)$, we simulate the data at this site conditioned on the four sites that define the vertices of the cell containing $\bm s_0-u\hat{\bm \tau}$ (see Figure~\ref{Fig_s_min_utau}). Empirical evidence suggests that including other sites has a negligible impact on the distribution of the conditional simulation (not shown).

\begin{figure}
    \centering
    \includegraphics[scale = 0.4]{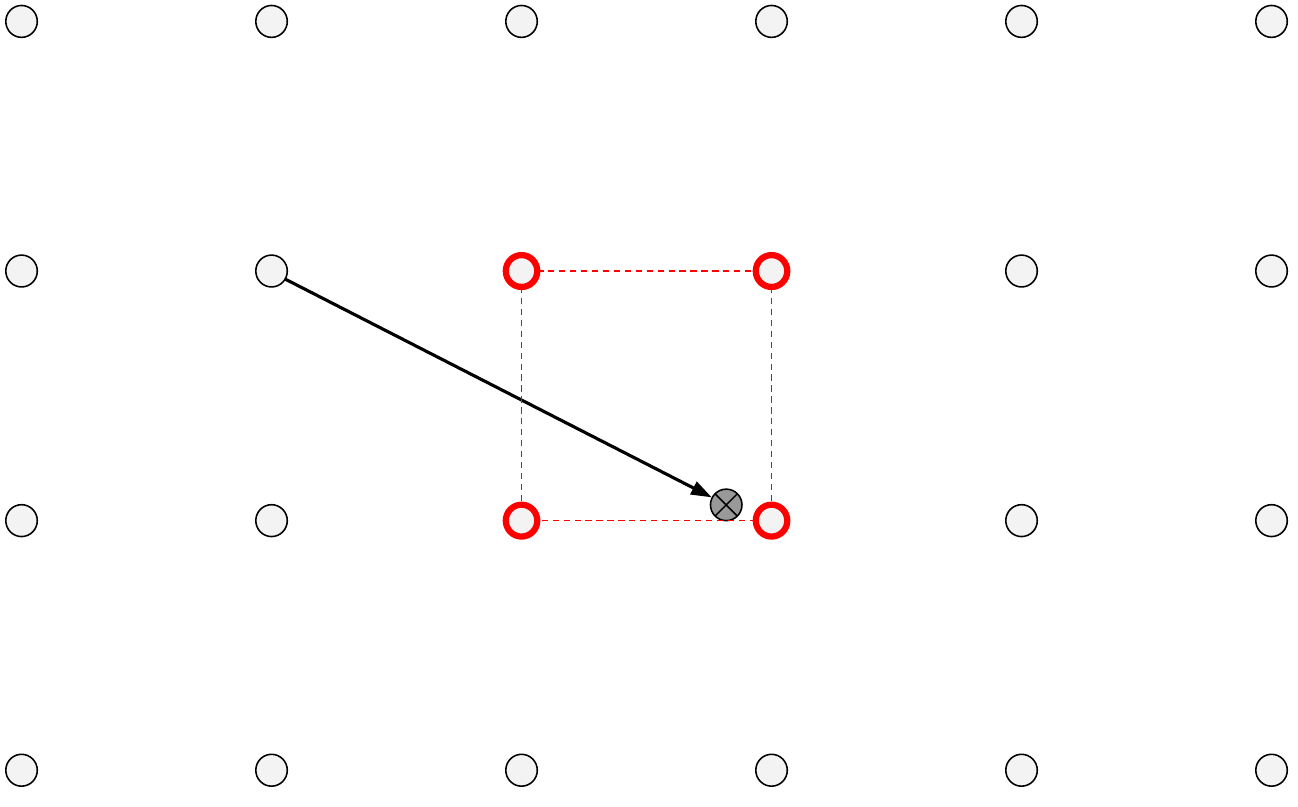}
    \put(-120,67){$\bm s_0-u\hat{\bm \tau}$}
    \put(-175,70){$-u\hat{\bm{\tau}}$}
    \put(-197,110){$\bm s_0$}
    \put(-95,124){$t=t_0$}
    \caption{In red, the four sites that are used to conditionally simulate the random field at $\bm s_0-u\hat{\bm \tau}$. Time is fixed at $t=t_0$.}
    \label{Fig_s_min_utau}
\end{figure}

In order to forecast using the DKS model, an exact approach would be to use a conditional simulation of a three-dimensional Brown--Resnick random field (with two spatial dimensions and one temporal dimension), but currently available softwares do not allow that. Thus, when performing forecast at any site $\bm s_{0} \in \mathbb{R}^2$, we condition on past observations (until $t_0$) at $\bm s_{0}$ only. In practice, we condition on the observations at $(\bm s_0, t_0)$ and $(\bm s_0, t_0-1)$ as it leads to the same forecast accuracy as when using more conditioning time points (not shown). For both models, we performed the conditional simulation using the \texttt{condrmaxstab} function from the \texttt{SpatialExtremes} \texttt{R} package \citep{PackageSpatialExtremes} that implements the method described in \citet{dombry2013conditional}.

To assess the quality of our predictions, we randomly chose 2000 (space-time) observations, for each of which we made 500 independent forecasts at the corresponding space-time points, transformed to the Gumbel scale by taking the logarithm. Each forecast is based on observations taken at least $u$ hours before the considered time point, and we repeated the experiment for $u\in\{1,\ldots,7\}$. We used the same random seed across all experiments to ensure consistency. Figure~\ref{Fig_ScatterPlots} shows that the quality of the forecasts worsens for both models as the time lag $u$ increases and that our model's forecasts are more in line with observed values, especially for large $u$, for which the DKS model's forecasts do not seem to relate to the observations. 

To objectify this visual impression, we employed two metrics: the root mean square error (RMSE) and the continuous ranked probability score \citep[CRPS;][]{Matheson1976}. The CRPS of a forecast for a space-time point $(\bm s, t)$ at lead time $u$ is
\begin{equation*}
    \mathrm{CRPS}(\bm{s},t,u) = \int_\R\left(\hat F_{\bm s, t, u}(x) - \mathbb{I}(x \geq \log Z_{\bm s,t})\right)^2\,\mathrm{d}x,
\end{equation*}
where $\log Z_{\bm s, t}$ is the observation on the Gumbel scale at $(\bm{s},t)$, and $\hat F_{\bm{s},t,u}$ is the empirical distribution function of the 500 independent forecasts, for $(\bm{s},t)$ at lead time $u$, transformed to the Gumbel scale.
We computed
the CRPS
using the \texttt{R} package \texttt{scoringRules} \citep{jordan2019}.

The plots in Figure~\ref{Fig_PredictScores} indicate that, as the time lag increases, our model outperforms the DKS one, although the performances of both models decrease. The ability of our model to appropriately account for temporal dependence (thanks to the explicit representation of the dynamics) significantly influences the forecasting skill for longer forecast horizons. The similarity of the plots in Figure~\ref{Fig_PredictScores} indicates that the difference in the two models' performances is consistent across various predictive scores, further validating the robustness of our results.

\begin{figure}[!h]
\centering 
\includegraphics[scale = 0.55]{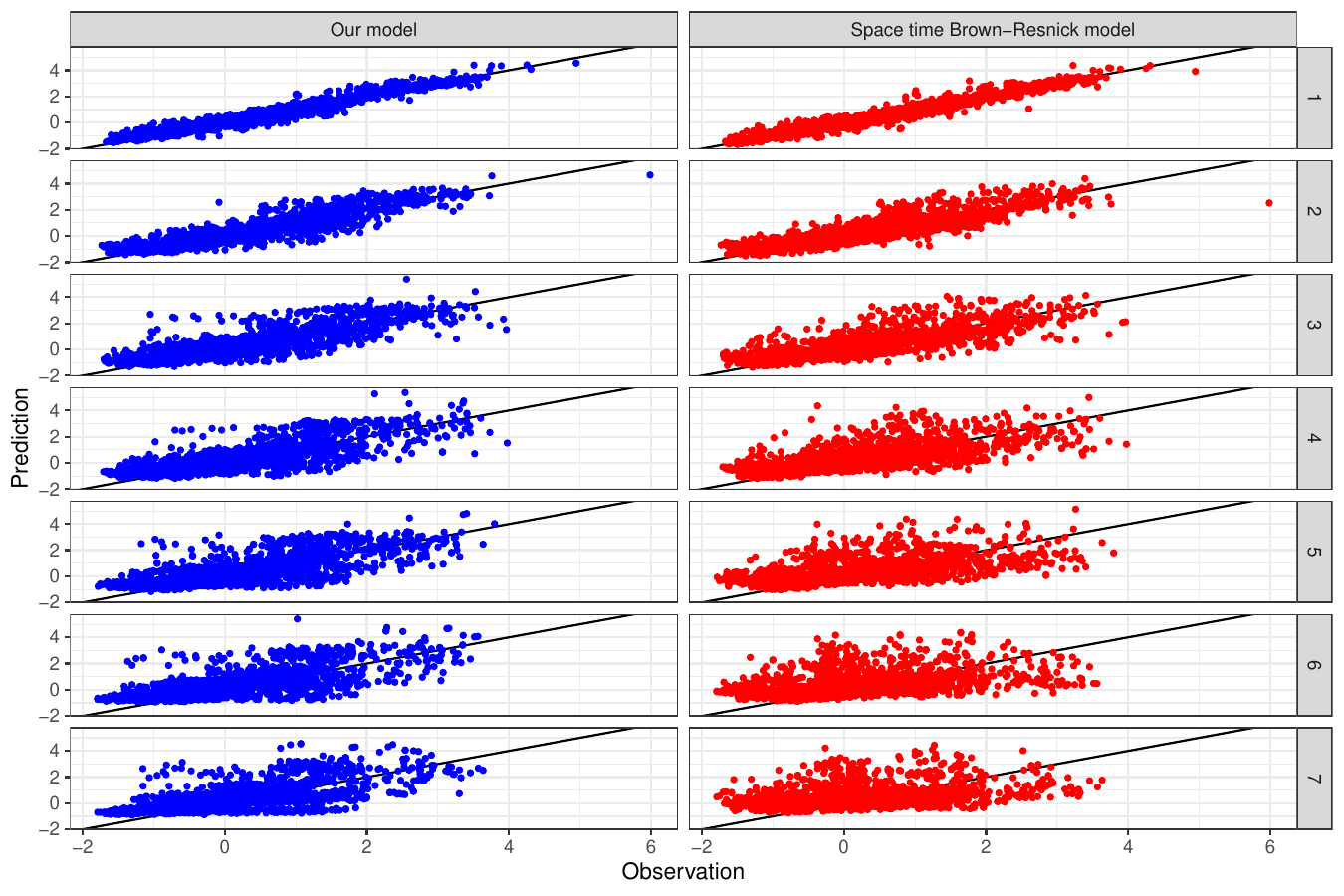}
\caption{Scatter plots of observations and forecasts (on the Gumbel scale) from our model (blue) and the DKS model (red) for time lags $u$ ranging from 1 to 7 hours (indicated at the end of each row).}
\label{Fig_ScatterPlots}
\end{figure}

\begin{figure}
    \centering 
    \includegraphics[width=0.49\textwidth]{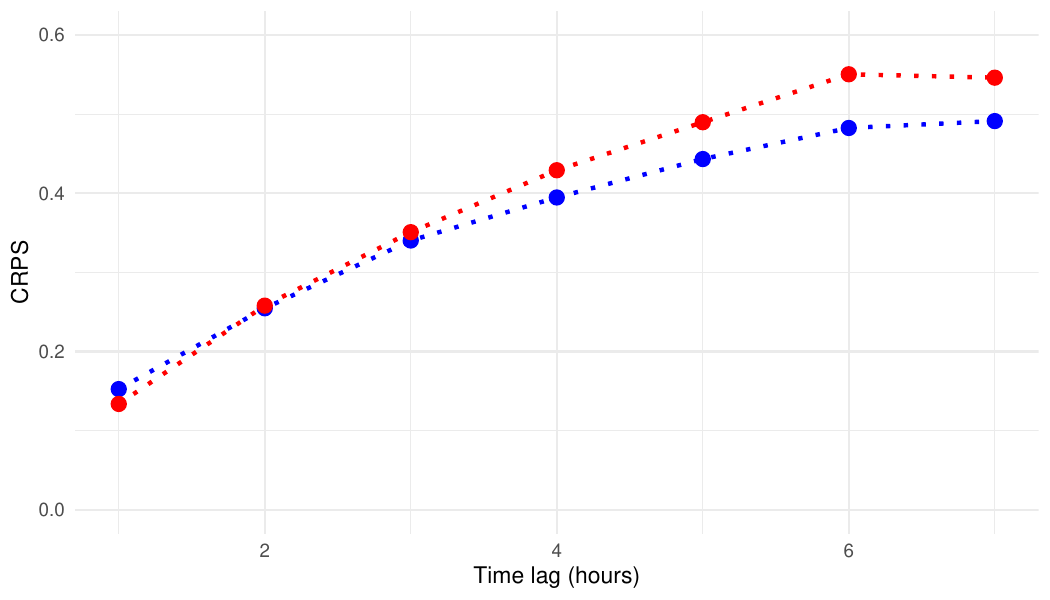}
    \includegraphics[width=0.49\textwidth]{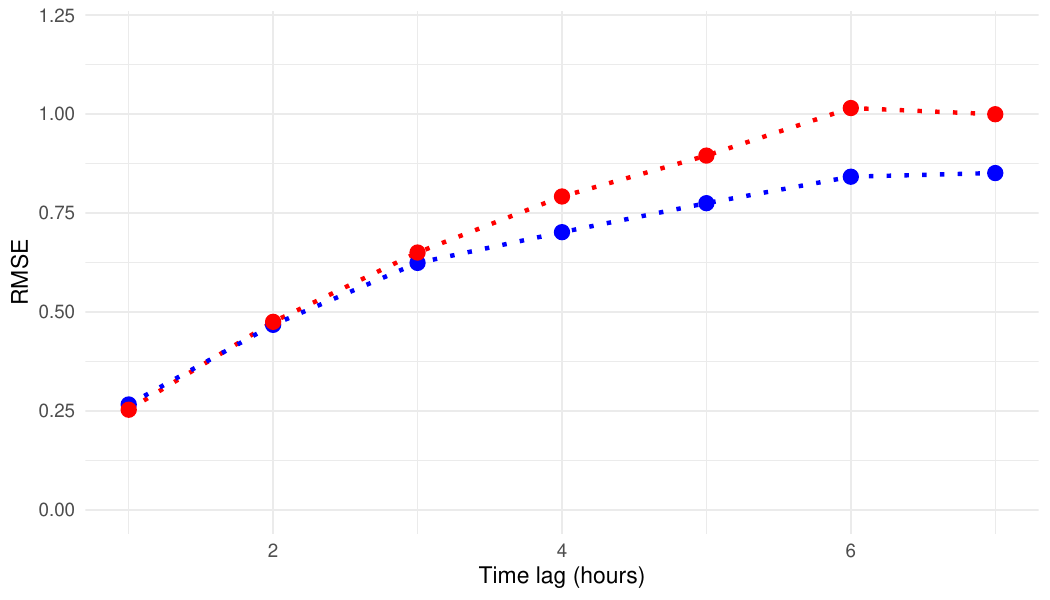}
    \caption{Mean CRPS (left) and RMSE of the mean of the ensemble of 500 independent forecasts (right) from our model (blue) and the DKS one (red), computed over the 2000 values for each lag $u=1, \ldots, 7$.}
    \label{Fig_PredictScores}
\end{figure}

\section{Discussion}
\label{Sec_Discussion}

Our focus is on the forecast of hourly maxima of 3-second wind gust speeds, as they are key indicators of potential associated damage. As explained, from a theoretical point of view, space-time max-stable models are natural for this task.

We focus on a specific space-time max-stable model which is Markovian in time (owing to its max-autoregressive structure) and includes an advection component, making it particularly suited for the forecasting (especially nowcasting) of atmospheric phenomena. We thoroughly analyze the theoretical properties of the model as well as those of the pairwise likelihood estimator (consistency and asymptotic normality), detail our forecasting strategy, and show the performance of our approach on wind gust reanalysis data for a period of a few days over Northwestern France in December 1999. Our methodology shows satisfactory results on this dataset, and demonstrates broad applicability beyond the current context, being suitable for forecasting wind gust speeds in other seasons (e.g., during thunderstorms in summer) and adaptable to other meteorological variables such as temperature, rainfall, or pollutant concentration. On top of tackling a prominent practical problem, we add to the limited literature about forecasting for max-stable fields. 

This work contributes to the field of statistical weather forecasting and provides a complementary approach to numerical weather prediction (NWP)- and artificial intelligence (AI)-based methodologies. Our model is parsimonious, enables easy quantification of the parameters' uncertainty and, owing to its explicit dynamics, is interpretable (causal representation) and allows straightforward ensemble forecasting. The price to pay for parsimony is a lack of flexibility compared to NWP or AI-based approaches. Especially, the current version of the model assumes spatially and temporally constant values of the decay and advection parameters $a$ and $\bm \tau$, implying that it does not guarantee reliable forecasts for large lead times (greater than a few hours or a day) and must be fitted to past data associated with very similar conditions. Two strategies can be employed to calibrate the model. If the synoptic (large-scale) conditions at the time of the forecast are comparable to those dominating in the previous hours or days (so that the flow direction is unaltered), the model can be fitted to the data collected on that period. An alternative approach involves fitting the model to concatenated historical data associated with similar weather regimes (i.e., quasi-stationary, persistent and recurrent large-scale flow patterns in the mid-latitudes; see \cite{grams2017balancing}, \cite{mockert2023meteorological}, and references therein) to the one prevailing at the time of the forecast. Moreover, the model presented in that paper concerns the spatially-standardized (to the Fréchet scale) data, but in practice the spatial margins also need to be modeled, possibly with non-stationarity included. 
For detailed information about the practical use of our model for weather forecasts, see Section~\ref{App_Operational}. Note that, in some settings \citep[see, e.g.,][]{weber1998}, geological features restrict the direction of advection to specific orientations with low variability, justifying the assumption of a constant $\bm \tau$, and thus enhancing the applicability of our model.

Future research directions to make our model more flexible include treating the decay and advection parameters $a$ and $\bm \tau$ as random or allowing them to depend on space, time, and atmospheric covariates (e.g., geopotential heights at various pressure levels, temperature and humidity at different elevations, radar and satellite data, lightning maps) to account for varying advection patterns over large spatial domains and through time. The inclusion of this information could be done through AI-based tools such as neural networks, and this would leverage the flexibility of AI while keeping the theoretically sound structure of our model. The space-time stationarity of the spatial dependence structure could also be relaxed in both space  and time, using, e.g., the approaches of \citet{Huser2016} and \citet{kohspacetimethunderstorms}, respectively, or even AI-based versions of the methods presented in those papers.
Some more flexibility could also be added by including a noise term to the recurrence equation in~\eqref{eqn:max_auto}, following common practices in econometrics. 
Finally, we could envisage the forecasts stemming from our model being post-processed by AI-based models. All these adjustments would enhance the model's ability to capture complex weather patterns and improve its forecasting skills across diverse atmospheric conditions and geographical regions, while retaining the interpretability, the explicit representation of the temporal dynamics and the facility to make ensemble forecasts.

{
    \spacingset{1}
    \bibliographystyle{apalike}
    \bibliography{References_Erwan}
}

\newpage
\appendix
\section*{SUPPLEMENTARY MATERIAL}

\renewcommand{\thesection}{\Alph{section}}

\section{Operational use of our model}\label{App_Operational}


In this section, we outline how our model may be used in practice to forecast weather phenomena in a context where historical measurements of the relevant quantity are available. The procedure can be decomposed into the following steps.

\begin{enumerate}
\item \textbf{Choose the calibration period (using historical periods with similar weather patterns)}. 

Determine the synoptic weather conditions at the time of the forecast by, e.g., specifying the weather regime (see, e.g., \cite{grams2017balancing}, \cite{mockert2023meteorological}, and references therein). A relevant example of weather regime for this study is the positive phase of the North Atlantic Oscillation (NAO+) or zonal regime, which is characterized by a pronounced positive pressure difference between the Azores High and the Icelandic Low, leading to strong westerly winds and possibly the formation of extratropical cyclones (winter storms). Other examples of weather regimes are the negative phase of the North Atlantic Oscillation (NAO-) and blocking regimes. Note that a quite fine classification is needed in our case, especially for local phenomena occurring in a summer. Once the current synoptic conditions have been specified, there are two options for the calibration:
\begin{enumerate}
\item If these conditions persist since a few days, fit the model to the associated dataset.
\item Otherwise, identify periods in the historical record with very similar synoptic conditions (e.g., the same weather regime) to the ones prevailing at the time of the forecast, and concatenate the associated data to create a ``historical dataset'' on which the model can be calibrated. 
However, care should be taken to mark the times at which the data were concatenated, as there is a break in the temporal dependence structure at those.
\end{enumerate}
    
\item \textbf{For each spatial coordinate, determine the parameters of the GEV distribution that best model the data}. As outlined in the first paragraph of Section~\ref{Sec_CaseStudy}, for each spatial point $\bm s$, use maximum likelihood estimation to fit the GEV distribution to the collection of observations in the historical dataset at $\bm s$. Let the resulting GEV distribution function at $\bm s$ be denoted $\hat F_{\bm s}$.
    
\item \textbf{Transform the historical data to have standard Fréchet margins using the estimated GEV distribution}. This is achieved by transforming the vector of observations at site $\bm s$ by applying the transformation
$$T_{\bm s}:x\mapsto -\frac{1}{\log \hat{F}_{\bm s}(x)}$$
    to each component of the vector, i.e., at each time point. This step is performed for each site $\bm s$ in the historical dataset, so that the data at all space-time points are transformed by the appropriate $T_{\bm s}$.
    
\item \textbf{Estimate the model parameters from the transformed historical data using the method outlined in Section~\ref{Sec_InferenceSimulations}}. We recall the one-step pairwise likelihood estimation strategy, in which~\eqref{eqn:pl-likelihoood} is maximized with respect to the parameter vector $\bm \psi$. A consequence of the concatenation of historical events in Step~1 is that temporal dependence is broken between the concatenated events. Thus, it is important to exclude pairs of space-time points from the likelihood function whenever $t$ and $t+u$ are not from the same event. Maximizing this censored pairwise likelihood function with respect to $\bm \psi$ yields estimates for the range parameter $\kappa$, the smoothness parameter $2H$, the advection parameter $\bm \tau$, and the decay constant $a$. We may denote the maximizer as $\hat{\bm \psi}$. Alternatively, the estimation can be performed in two steps as explained in Section~\ref{Sec_InferenceSimulations}.
    
\item \textbf{Transform the most recent map of weather data (initial conditions) using the transformations $T_{\bm s}$ for each site $\bm s$}. In the same way that the historical data were transformed in Step~3, transform all of the observations at the forecasting time $t_0$ using the appropriate $T_{\bm s}$. That is, the observation $X_{\bm s, t_0}$ at the space-time point $(\bm s, t_0)$ is transformed to $Z_{\bm s, t_0} = T_{\bm s}\big(X_{\bm s, t_0}\big)$ for all $\bm s$ in the spatial domain.
    
\item \textbf{Perform the forecasting strategy detailed in Section~\ref{Sec_ModelForecasting} using the transformed observations $Z_{\bm s, t_0}$ and our model parameterized by $\hat{\bm \psi}$}. This step leverages the Markovian property of our model and only uses the transformed observations $Z_{\bm s, t_0}$, for $\bm s$ in the spatial domain. Recall that in Section~\ref{Sec_CaseStudy_Forecasting}, it is explained that in order to forecast at a space-time point $(\bm s, t_0 + u)$, the transformed observation $Z_{\bm s - u\hat{\bm\tau}, t_0}$ is needed. If there are no data recorded at the site $\bm s - u\hat{\bm\tau}$, then synthetic data may be simulated conditionally on four nearby points (the details are shown in Figure~\ref{Fig_s_min_utau}). Also, as explained above, this methodology allows one to get an ensemble of forecasts, which is highly valuable in the context of weather forecasting.

\item \textbf{Transform the forecasts back from standard Fréchet margins using the GEV parameters}. Apply $T^{-1}_{\bm s}$ to the forecast at $(\bm s, t_0 + u)$ to transform it back to the scale of interest.
\end{enumerate}

The constraints in the choice of the calibration dataset described in Step~1 are important in the current version of our model due to the parameters $a$ and $\bm \tau$ being spatially and temporally constant and the spatial dependence structure being stationary in space and time. These could however be relaxed in the more flexible versions of the model mentioned in Section~\ref{Sec_Discussion}. Note also that the forecasts can be updated in real time, as the data corresponding to the new initial conditions (see Step~5) arrive.

\section{Supplementary technical results}

\subsection{Proof of \eqref{Eq_RecurrenceLagu}}\label{Sec:proof_of_closed_form}

Let $\bm{s} \in \mathbb{R}^2$, $t\in \N^+$ and $u \in \{1,\ldots,t\}$.
By recursively applying \eqref{eqn:max_auto} starting from $Z(\bm s - u\bm \tau, t- u)$, one obtains
$$Z(\bm s - (u-1)\bm \tau, t- (u - 1)) = \max\{a Z(\bm s - u\bm \tau, t- u), (1-a)W_{t-u+1}(\bm s - (u-1)\bm \tau)\},$$
\begin{align*}
    Z(\bm s - (u-2)\bm \tau, t- (u - 2)) = \max\{a^2 Z(\bm s - u\bm \tau, t- u),\ & a(1-a)W_{t-u+1}(\bm s - (u-1)\bm \tau),\\
    &(1-a)W_{t-u+2}(\bm s - (u-2)\bm \tau)\},
\end{align*}
$$\ldots$$
\begin{align*}
    Z(\bm s, t) = \max\{a^u Z(\bm s - u\bm \tau, t- u),\ & a^{u-1}(1-a)W_{t-u+1}(\bm s - (u-1)\bm \tau),\\
    &a^{u-2}(1-a)W_{t-u+2}(\bm s - (u-2)\bm \tau),\\
    &\ldots\\
    &(1-a)W_t(\bm s)\},
\end{align*}
which simplifies to
$$Z(\bm s, t) = \max\left\{a^u Z(\bm s - u\bm \tau, t- u), (1-a)\bigvee_{k=0}^{u-1}a^kW_{t-k}(\bm s - k\bm\tau)\right\}.$$
This proves~\eqref{Eq_RecurrenceLagu}, since
$$(1-a)\bigvee_{k=0}^{u-1}a^kW_{t-k}(\bm s - k\bm\tau) = (1-a^u)\widetilde W_{t-u+1}^t(\bm s),$$
by definition in~\eqref{eqn:W_tilde}.

By stationarity, independence and simple max-stability of the spatial fields $(W_t)_{t\in\N}$, one has
$$\widetilde W_{t_1}^{t_2} \stackrel{\mathrm d}= \left(\frac{1-a}{1-a^{t_2-t_1}}\sum_{k=0}^{t_2-t_1-1}a^k\right)W(\bm s) = W(\bm s),\qquad \bm s \in \R^2.$$

\subsection{Bivariate density functions and their derivatives}

\label{sec:density_function}

When $\bm{h}\neq u\bm{\tau }$ and $Z$ is distributed according to our model with $W$ being the spatial Brown--Resnick model and parameter vector $\bm \psi$, the bivariate density function of $%
\left( Z(\bm{s},t),Z(%
\bm{s}+\bm{h},t+u)\right)^{\prime} $ is
\begin{equation}\label{eqn:log_f_in_appendix}
f_{\bm{h},u}(z_{1},z_{2};{\bm{\psi }})=\exp \left( V\left(
z_{1},z_{2}\right) +\log (V_{1}\left( z_{1},z_{2}\right) V_{2}\left(
z_{1},z_{2}\right) -V_{12}\left( z_{1},z_{2}\right) )\right) ,\quad
z_{1},z_{2}>0,
\end{equation}
with 
\begin{eqnarray*}
V\left( z_{1},z_{2}\right) &=&V_{Z;\bm{h},u}\left( z_{1},z_{2}\right)
,\quad V_{1}\left( z_{1},z_{2}\right) =\frac{\partial V\left(
z_{1},z_{2}\right) }{\partial z_{1}}, \\
V_{2}\left( z_{1},z_{2}\right) &=&\frac{\partial V\left( z_{1},z_{2}\right) 
}{\partial z_{2}},\quad V_{12}=\frac{\partial ^{2}V\left( z_{1},z_{2}\right) 
}{\partial z_{1}\partial z_{2}}.
\end{eqnarray*}%
Let%
\begin{align*}
q_{1}& =\frac{\log (z_{2}/(a^{u}z_{1}))}{\sqrt{2\gamma(\bm{h}%
-u\bm{\tau })}}+\sqrt{\frac{\gamma (\bm{h}-u%
\bm{\tau })}{2}}, \\
q_{2}& =\frac{\log (a^{u}z_{1}/z_{2})}{\sqrt{2\gamma(\bm{h}-u%
\bm{\tau })}}+\sqrt{\frac{\gamma (\bm{h}-u\bm{%
\tau })}{2}},
\end{align*}
where $\gamma$ is the semivariogram.
Then
\begin{align*}
V_{1}& =-\frac{1}{z_{1}^{2}}\Phi (q_{1})+\frac{1}{z_{1}}\varphi (q_{1})\frac{%
\partial q_{1}}{\partial z_{1}}+\frac{a^{u}}{z_{2}}\varphi (q_{2})\frac{%
\partial q_{2}}{\partial z_{1}}, \\
V_{2}& =\frac{1}{z_{1}}\varphi (q_{1})\frac{\partial q_{1}}{\partial z_{2}}-%
\frac{a^{u}}{z_{2}^{2}}\Phi (q_{2})+\frac{a^{u}}{z_{2}}\varphi (q_{2})\frac{%
\partial q_{2}}{\partial z_{2}}-\frac{1}{z_{2}^{2}}+\frac{a^{u}}{z_{2}^{2}},
\\
V_{12}& =-\frac{1}{z_{1}^{2}}\varphi (q_{1})\frac{\partial q_{1}}{\partial
z_{2}}-\frac{q_{1}}{z_{1}}\varphi (q_{1})\frac{\partial q_{1}}{\partial z_{1}%
}\frac{\partial q_{1}}{\partial z_{2}}-\frac{a^{u}}{z_{2}^{2}}\varphi (q_{2})%
\frac{\partial q_{2}}{\partial z_{1}}-\frac{a^{u}q_{2}}{z_{2}}\varphi (q_{2})%
\frac{\partial q_{2}}{\partial z_{1}}\frac{\partial q_{2}}{\partial z_{2}},
\end{align*}%
where $\varphi (\cdot )$ denotes the standard normal probability
density function. The partial derivatives of $q_{1}$ and $q_{2}$ with
respect to $z_{1}$ and $z_{2}$ can be expressed as 
\begin{align*}
\frac{\partial q_{1}}{\partial z_{1}}& =\frac{-1}{z_{1}\sqrt{2\gamma
(\bm{h}-u\bm{\tau })}},\quad \frac{\partial q_{1}}{%
\partial z_{2}}=\frac{1}{z_{2}\sqrt{2\gamma (\bm{h}-u%
\bm{\tau })}}, \\
\frac{\partial q_{2}}{\partial z_{1}}& =\frac{1}{z_{1}\sqrt{2\gamma(\bm{h}-u\bm{\tau })}},\quad \frac{\partial q_{2}}{\partial
z_{2}}=\frac{-1}{z_{2}\sqrt{2\gamma(\bm{h}-u\bm{\tau }%
)}},
\end{align*}%
and those with respect to $\gamma$ and $a$
are
\begin{equation*}
\frac{\partial q_{1}}{\partial \gamma }=\frac{q_{2}}{2\gamma },\quad \frac{%
\partial q_{1}}{\partial a}=-\frac{u}{2a\gamma },\quad \frac{\partial q_{2}}{%
\partial \gamma }=\frac{q_{1}}{2\gamma },\quad \frac{\partial q_{2}}{%
\partial a}=\frac{u}{2a\gamma }.
\end{equation*}%
We include the expressions of these partial derivatives for future reference in the proof of Lemma~\ref{lem:score_L3} below. It is important to note that the first and second partials of $f_{\bm{h},u}(z_{1},z_{2};{\bm{\psi }})$ with respect to $a$ are bounded
above on $\Psi _{\varepsilon }$ in~\eqref{eqn:psi_epsilon}. The first and second partials with respect to the remaining parameters in $\bm\psi$ act through $\gamma $,
and they too are bounded above on $\Psi_{\varepsilon }$.

\subsection{Constraints on the design mask $\mathcal H_r$}\label{sec:design_mask_app}

In the purely spatial setting, it is common practice to exclude the negatives of the vectors in the design mask $\mathcal{H}_r$, \textit{i.e.}, if $\bm{v} \in \mathcal H_r\setminus\{\bm 0\}$, then $-\bm v \notin \mathcal H_r$. This ensures that each pair of points in the dataset is considered in the pairwise log-likelihood estimator at most once. In the space-time setting, this restriction on $\mathcal H_r$ can be relaxed if one considers only positive temporal lags. Indeed, if one constructs pairs of observations by pairing a first space-time coordinate with another space-time coordinate at a later time, it is impossible for any pair of observations to be counted more than once. These considerations are especially pertinent if the model is not invariant to rotations in space, which is indeed the case for our model when $\bm \tau \neq \bm 0$.

\section{Asymptotic properties of the pairwise likelihood estimator}\label{Sec_Appendix_TheoreticalProperties}
\label{sec:pwle}

In this section, we prove that the pairwise likelihood estimator of the parameter vector ${%
\bm{\psi }}^{\star }$ is almost surely consistent and asymptotically
normal. The parameter space $\Psi _{\varepsilon }$ is compact and is assumed
to contain the true parameter vector ${\bm{\psi }}^{\star }$, i.e., the following condition
holds.

\begin{assu}
\label{ass:case1}
The parameter vector ${\bm{\psi }}^{\star }=(\bm{\theta }^{\star },
\bm{\tau }^{\star },a^{\star })^{\prime}$ lies in $\Psi _{\varepsilon }$ for
some $0<\varepsilon <\min \{ 1/2,\mu /p \}$.
\end{assu}

The elements of $\Psi _{\varepsilon }$ are identifiable in the sense that 
\begin{equation*}
{\bm{\psi }}=\Tilde{{\bm{\psi }}}\iff f_{\bm{h}%
,u}(z_{1},z_{2};{\bm{\psi }})=f_{\bm{h},u}(z_{1},z_{2};\Tilde{{
\bm{\psi }}}),
\end{equation*}
for all $\bm{h}\in \mathcal{H}_{r}$, $u\in \{1,\ldots,p\}$, and $%
z_{1},z_{2}>0$. Section~\ref{sec:density_function} provides the expression
of $f_{\bm{h},u}(z_{1},z_{2};{\bm{\psi }})$ and some
explanations for deriving its first and second derivatives with respect to ${%
\bm{\psi }}$ using the chain rule. It is in particular easily
deduced from this subsection that the pairwise likelihood function $\mathrm{%
PL}^{(m,T)}$ defined in \eqref{eqn:pl-likelihoood} and its first and second derivatives with respect to ${%
\bm{\psi }}$ are continuous in ${\bm{\psi }}$ on $\Psi
_{\varepsilon }$. Appendices~\ref{sec:pwle_consistency} and~\ref%
{sec:pwle_assnorm} provide proofs of the asymptotic consistency and
normality.

\subsection{Consistency of the pairwise likelihood estimator}
\label{sec:pwle_consistency}

\begin{theo}
\label{thm:consistency} Suppose Assumption \ref{ass:case1} holds. Then the
pairwise likelihood estimator in~\eqref{eqn:pl-estimator} satisfies
\begin{equation}
\hat{{\bm{\psi }}}=\arg \max_{{\bm{\psi }}\in \Psi
_{\varepsilon }}\mathrm{PL}^{(m,T)}({\bm{\psi }})\overset{\mathrm{%
a.s.}}{\longrightarrow }{\bm{\psi }}^{\star },  \label{eqn:psi_hat}
\end{equation}%
as $m,T\rightarrow \infty $.
\end{theo}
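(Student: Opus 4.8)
The plan is to follow the now-standard route for M-estimators / composite likelihood estimators based on ergodic (mixing) stationary fields, adapted to the double-index asymptotics $m,T\to\infty$. The three ingredients I would assemble are: (i) a uniform (over $\Psi_\varepsilon$) strong law of large numbers for the normalized pairwise log-likelihood, (ii) identification of the limiting objective function and the fact that it is uniquely maximized at $\bm\psi^\star$, and (iii) a standard argmax-continuous-mapping argument to transfer the uniform convergence and the unique maximizer to the convergence of $\hat{\bm\psi}$.

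First I would set $N_{m,T}$ equal to the number of terms in the sum \eqref{eqn:pl-likelihoood} and write $\overline{\mathrm{PL}}^{(m,T)}(\bm\psi)=N_{m,T}^{-1}\mathrm{PL}^{(m,T)}(\bm\psi)$. The crucial input is Lemma~\ref{lem:X_mixing}, which gives that $Z$ is space-time mixing when $W$ is the spatial Brown--Resnick field (which is mixing in the sense of \cite{kabluchko2010ergodic}); mixing implies ergodicity of the $\Z^2\times\N$-indexed stationary process, hence a multiparameter pointwise ergodic theorem applies. For each fixed $\bm\psi$, $\bm h$, $u$, the summand $\log f_{\bm h,u}\big(Z(\bm s,t),Z(\bm s+\bm h,t+u);\bm\psi\big)$ is a measurable functional of the stationary ergodic field, so $\overline{\mathrm{PL}}^{(m,T)}(\bm\psi)\overset{\mathrm{a.s.}}{\longrightarrow} \mathrm{PL}(\bm\psi):=\sum_{\bm h\in\mathcal H_r,\,u=1}^{p}\E\big[\log f_{\bm h,u}(Z(\bm 0,0),Z(\bm h,u);\bm\psi)\big]$, provided this expectation is finite. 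Finiteness, and more importantly an integrable envelope $\sup_{\bm\psi\in\Psi_\varepsilon}\big|\log f_{\bm h,u}(Z(\bm 0,0),Z(\bm h,u);\bm\psi)\big|\in L^1$, has to be checked using the explicit density in Section~\ref{sec:density_function}: on the compact set $\Psi_\varepsilon$ the semivariogram $\gamma(\bm h-u\bm\tau)$ stays bounded away from $0$ and $\infty$ (because $\|\bm\tau-\bm h/u\|\geq\varepsilon$), so the Gaussian-type tails of $\varphi,\Phi$ control the growth; combined with the continuity of $f_{\bm h,u}$ in $\bm\psi$ noted in Section~\ref{sec:pwle}, a routine compactness argument upgrades pointwise a.s.\ convergence to uniform a.s.\ convergence $\sup_{\bm\psi\in\Psi_\varepsilon}|\overline{\mathrm{PL}}^{(m,T)}(\bm\psi)-\mathrm{PL}(\bm\psi)|\overset{\mathrm{a.s.}}{\longrightarrow}0$ (e.g.\ via an equicontinuity/bracketing argument, or by applying the ergodic theorem to the modulus of continuity).

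Next I would show that $\mathrm{PL}$ is uniquely maximized at $\bm\psi^\star$. By the information inequality, for each pair $(\bm h,u)$ the map $\bm\psi\mapsto \E\big[\log f_{\bm h,u}(Z(\bm 0,0),Z(\bm h,u);\bm\psi)\big]$ is maximized precisely on the set of $\bm\psi$ with $f_{\bm h,u}(\cdot,\cdot;\bm\psi)=f_{\bm h,u}(\cdot,\cdot;\bm\psi^\star)$ a.e. Summing over $(\bm h,u)$, the maximizer set of $\mathrm{PL}$ is $\{\bm\psi: f_{\bm h,u}(\cdot,\cdot;\bm\psi)=f_{\bm h,u}(\cdot,\cdot;\bm\psi^\star)\ \forall \bm h\in\mathcal H_r,\ u=1,\dots,p\}$, which by the identifiability statement asserted in Section~\ref{sec:pwle} is exactly $\{\bm\psi^\star\}$. (I would lean on that identifiability claim; if one wanted to be self-contained, it follows by reading off $a$ from the extremal coefficient $\Theta_Z(\bm h,u)$-type behaviour and the Dirac-mass location $a^u z$, then $\bm\tau$ from the argument of $\gamma$ in \eqref{eqn:bivariate_exponent_measure_BR}, then $(\kappa,H)$ from $\gamma$ itself evaluated at the distinct lags $\bm h-u\bm\tau$.) Finally, with a uniform SLLN, a continuous limit objective, a compact parameter space, and a unique maximizer, the classical argmax theorem (e.g.\ van der Vaart, Thm.~5.7, or Wald's consistency argument) yields $\hat{\bm\psi}\overset{\mathrm{a.s.}}{\longrightarrow}\bm\psi^\star$ as $m,T\to\infty$; since the convergence is along the net indexed by $(m,T)\in\N^2$, one just notes the uniform convergence holds for the net and the argmax argument is index-agnostic.

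The main obstacle I anticipate is not the ergodic theorem or the identification step but the \emph{uniform integrability / envelope} control needed to pass from pointwise to uniform a.s.\ convergence: one must verify that $\log f_{\bm h,u}$ and its behaviour near the boundary of $\Psi_\varepsilon$ (small $z_1,z_2$, and $\bm\tau$ approaching a point where $\bm h-u\bm\tau$ is small — excluded by the $\varepsilon$-margin, which is exactly why $\Psi_\varepsilon$ is defined that way) produce an $L^1$ envelope with respect to the law of $(Z(\bm 0,0),Z(\bm h,u))$, whose margins are only Fréchet (no finite mean). This is where the explicit formulas of Section~\ref{sec:density_function}, the boundedness of the relevant partial derivatives on $\Psi_\varepsilon$ stated there, and the $\varepsilon$-separation from the degenerate set $\{\bm h/u\}$ all have to be used carefully; I expect this to be the technical heart of the proof, with the rest being standard.
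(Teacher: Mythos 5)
Your proposal is correct and follows essentially the same route as the paper: a uniform strong law for the normalized pairwise log-likelihood based on the space-time mixing of $Z$ (Lemma~\ref{lem:X_mixing}) and a uniformly integrable envelope for $\log f_{\bm h,u}$ on $\Psi_\varepsilon$ (Lemma~\ref{lem:log_f_integrable}), followed by Jensen's inequality plus identifiability to show $\bm\psi^\star$ uniquely maximizes the limit, and an argmax argument. The only cosmetic difference is bookkeeping: the paper invokes the uniform SLLN of Straumann and Mikosch (2006, Theorem~2.7) directly and isolates the boundary terms in an explicit remainder $\mathcal{R}^{(m,T)}$ shown to be $o(m^2T)$ almost surely, whereas you normalize by the exact number of terms and propose an equicontinuity argument, which amounts to the same thing.
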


Before we begin the proof of Theorem \ref{thm:consistency}, we need the two
following lemmas.

\begin{lemm}
\label{lem:log_f_integrable} Under Assumption \ref{ass:case1}, the random
variable appearing in~\eqref{eqn:pl-likelihoood},
\begin{equation*}
\log f_{\bm{h},u}\big(Z(\bm{0},0),Z(\bm{h},u);{\bm{\psi }%
}\big),
\end{equation*}%
is uniformly integrable on $\Psi _{\varepsilon }$, for all $\bm{h}\in 
\mathcal{H}_{r}$, and $u\in \{1,\ldots,p\}$.
\end{lemm}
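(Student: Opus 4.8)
The plan is to exhibit a single integrable random variable $G\big(Z(\bm 0,0),Z(\bm h,u)\big)$, \emph{not} depending on $\bm\psi$, that dominates $\big|\log f_{\bm h,u}\big(Z(\bm 0,0),Z(\bm h,u);\bm\psi\big)\big|$ for every $\bm\psi\in\Psi_\varepsilon$. Uniform integrability over $\Psi_\varepsilon$ (and, in fact, the stronger $\E\big[\sup_{\bm\psi\in\Psi_\varepsilon}\big|\log f_{\bm h,u}(Z(\bm 0,0),Z(\bm h,u);\bm\psi)\big|\big]<\infty$ used later in the consistency proof) then follows at once. Since there are only finitely many pairs $(\bm h,u)$ with $\bm h\in\mathcal H_r$ and $u\in\{1,\ldots,p\}$, it suffices to treat one such pair. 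Writing $\log f_{\bm h,u}(z_1,z_2;\bm\psi)=V(z_1,z_2)+\log\big(V_1V_2-V_{12}\big)$ as in Section~\ref{sec:density_function}, I will bound these two summands separately, from above and from below, in terms of $1/z_1,1/z_2,|\log z_1|,|\log z_2|$ only (never in terms of positive powers of $z_i$, which are not Fréchet-integrable).

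First I would record the uniform estimates forced by the geometry of $\Psi_\varepsilon$ in \eqref{eqn:psi_epsilon}. Because $\Psi_\varepsilon$ excludes a neighbourhood of the set $\{\bm h/u\}$, one has $\bm h-u\bm\tau\neq\bm 0$ and $u\varepsilon\le\|\bm h-u\bm\tau\|\le\|\bm h\|+p\sqrt 2\,\varepsilon^{-1}$ throughout $\Psi_\varepsilon$; together with $\kappa\in[\varepsilon,\varepsilon^{-1}]$ and $H\in[\varepsilon,1-\varepsilon]$ this shows that $\sigma:=\sqrt{2\gamma(\bm h-u\bm\tau)}=\sqrt 2\,(\|\bm h-u\bm\tau\|/\kappa)^{H}$ lies in a compact subinterval $[\sigma_-,\sigma_+]\subset(0,\infty)$ depending only on $\varepsilon,\bm h,p$. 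Moreover $a^u\le(1-\varepsilon)^u\le1-\varepsilon$, so $1-a^u\ge\varepsilon$, and $|u\log a|\le p\,|\log\varepsilon|$ is bounded. Using the Hüsler--Reiss identity $z_1^{-1}\varphi(q_1)=a^u z_2^{-1}\varphi(q_2)$ (which follows from $q_1+q_2=\sigma$ and $\varphi(q_1)/\varphi(q_2)=e^{(q_2^2-q_1^2)/2}$), the expressions of Section~\ref{sec:density_function} collapse to $V_1=-\Phi(q_1)/z_1^2$, $V_2=-\big(a^u\Phi(q_2)+1-a^u\big)/z_2^2$ and $V_{12}=-\varphi(q_1)/(\sigma z_1^2z_2)$, so that
\[
V_1V_2-V_{12}=\frac{\Phi(q_1)\big(a^u\Phi(q_2)+1-a^u\big)}{z_1^2z_2^2}+\frac{\varphi(q_1)}{\sigma\,z_1^2z_2}>0,
\]
a sum of two manifestly positive terms; combined with $0<V(z_1,z_2)\le z_1^{-1}+z_2^{-1}$ (read off from \eqref{eqn:bivariate_exponent_measure_BR}), this is the structural input for both bounds.

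For the upper bound, $\Phi\le1$ and $\varphi\le(2\pi)^{-1/2}$ give $V_1V_2-V_{12}\le z_1^{-2}z_2^{-2}+c\,z_1^{-2}z_2^{-1}$ with $c=(2\pi)^{-1/2}\sigma_-^{-1}$, whence $\log(V_1V_2-V_{12})\le-2\log z_1+z_2^{-2}+C_1$ and therefore $\log f_{\bm h,u}\le z_1^{-1}+z_2^{-1}+z_2^{-2}+2|\log z_1|+C_1$. For the lower bound, dropping the positive $\varphi$-term and using $a^u\Phi(q_2)+1-a^u\ge1-a^u\ge\varepsilon$ yields $V_1V_2-V_{12}\ge\varepsilon\,\Phi(q_1)\,z_1^{-2}z_2^{-2}$, so the only delicate factor is $\Phi(q_1)$, which degenerates to $0$ as $z_2/z_1\to0$. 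Here I would invoke a standard Gaussian tail estimate $-\log\Phi(x)\le C_2(1+x^2)$ valid for all $x\in\R$, together with $q_1^2\le 2\sigma^{-2}\big(\log z_2-\log z_1-u\log a\big)^2+\sigma^2/2\le C_3\big(1+(\log z_1)^2+(\log z_2)^2\big)$ uniformly over $\Psi_\varepsilon$ (by the bounds of the previous paragraph), to conclude $\log\Phi(q_1)\ge-C_4\big(1+(\log z_1)^2+(\log z_2)^2\big)$ and hence, using $V\ge0$, $\log f_{\bm h,u}\ge\log\varepsilon-2\log z_1-2\log z_2-C_4\big(1+(\log z_1)^2+(\log z_2)^2\big)$.

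Finally, it remains to observe that the moduli of these two bounds are integrable: since $Z(\bm 0,0)$ and $Z(\bm h,u)$ have standard Fréchet margins, $1/Z(\cdot)$ is standard exponential and $\log Z(\cdot)$ is standard Gumbel, so $1/Z$, $1/Z^2$, $|\log Z|$ and $(\log Z)^2$ all have finite expectation, and by the Cauchy--Schwarz inequality the mixed products are integrable as well. Taking $G$ to be the sum of the moduli of the upper and lower bounds therefore produces the required $\bm\psi$-free integrable dominating function, and the lemma follows. The one genuinely non-routine step is the lower bound on $V_1V_2-V_{12}$: controlling $\log\Phi(q_1)$ in the degenerate regime $z_2/z_1\to0$ via $-\log\Phi(x)=O(1+x^2)$ and noting that the resulting bound is merely \emph{quadratic} in $\log z_i$, which is still integrable against a Gumbel law. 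Everything else is bookkeeping resting on the uniform two-sided control of $\gamma(\bm h-u\bm\tau)$ (and of $1-a^u$ and $|u\log a|$) that the definition of $\Psi_\varepsilon$ guarantees.
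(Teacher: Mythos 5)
Your proof is correct, and it follows the paper's overall skeleton (split $\log f_{\bm h,u}=\pm V+\log(V_1V_2-V_{12})$, bound $|V|\le z_1^{-1}+z_2^{-1}$ via $\Phi\le 1$, and reduce everything to moments of $1/Z$ and $\log Z$, which are exponential and Gumbel), but the treatment of the delicate term $\log(V_1V_2-V_{12})$ is genuinely different. The paper keeps the unsimplified partial derivatives, bounds the term by a constant $K_{k_1,k_2}$ on a compact box $[k_1,k_2]^2$, claims an asymptotic bound of the form $(\log z_1+1/z_1)^{k_3}(\log z_2+1/z_2)^{k_3}$ off the box for an unspecified $k_3$, and concludes with H\"older's inequality. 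You instead first use the identity $z_1^{-1}\varphi(q_1)=a^u z_2^{-1}\varphi(q_2)$ (valid since $q_1+q_2=\sqrt{2\gamma}$) to collapse the derivatives to $V_1=-\Phi(q_1)/z_1^2$, $V_2=-(a^u\Phi(q_2)+1-a^u)/z_2^2$, $V_{12}=-\varphi(q_1)/(\sigma z_1^2z_2)$, so that $V_1V_2-V_{12}$ is a sum of two explicit positive terms; the global upper bound is then immediate, and the lower bound is handled by $1-a^u\ge\varepsilon$ together with a Mills-ratio estimate $-\log\Phi(x)\le C(1+x^2)$, which controls the degenerate regime $z_2/z_1\to 0$ by a quantity merely quadratic in $\log z_1,\log z_2$. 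This buys a fully explicit, $\bm\psi$-free dominating function with all constants traced to the geometry of $\Psi_\varepsilon$ (two-sided control of $\gamma(\bm h-u\bm\tau)$, of $1-a^u$ and of $|u\log a|$), and it directly delivers the stronger sup-integrability over $\Psi_\varepsilon$; the paper's route is shorter on the page but leaves the asymptotic-order claim to the reader. Two cosmetic points: your intermediate bound $\log(z_2^{-2}+cz_2^{-1})\le z_2^{-2}+C_1$ needs a harmless constant adjustment (e.g.\ $\tfrac32 z_2^{-2}+C_1$, or simply $2|\log z_2|+C_1$), and the sign of $V$ in \eqref{eqn:log_f_in_appendix} is immaterial to your argument since $|V|\le z_1^{-1}+z_2^{-1}$ is integrable in either convention.
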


\begin{proof}
By \eqref{eqn:log_f_in_appendix}, the absolute value of the likelihood function is bounded as follows:
\begin{equation*}
|\log f_{\bm{h},u}(z_{1},z_{2};{\bm{\psi }})|\leq |V|+|\log
(V_{1}V_{2}-V_{12})|.
\end{equation*}
The two terms on the right-hand
side are considered separately. To bound $|V|$, recognize that $\Phi (\cdot
)\leq 1$. Thus, we have for all ${\bm{\psi }}\in \Psi _{\varepsilon }$, 
\begin{equation*}
V\big(Z(\bm{0},0),Z(\bm{h},u)\big)\leq \frac{1}{Z(\bm{0},0)}+%
\frac{1}{Z(\bm{h},u)},
\end{equation*}%
and 
\begin{equation*}
\E\bigg[\frac{1}{Z(\bm{0},0)}\bigg]+\E\bigg[\frac{1}{Z(\bm{h},u)}%
\bigg]<\infty ,
\end{equation*}%
since the space-time field $1/Z$ has exponential---thus integrable---margins.

What remains to be shown is that $\log (V_{1}V_{2}-V_{12})$ evaluated at $%
z_{1}=Z(\bm{0},0)$ and $z_{2}=Z(\bm{h},u)$ is uniformly integrable
over $\Psi _{\varepsilon }$ whenever $\bm{h}\in \mathcal{H}_{r}$ and $%
u\in \{1,\ldots,p\}$. For any choice of $z_{1}$ and $z_{2}$, $|\log
(V_{1}V_{2}-V_{12})|$ can be bounded above by a constant, since $a$ and $%
\gamma(\bm{h}-u\bm{\tau })$ can be
bounded away from $0$ on $\Psi_\varepsilon$. Therefore, for any $%
k_{1},k_{2}\in \R^{+}$ such that $k_{1}\leq k_{2}$, the quantity 
\begin{equation*}
K_{k_{1},k_{2}}=\sup \{|\log (V_{1}V_{2}-V_{12})|:z_{1},z_{2}\in \lbrack
k_{1},k_{2}],\bm{\psi }\in \Psi _{\varepsilon }\}
\end{equation*}%
exists and is finite.

Now, we consider the asymptotic behavior of $|\log (V_{1}V_{2}-V_{12})|$ in
the four cases $z_{1}\rightarrow \infty $, $z_{1}\rightarrow 0$, $%
z_{2}\rightarrow \infty $, and $z_{2}\rightarrow 0$. Referring to the
expressions in Section~\ref{sec:density_function}, it can be shown that
independently of ${\bm{\psi }}$, there exists a sufficiently large $%
k_{2}\in \R^{+}$, a sufficiently small $k_{1}\in \R^{+}$ such that $%
k_{1}\leq k_{2}$, and a sufficiently large $k_{3}\in \R^{+}$, such that 
\begin{equation}
|\log (V_{1}V_{2}-V_{12})|\leq \Big(\log z_{1}+\frac{1}{z_{1}}\Big)^{k_{3}}%
\Big(\log z_{2}+\frac{1}{z}\Big)^{k_{3}}  \label{eqn:logx_bound}
\end{equation}%
whenever $(z_{1},z_{2})\notin \lbrack k_{1},k_{2}]^{2}$.

Define $k_{1}$, $k_{2}$, and $k_{3}$ such that \eqref{eqn:logx_bound} holds.
In this case, 
\begin{equation*}
|\log (V_{1}V_{2}-V_{12})|\leq \Big[\sqrt{K_{k_{1},k_{2}}}+\Big(\log z_{1}+%
\frac{1}{z_{1}}\Big)^{k_{3}}\Big]\Big[\sqrt{K_{k_{1},k_{2}}}+\Big(\log z_{2}+%
\frac{1}{z_{2}}\Big)^{k_{3}}\Big],
\end{equation*}%
for all $z_{1},z_{2}>0$.

Let $Z_{1}=Z(\bm{0},0)$ and $Z_{2}=Z(\bm{h},u)$. By H\"{o}lder's
inequality, 
\begin{align*}
\E\bigg[\Big[\sqrt{K_{k_{1},k_{2}}}& +\Big(\log Z_{1}+\frac{1}{Z_{1}}\Big)%
^{k_{3}}\Big]\Big[\sqrt{K_{k_{1},k_{2}}}+\Big(\log Z_{2}+\frac{1}{Z_{2}}\Big)%
^{k_{3}}\Big]\bigg] \\
\leq & \ \E\bigg[\Big[\sqrt{K_{k_{1},k_{2}}}+\Big(\log Z_{1}+\frac{1}{Z_{1}}%
\Big)^{k_{3}}\Big]^{2}\bigg]^{1/2}\E\bigg[\Big[\sqrt{K_{k_{1},k_{2}}}+\Big(%
\log Z_{2}+\frac{1}{Z_{2}}\Big)^{k_{3}}\Big]^{2}\bigg]^{1/2} \\
<& \ \infty ,
\end{align*}%
since the margins of $\log Z$ and $1/Z$ are the Gumbel and exponential
distributions respectively, which have finite moments.

We have shown that for all $(\bm{h},u)\in \mathcal{H}_{r}\times
\{1,\ldots,p\}$, the quantity $\big|\log f_{\bm{h},u}\big(Z(\bm{0},0),Z(%
\bm{h},u);{\bm{\psi }}\big)\big|$ can be bounded above by the
sum of two integrable random variables that do not depend on ${\bm{%
\psi }}$, implying uniform integrability.
\end{proof}

To state the second lemma, we first need to introduce the following definition.
\begin{defi}[Space-time mixing]\label{def:mixing_in_space_and_time}
An $\R$-valued space-time field $Z$ is said to be {space-time mixing} if, for any $\bm{s}\in\R^2$, any $t, z_1, z_2 \in \R$, and any sequences $(\bm h_n)_{n\geq 1}\in \R^2$ and $(u_n)_{n\geq 1}\in \R$ satisfying $\max\big\{||\bm{h}_n||,u_n\big\}\rightarrow\infty$ as $n\to \infty$, it holds that
\begin{equation*}
\P\big[Z(\bm{s},t)\leq z_1, Z(\bm{s}+\bm{h}_n,t+u_n)\leq z_2\big] - \P\big[Z(\bm{s},t)\leq z_1\big]\P\big[Z(\bm{s}+\bm{h}_n,t+u_n)\leq z_2\big]\xrightarrow[n\to\infty]{}0.
\end{equation*}
\end{defi}

\begin{lemm}\label{lem:X_mixing}
Let $Z$ be distributed according to our model in~\eqref{eqn:max_auto} with $W$ being the spatial Brown--Resnick model. The bivariate extremal dependence coefficient $\Theta_Z(\bm h, u)$ satisfies both
\begin{enumerate}
    \item[(a)] $\inf_{\bm h \in \R^2}\Theta_Z(\bm h, u) = 2 - a^u$,
    \item[(b)] $\inf_{u\geq 0}\Theta_Z(\bm h + u\bm\tau, u) = \Theta_Z(\bm h, 0)$.
\end{enumerate}
Moreover, the field $Z$ given by \eqref{eqn:max_auto}
is space-time mixing as in Definition~\ref{def:mixing_in_space_and_time} for any field $W$ that is mixing.
\end{lemm}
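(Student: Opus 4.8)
The plan is to dispatch the three assertions separately; the first two are short consequences of the exponent-measure formula~\eqref{eqn:bivariate_exponent_measure_general} (and use only that $W$ is a stationary simple max-stable field of the form~\eqref{Eq_InnovationField}), while the mixing statement follows from the same formula and a two-regime limit argument. For (a), I would write $\Theta_Z(\bm h, u) = V_{Z;\bm h, u}(1,1) = V_{W;\bm h - u\bm\tau}(1, a^{-u}) + (1 - a^u)$ using~\eqref{eqn:bivariate_exponent_measure_general}. Since $a^{-u}\ge 1$, the Fr\'echet--Hoeffding bound $\P\big(W(\bm 0)\le 1, W(\bm g)\le a^{-u}\big)\le \P\big(W(\bm 0)\le 1\big) = e^{-1}$ gives $V_{W;\bm g}(1, a^{-u})\ge 1$ for every $\bm g\in\R^2$, hence $\Theta_Z(\bm h, u)\ge 2 - a^u$ for all $\bm h$; equality is attained at $\bm h = u\bm\tau$, where $V_{W;\bm 0}(1, a^{-u}) = -\log \P\big(W(\bm 0)\le\min\{1, a^{-u}\}\big) = 1$, recovering the value $\Theta_Z(u\bm\tau, u) = 2 - a^u$ recorded below~\eqref{eqn:bivariate_exponent_measure_general_degeneratecase}. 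Thus $\inf_{\bm h\in\R^2}\Theta_Z(\bm h, u) = 2 - a^u$.

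For (b), I would apply~\eqref{eqn:bivariate_exponent_measure_general} with spatial lag $\bm h + u\bm\tau$ together with the identity $V_{W;\bm g}(z_1, z_2) = \E\big[\max\{Y(\bm 0)/z_1, Y(\bm g)/z_2\}\big]$ coming from~\eqref{Eq_InnovationField}, which yields $\Theta_Z(\bm h + u\bm\tau, u) = \E\big[\max\{Y(\bm 0), a^u Y(\bm h)\}\big] + (1 - a^u)$. The value at $u = 0$ is $\E\big[\max\{Y(\bm 0), Y(\bm h)\}\big] = V_{W;\bm h}(1,1) = \Theta_Z(\bm h, 0)$, so it suffices to show this expression does not decrease as $u$ grows. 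I would use the pointwise inequality $\max\{x, \alpha y\}\ge \max\{x, y\} - (1 - \alpha)y$, valid for $x, y\ge 0$ and $\alpha\in[0,1]$ (three cases, by the ordering of $x$, $\alpha y$, $y$); applying it with $\alpha = a^u$, $x = Y(\bm 0)$, $y = Y(\bm h)$ and taking expectations, using $\E[Y(\bm h)] = 1$, gives $\E\big[\max\{Y(\bm 0), a^u Y(\bm h)\}\big]\ge V_{W;\bm h}(1,1) - (1 - a^u)$, hence $\Theta_Z(\bm h + u\bm\tau, u)\ge \Theta_Z(\bm h, 0)$. I expect this to be the step requiring the most care: the tempting bound $\log(1 + x)\le x$ applied to the ratio of joint survival probabilities is too lossy and would already fail when $W$ has independent coordinates, so the sharper comparison above is needed.

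For the mixing claim, by the spatial and temporal stationarity of $Z$ it is enough to prove $V_{Z;\bm h_n, u_n}(z_1, z_2)\to 1/z_1 + 1/z_2$ for fixed $z_1, z_2 > 0$ and all sequences $(\bm h_n)\subset\R^2$, $(u_n)\subset\N$ with $\max\{\|\bm h_n\|, u_n\}\to\infty$ (the cases $z_1\le 0$ or $z_2\le 0$ being trivial since $Z$ has standard Fr\'echet margins); indeed $\P\big(Z(\bm 0, 0)\le z_1, Z(\bm h_n, u_n)\le z_2\big) = \exp\big(-V_{Z;\bm h_n, u_n}(z_1, z_2)\big)$ and each single-site probability equals $\exp(-1/z_i)$, so this gives the convergence required in Definition~\ref{def:mixing_in_space_and_time}. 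By~\eqref{eqn:bivariate_exponent_measure_general}, $V_{Z;\bm h_n, u_n}(z_1, z_2) = V_{W;\bm h_n - u_n\bm\tau}(z_1, a^{-u_n}z_2) + (1 - a^{u_n})/z_2$, and I would argue along an arbitrary subsequence, extracting a further subsequence on which either (i) $u_n\to\infty$, or (ii) $(u_n)$ is bounded, hence (after a further extraction) $u_n$ equals a fixed $u\in\N$ with $\|\bm h_n\|\to\infty$, so that $\|\bm h_n - u\bm\tau\|\to\infty$. In regime (i), $a^{-u_n}z_2\to\infty$ and the uniform bounds $1/z_1\le V_{W;\bm g}(z_1, c)\le 1/z_1 + 1/c$ (immediate from the identity above and $\E[Y] = 1$) force $V_{W;\bm h_n - u_n\bm\tau}(z_1, a^{-u_n}z_2)\to 1/z_1$, while $(1 - a^{u_n})/z_2\to 1/z_2$. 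In regime (ii), the mixing of $W$ applied to the cylinder events $\{W(\bm 0)\le z_1\}$ and $\{W(\bm h_n - u\bm\tau)\le a^{-u}z_2\}$ gives $V_{W;\bm h_n - u\bm\tau}(z_1, a^{-u}z_2)\to 1/z_1 + a^u/z_2$, and adding $(1 - a^u)/z_2$ again produces $1/z_1 + 1/z_2$. Since every subsequence has a further subsequence with the same limit, the whole sequence converges and $Z$ is space-time mixing; modulo (b)'s inequality, this last argument is a routine two-regime limit once~\eqref{eqn:bivariate_exponent_measure_general} is in hand.
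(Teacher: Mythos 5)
Your proof is correct, but it takes a genuinely different route from the paper's. For items (a) and (b) the paper works with the explicit Brown--Resnick form of the exponent measure: (a) is obtained from an integral representation of the Gaussian terms together with the identity $a^u\varphi\big({-t}+x+\tfrac{u\log a}{2x}\big)=e^{2tx}\varphi\big(t+x-\tfrac{u\log a}{2x}\big)$, and (b) by differentiating $u\mapsto\Theta_Z(\bm h+u\bm\tau,u)$ and applying the fundamental theorem of calculus. You instead argue directly from the general formula \eqref{eqn:bivariate_exponent_measure_general} combined with the spectral identity $V_{W;\bm g}(z_1,z_2)=\E\big[\max\{Y(\bm 0)/z_1,Y(\bm g)/z_2\}\big]$: the Fr\'echet--Hoeffding bound yields $V_{W;\bm g}(1,a^{-u})\geq 1$ for (a), and the pointwise inequality $\max\{x,\alpha y\}\geq\max\{x,y\}-(1-\alpha)y$ together with $\E[Y(\bm h)]=1$ yields (b); both steps are valid, and for the stated infimum claims the weak inequality plus attainment at $\bm h=u\bm\tau$ (resp.\ $u=0$) is all that is needed, so your argument in fact covers any simple max-stable innovation, not only Brown--Resnick. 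For the mixing claim the paper reduces to showing $\Theta_Z(\bm h_n,u_n)\to2$ via the rate comparison $\max\{u_n,\|\bm h_n-u_n\bm\tau\|\}\geq M_n/(\|\bm\tau\|+1)$ and items (a)--(b), implicitly invoking the standard fact that $2-\Theta\to0$ controls the covariance of threshold events for max-stable fields; you verify Definition~\ref{def:mixing_in_space_and_time} directly by computing the limit of $V_{Z;\bm h_n,u_n}(z_1,z_2)$ through a subsequence extraction and two regimes (for $u_n\to\infty$ the uniform bounds $1/z_1\leq V_{W;\bm g}(z_1,c)\leq 1/z_1+1/c$ suffice; for $u_n$ eventually constant with $\|\bm h_n-u_n\bm\tau\|\to\infty$ the mixing of $W$ applied to cylinder events suffices). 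The paper's route additionally delivers strict inequality in (a) and explicit Gaussian expressions that are exploited quantitatively in Lemma~\ref{lem:alpha_mixing}, whereas yours is more elementary, distribution-free for (a)--(b), and gives a self-contained verification of the mixing definition without the $\Theta\to2$ characterization; since Lemma~\ref{lem:alpha_mixing} only cites items (a) and (b), your proof is fully adequate for the downstream results.
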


\begin{proof}
We begin by proving Item~(a).
For fixed $u\geq 0$, let $\bm{h}\in \R^{2}\setminus \{u\bm{\tau }\}$ and define $x=\sqrt{\gamma(\bm{h}-u\bm{\tau }%
)/2}>0$. Then
\begin{align*}
\Theta_Z (\bm{h},u)=& \ V_{Z;\bm{h},u}(1,1)= \ \Phi \bigg(x-\frac{u\log a}{2x}\bigg) +a^{u}\Phi \bigg(x+\frac{u\log a}{2x}\bigg) +1-a^{u}.
\end{align*}
We exploit the identities
\begin{equation*}
1-\Phi \bigg(x-\frac{u\log a}{2x}\bigg)=\int_{0}^{\infty }\varphi \bigg(t+x-%
\frac{u\log a}{2x}\bigg)\,\mathrm{d} t
\end{equation*}%
and 
\begin{equation*}
a^{u}\Phi \bigg(x+\frac{u\log a}{2x}\bigg)=\int_{0}^{\infty }a^{u}\varphi %
\bigg(-t+x+\frac{u\log a}{2x}\bigg)\,\mathrm{d}t
\end{equation*}
to write
\begin{align*}
\Theta_Z (\bm{h},u)-\big(2-a^{u}\big)=& \ a^{u}\Phi \bigg(x+\frac{u\log a}{2x}%
\bigg)+\Phi \bigg(x-\frac{u\log a}{2x}\bigg)-1 \\
=& \ \int_{0}^{\infty }\bigg[a^{u}\varphi \bigg(-t+x+\frac{u\log a}{2x}\bigg)%
-\varphi \bigg(t+x-\frac{u\log a}{2x}\bigg)\bigg]\,\mathrm{d}t \\
>& \ 0.
\end{align*}%
The inequality holds since the integrand is positive for all $t>0$ due to
the identity 
\begin{equation}\label{eqn:identity_phi}
a^{u}\varphi \bigg(-t+x+\frac{u\log a}{2x}\bigg)= e^{2tx}\varphi \bigg(%
t+x-\frac{u\log a}{2x}\bigg),\qquad x\neq 0, 
\end{equation}
which can be shown by taking logarithms and expanding the squared trinomials.
Therefore, for any $\bm{h}\in \R^{2}\setminus\{u\bm \tau\}$, 
\begin{equation}\label{eqn:bound_2-au}
\Theta_Z (\bm{h},u)\geq 2-a^{u},  
\end{equation}%
and as seen previously in Section~\ref{Subsec_Model}, $\Theta_Z (u\bm \tau,u) = 2-a^{u}$. This proves Item~(a).

Item~(b) holds trivially if $\bm{h}=
\bm{0}$, in which case $\Theta_Z (\bm{0},0)=1$. Fix $\bm{h}\neq 
\bm{0}$ and redefine $x=\sqrt{\gamma(\bm{%
h})/2}$. Then for $u \geq 0$, we have 
\begin{align*}
\frac{\partial }{\partial u}\Theta_Z (\bm{h}+u\bm{\tau},u)=& -%
\frac{\log a}{2x}\varphi \bigg(x-\frac{u\log a}{2x}\bigg)+a^{u}\log a\Phi %
\bigg(x+\frac{u\log a}{2x}\bigg) \\
& +\frac{a^{u}\log a}{2x}\varphi \bigg(x+\frac{u\log a}{2x}\bigg)-a^{u}\log a
\\
\geq & \ \frac{\log a}{2x}\bigg\{a^{u}\varphi \bigg(x+\frac{u\log a}{2x}%
\bigg)-\varphi \bigg(x-\frac{u\log a}{2x}\bigg)\bigg\} \\
=& \ 0.
\end{align*}%
We remind the reader that the inequality holds since $%
\log a<0$, and $\Phi (\cdot )\leq 1$. The last equality follows from %
\eqref{eqn:identity_phi}. Finally, by the fundamental theorem of calculus, 
\begin{equation*}
\Theta_Z (\bm{h} + u\bm \tau,u)-\Theta_Z (\bm{h},0)=\int_{0}^{u}%
\frac{\partial }{\partial \tilde{u}}\Theta_Z \big(\bm{h}+\tilde{u}
\bm{\tau },\tilde{u}\big)\,\mathrm{d}\tilde{u}\geq 0,
\end{equation*}%
which proves (b).

Now, to see that $Z$ is space-time mixing, let $({\bm h}_n)_{n\geq 1} \in \R^2$ and $(u_n)_{n\geq 1}$ such that $M_n=\max\{ ||\bm{h}_n||,u_n\} \xrightarrow[n\to\infty]{}\infty$. It suffices to show that $\Theta_Z(\bm h_n, u_n)\xrightarrow[n\to\infty]{}2$.
Let $n\in\N$, and suppose that $u_n < M_n/(||\bm \tau|| + 1)$. Then,
\begin{align*}
    ||\bm h_n - u_n \bm\tau|| \geq ||\bm h_n|| - u_n ||\bm\tau|| > M_n - M_n\frac{||\bm\tau||}{||\bm \tau|| + 1} = \frac{M_n}{||\bm \tau|| + 1}.
\end{align*}
Thus, either $u_n \geq M_n/(||\bm \tau|| + 1)$ or $||\bm h_n - u_n \bm\tau|| \geq M_n/(||\bm \tau|| + 1)$. Hence,
\begin{equation}\label{eqn:same_rate_max}
\max\{u_n, ||\bm{h}_n - u_n \bm\tau||\} \geq \frac{M_n}{||\bm \tau|| + 1} \xrightarrow[n\to\infty]{}\infty.
\end{equation}
Now, by Items~(a) and (b),
$$\Theta_Z(\bm h_n, u_n) \geq \max\big\{2 - a^{u_n}, \Theta_Z(\bm{h}_n - u_n \bm\tau, 0)\big\}\xrightarrow[n\to\infty]{} 2,$$
since the innovation spatial field $W$ is mixing.
\end{proof}

\begin{proof}[Proof of Theorem \protect\ref{thm:consistency}]
We follow the method of proof demonstrated in \cite{davis2013b}. The pairwise likelihood
function defined in \eqref{eqn:pl-likelihoood} can be expressed as 
\begin{equation*}
\mathrm{PL}^{(m,T)}({\bm{\psi }})=\sum_{\bm{s}\in \mathcal{S}%
_{m}}\sum_{t=1}^{T}g_{r,p}(\bm{s},t;{\bm{\psi }})-\mathcal{R}%
^{(m,T)}({\bm{\psi }}),
\end{equation*}%
where 
\begin{equation}\label{eqn:grp_def}
g_{r,p}(\bm{s},t;{\bm{\psi }})=\sum_{\bm{h}\in \mathcal{H}%
_{r}}\sum_{u=1}^{p}\log f_{\bm{h},u}\big(Z(\bm{s},t),Z(\bm{s}+%
\bm{h},t+u);{\bm{\psi }}\big)
\end{equation}%
and 
\begin{align}
\mathcal{R}^{(m,T)}({\bm{\psi }})=& \ \sum_{\bm{s}\in
S_{m}}\sum_{t=1}^{T}\bigg(\sum_{\bm{h}\in \mathcal{H}_{r}}\sum 
_{\substack{ u=1  \\ t+u>T}}^{p}+\sum_{\substack{ \bm{h}\in \mathcal{H}%
_{r}  \\ \bm{s}+\bm{h}\notin \mathcal{S}_{m}}}\sum_{u=1}^{p}-\sum 
_{\substack{ \bm{h}\in \mathcal{H}_{r}  \\ \bm{s}+\bm{h}\notin 
\mathcal{S}_{m}}}\sum_{\substack{ u=1  \\ t+u>T}}^{p}\bigg)  \label{eqn:R^mt}
\\
& \ \log f_{\bm{h},u}\big(Z(\bm{s},t),Z(\bm{s}+\bm{h},t+u);{%
\bm{\psi }}\big).  \notag
\end{align}%
We continue by showing that in the limit as $m,T\rightarrow \infty $, 
\begin{equation}
\frac{1}{m^{2}T}\mathrm{PL}^{(m,T)}({\bm{\psi }})\overset{\mathrm{%
a.s.}}{\longrightarrow }\E[g_{ r, p}(\bm{s}_1, 1; {\bm\psi})],
\label{eqn:normalized_PL_converge}
\end{equation}%
uniformly on $\Psi _{\varepsilon }$. Lemmas \ref{lem:log_f_integrable} and %
\ref{lem:X_mixing} ensure that we can apply the uniform strong law of large numbers given by Theorem 2.7 in \cite%
{straumann2006} to the sequence $\big(g_{r,p}(\bm{s},t;{\bm{\psi 
}})\big)_{\bm{s}\in S_{m},t\in \{1,\ldots,T\}}$, which implies that as $m,T\rightarrow \infty $, 
\begin{equation*}
\frac{1}{m^{2}T}\sum_{\bm{s}\in \mathcal{S}_{m}}\sum_{t=1}^{T}g_{r,p}(%
\bm{s},t;{\bm{\psi }})\overset{\mathrm{a.s.}}{\longrightarrow }%
\E[g_{ r, p}(\bm{s}_1, 1; {\bm\psi})],
\end{equation*}%
uniformly on $\Psi _{\varepsilon }$. Equation %
\eqref{eqn:normalized_PL_converge} is then implied if 
\begin{equation}
\frac{1}{m^{2}T}\mathcal{R}^{(m,T)}({\bm{\psi }})\overset{\mathrm{%
a.s.}}{\longrightarrow }0,\qquad m,T\rightarrow \infty ,
\label{eqn:normalized_R_converge}
\end{equation}%
uniformly on $\Psi _{\varepsilon }$. Again, Theorem 2.7 in \cite%
{straumann2006} implies that there exists a constant $L<\infty $ such that 
\begin{equation*}
\frac{1}{m^{2}+mT}\mathcal{R}^{(m,T)}({\bm{\psi }})\overset{%
\mathrm{a.s.}}{\longrightarrow }L,\qquad m,T\rightarrow \infty ,
\end{equation*}%
uniformly on $\Psi _{\varepsilon }$, since the right-hand side of %
\eqref{eqn:R^mt} has on the order of $m^2+mT$ terms. Therefore, %
\eqref{eqn:normalized_R_converge} and \eqref{eqn:normalized_PL_converge}
hold. The convergence is uniform on $\Psi _{\varepsilon }$, so $\hat{{%
\bm{\psi }}}$ as defined in \eqref{eqn:pl-estimator} converges almost
surely to the ${\bm{\psi }}\in \Psi _{\varepsilon }$ that maximizes $%
\E[g_{ r, p}(\bm{s}_1, 1;
{\bm\psi})]$. 

Finally, an application of Jensen's inequality shows
that the true parameter vector ${\bm{\psi }}^{\star }$ for the random field $Z$
is the unique maximizer of $%
\E[g_{ r, p}(\bm{s}_1, 1;
{\bm\psi})]$. Indeed, for any ${\bm{\psi }}\in \Psi _{\varepsilon }$, 
\begin{align*}
\E[g_{ r, p}(\bm{s}_1, 1; {\bm\psi})]-\E[g_{ r, p}(\bm{s}_1,
1; {\bm\psi}^\star)]& =\sum_{\bm{h}\in \mathcal{H}_{r}}\sum_{u=1}^{p}%
\E\bigg[\log \bigg(\frac{f_{\bm{h},u}(Z_{1},Z_{2};{\bm{\psi }})}{%
f_{\bm{h},u}(Z_{1},Z_{2};{\bm{\psi }}^{\star })}\bigg)\bigg] \\
& \leq \sum_{\bm{h}\in \mathcal{H}_{r}}\sum_{u=1}^{p}\log \E\bigg[\frac{%
f_{\bm{h},u}(Z_{1},Z_{2};{\bm{\psi }})}{f_{\bm{h}%
,u}(Z_{1},Z_{2};{\bm{\psi }}^{\star })}\bigg] \\
& =\sum_{\bm{h}\in \mathcal{H}_{r}}\sum_{u=1}^{p}\log (1) \\
& =0,
\end{align*}%
where $Z_{1}=Z(\bm{s}_{1},1)$, and $Z_{2}=Z(\bm{s}_{1}+\bm{h}%
,1+u)$ are used for shorthand. Equality in Jensen's inequality holds if and
only if $f_{\bm{h},u}(Z_{1},Z_{2};{\bm{\psi }})=f_{\bm{h}%
,u}(Z_{1},Z_{2};{\bm{\psi }}^{\star })$ almost surely, which is
equivalent to ${\bm{\psi }}={\bm{\psi }}^{\star }$ by
identifiability. Therefore, ${\bm{\psi }}^{\star }$ is the unique
maximizer of $\E[g_{ r,
p}(\bm{s}_1, 1; {\bm\psi})]$, and thus the unique limiting value
of $\hat{{\bm{\psi }}}$ almost surely.
\end{proof}


\subsection{Asymptotic normality of the pairwise likelihood estimator}

\label{sec:pwle_assnorm}

We now study the asymptotic distribution of $\hat{{\bm{\psi}}}$ as $%
m,T\rightarrow\infty$. To show that a central limit theorem applies in our
setting, it is important to understand the rate at which dependence is lost
between two space-time points as they are separated in either space or time.
This information is encoded in the \textit{$\alpha$-mixing coefficients},
which are defined for the space-time field in \cite{davis2013b} as follows.

Define the distances 
\begin{equation}
d_{\infty}\big((\bm{s}_{1},t_{1}),(\bm{s}_{2},t_{2})\big)=\max \big\{||%
\bm{s}_{2}-\bm{s}_{1}||,|t_{2}-t_{1}|\big\},\qquad \bm{s}%
_{1},\bm{s}_{2}\in \mu\Z^{2},\ t_{1},t_{2}\in \N  \label{eqn:d_prime}
\end{equation}%
and 
\begin{equation*}
d(\Lambda _{1},\Lambda _{2})=\inf \big(d_{\infty}(\rho _{1},\rho _{2}):\rho
_{1}\in \Lambda _{1},\rho _{2}\in \Lambda _{2}\big),\qquad \Lambda
_{1},\Lambda _{2}\subset \mu\Z^{2}\times \N.
\end{equation*}%
Further, let $\mathcal{F}_{\Lambda_i }$ denote the $\sigma $-algebra generated
by $\{Z(\bm{s},t):(\bm{s},t)\in \Lambda_i \}$, $i=1,2$. Then for $n\in \N$ and $%
k,l\in \N\cup \{\infty \}$, the $\alpha$-mixing coefficients for $Z$ are defined as 
\begin{equation*}
\alpha _{k,l}(n)=\sup \{|\P (A_{1}\cap A_{2})-\P (A_{1})\P (A_{2})|:A_{i}\in 
\mathcal{F}_{\Lambda _{i}},|\Lambda _{1}|\leq k,|\Lambda _{2}|\leq
l,d(\Lambda _{1},\Lambda _{2})\geq n\}.
\end{equation*}

For a measurable function $h$, if $h\big(Z(\bm{s}_{1},1)\big)$ obeys
some specific moment conditions and the $\alpha $-mixing coefficients
decay sufficiently fast with $n$, then a central limit theorem can be
applied to samples of $h\big(Z(\bm{s},t)\big)$ at regularly spaced
intervals in space and time. Inspired by \cite{davis2013b}, we show that a
central limit theorem applies to the random field 
\begin{equation}
\{\nabla _{\bm{\psi }}g_{r,p}(\bm{s},t;{\bm{\psi }}%
^{\star })\}_{\bm{s}\in \mu\Z^{2},t\in \N},  \label{eqn:score}
\end{equation}%
which we use to show the asymptotic normality of $\hat{\bm{\psi 
}}$.

\begin{theo}
\label{thm:normality} Suppose Assumption \ref{ass:case1} holds. Then the pairwise likelihood
estimator $\hat{\bm{\psi }}$ defined in \eqref{eqn:pl-estimator} is
asymptotically normal in the sense that 
\begin{equation*}
(m^{2}T)^{1/2}(\hat{\bm{\psi }}-{\bm{\psi }}^{\star })%
\xrightarrow{\mathrm{d}}\mathcal{N}\big(0,F^{-1}\Sigma (F^{-1})^{\prime }%
\big),\qquad m,T\rightarrow \infty ,
\end{equation*}%
where 
\begin{equation*}
F=\E[-\nabla_{\bm\psi}^2 g_{ r, p}(\bm{s}_1, 1;
{\bm\psi}^\star)]
\end{equation*}%
and 
\begin{equation}
\Sigma =\sum_{\bm{s}\in \mu\Z^{2}}\sum_{t\in \N}\mathrm{Cov}[\nabla _{%
\bm{\psi }}g_{r,p}(\bm{s}_{1},1;{\bm{\psi }}^{\star
}),\nabla _{\bm{\psi }}g_{r,p}(\bm{s},t;{\bm{\psi }}%
^{\star })].  \label{eqn:big_sigma}
\end{equation}
\end{theo}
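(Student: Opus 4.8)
The plan is to run the classical M-estimator argument for composite likelihood on a lattice, in the spirit of \cite{davis2013b}. Since $\hat{\bm\psi}$ maximises $\mathrm{PL}^{(m,T)}$ over the compact set $\Psi_\varepsilon$ and, for $\varepsilon$ small enough, ${\bm\psi}^\star$ lies in the interior of $\Psi_\varepsilon$, Theorem~\ref{thm:consistency} guarantees that eventually $\nabla_{\bm\psi}\mathrm{PL}^{(m,T)}(\hat{\bm\psi})=\bm 0$, and a first-order Taylor expansion of the score about ${\bm\psi}^\star$ gives, for some $\bar{\bm\psi}$ on the segment joining $\hat{\bm\psi}$ and ${\bm\psi}^\star$,
\[
(m^2T)^{1/2}\big(\hat{\bm\psi}-{\bm\psi}^\star\big)=-\Big[\tfrac{1}{m^2T}\nabla^2_{\bm\psi}\mathrm{PL}^{(m,T)}(\bar{\bm\psi})\Big]^{-1}\,\tfrac{1}{(m^2T)^{1/2}}\nabla_{\bm\psi}\mathrm{PL}^{(m,T)}({\bm\psi}^\star).
\]
The statement then follows from Slutsky's theorem once we establish: (i) a central limit theorem for the normalised score $\tfrac{1}{(m^2T)^{1/2}}\nabla_{\bm\psi}\mathrm{PL}^{(m,T)}({\bm\psi}^\star)$; (ii) $\tfrac{1}{m^2T}\nabla^2_{\bm\psi}\mathrm{PL}^{(m,T)}(\bar{\bm\psi})\overset{\mathrm{a.s.}}{\longrightarrow}-F$; and (iii) invertibility of $F$.

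For (ii) and (iii) --- the more routine parts --- I would reuse the decomposition $\mathrm{PL}^{(m,T)}=\sum_{\bm s,t}g_{r,p}(\bm s,t;\cdot)-\mathcal R^{(m,T)}$ from the proof of Theorem~\ref{thm:consistency}. Differentiating the bivariate densities with the formulas of Section~\ref{sec:density_function} and bounding $\nabla^2_{\bm\psi}g_{r,p}(\bm s_1,1;{\bm\psi})$ in modulus, uniformly in ${\bm\psi}\in\Psi_\varepsilon$, by a fixed polynomial in $\log Z$ and $1/Z$ (which has moments of all orders), exactly as in Lemma~\ref{lem:log_f_integrable}, yields uniform integrability of the second-derivative field; the space-time mixing of $Z$ (Lemma~\ref{lem:X_mixing}) then permits Theorem~2.7 of \cite{straumann2006} to conclude $\tfrac{1}{m^2T}\sum_{\bm s,t}\nabla^2_{\bm\psi}g_{r,p}(\bm s,t;{\bm\psi})\to\E[\nabla^2_{\bm\psi}g_{r,p}(\bm s_1,1;{\bm\psi})]$ almost surely and uniformly over $\Psi_\varepsilon$, while $\tfrac{1}{m^2T}\nabla^2_{\bm\psi}\mathcal R^{(m,T)}\to\bm 0$ uniformly because $\mathcal R^{(m,T)}$ has $O(m^2+mT)=o(m^2T)$ terms. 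Consistency of $\bar{\bm\psi}$ (from Theorem~\ref{thm:consistency}) and continuity then give (ii) with $F=\E[-\nabla^2_{\bm\psi}g_{r,p}(\bm s_1,1;{\bm\psi}^\star)]$. For (iii), each term $\E[-\nabla^2_{\bm\psi}\log f_{\bm h,u}(Z_1,Z_2;{\bm\psi}^\star)]$ is the Fisher information of the bivariate sub-model (the pair has the absolutely continuous density $f_{\bm h,u}$ since ${\bm\tau}^\star\neq{\bm h}/u$) and is positive semidefinite; hence so is $F$, and if $\bm v^{\prime}F\bm v=0$ then $\bm v^{\prime}\nabla_{\bm\psi}\log f_{\bm h,u}(Z_1,Z_2;{\bm\psi}^\star)=0$ a.s.\ for every $({\bm h},u)\in\mathcal H_r\times\{1,\dots,p\}$, which by the identifiability recorded in Section~\ref{sec:pwle} forces $\bm v=\bm 0$.

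The crux, and the expected main obstacle, is (i): a CLT for the stationary, mean-zero (the pairwise first Bartlett identity makes $\E[\nabla_{\bm\psi}g_{r,p}(\bm s,t;{\bm\psi}^\star)]=\bm 0$) random field $\{\nabla_{\bm\psi}g_{r,p}(\bm s,t;{\bm\psi}^\star)\}$ on the three-dimensional lattice $\mu\Z^2\times\N$. I would invoke a CLT for $\alpha$-mixing random fields (in the spirit of \cite{davis2013b}), for which one needs (a) finite $(2+\delta)$ moments of $\nabla_{\bm\psi}g_{r,p}(\bm s_1,1;{\bm\psi}^\star)$ --- again immediate from the Section~\ref{sec:density_function} bounds --- and (b) a \emph{quantitative} decay rate of the $\alpha$-mixing coefficients $\alpha_{k,l}(n)$ for $k,l$ bounded by the size of the fixed space-time window on which $\nabla_{\bm\psi}g_{r,p}$ depends (Lemma~\ref{lem:X_mixing} only gives $\alpha_{k,l}(n)\to 0$, not a rate). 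For (b), the standard reduction for max-stable fields bounds $\alpha_{k,l}(n)$ by a $(k,l)$-dependent constant times $\sup\{\,2-\Theta_Z({\bm h},u):\max(\|{\bm h}\|,u)\ge n\,\}$; items (a) and (b) of Lemma~\ref{lem:X_mixing} then give $2-\Theta_Z({\bm h},u)\le\min\{a^u,\,2-\Theta_Z({\bm h}-u{\bm\tau},0)\}$, and since $2-\Theta_Z({\bm v},0)=2\big(1-\Phi(\sqrt{\gamma({\bm v})/2})\big)$ decays at the stretched-exponential rate $\exp(-c\|{\bm v}\|^{2H})$ while $a^u$ decays geometrically, the bound~\eqref{eqn:same_rate_max} of Lemma~\ref{lem:X_mixing} shows that $\alpha_{k,l}(n)$ vanishes faster than any power of $n$. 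This comfortably verifies any polynomial moment--mixing summability condition (e.g.\ $\sum_{n\ge1}n^{2}\alpha_{k,l}(n)^{\delta/(2+\delta)}<\infty$) and, via a Davydov-type covariance bound, the absolute convergence of the series $\Sigma$ in~\eqref{eqn:big_sigma}, yielding $\tfrac{1}{(m^2T)^{1/2}}\sum_{\bm s,t}\nabla_{\bm\psi}g_{r,p}(\bm s,t;{\bm\psi}^\star)\xrightarrow{\mathrm d}\mathcal N(\bm 0,\Sigma)$; the boundary correction $\tfrac{1}{(m^2T)^{1/2}}\nabla_{\bm\psi}\mathcal R^{(m,T)}({\bm\psi}^\star)$ is a sum of $O(m^2+mT)$ mean-zero mixing terms, hence $O_p\big((m^2+mT)^{1/2}/(m^2T)^{1/2}\big)=o_p(1)$. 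The steps I expect to demand the most care are making the passage from set probabilities to pairwise extremal coefficients (hence the $\alpha$-mixing bound for the derived score field) fully rigorous with advection present, and aligning the precise moment and mixing hypotheses with the particular lattice CLT invoked --- both considerably eased by the advection-specific estimate~\eqref{eqn:same_rate_max} already in hand.
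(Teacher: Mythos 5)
Your proposal is correct and follows essentially the same route as the paper: the Taylor/sandwich expansion of the score at ${\bm\psi}^\star$, a uniform strong law (Straumann--Mikosch) for the Hessian with the boundary term $\mathcal R^{(m,T)}$ of order $O(m^2+mT)$ negligible, a Bolthausen-type lattice CLT for the score field whose mixing-rate input is exactly the Dombry--Eyi-Minko extremal-coefficient bound combined with items (a)--(b) of Lemma~\ref{lem:X_mixing} and~\eqref{eqn:same_rate_max} (this is precisely the content of the paper's Lemma~\ref{lem:alpha_mixing}), and moment bounds from the Section~\ref{sec:density_function} derivative expressions (Lemma~\ref{lem:score_L3}). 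The only substantive addition is your explicit positive-definiteness/identifiability argument for the invertibility of $F$, which the paper leaves implicit.
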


Even though $\nabla _{\bm{\psi }}g_{r,p}(\bm{s},t;{\bm{\psi 
}})$ is a function of $Z$ observed at multiple space-time points, the $
\alpha $-mixing coefficients for the field defined in \eqref{eqn:score}
decay at the same rate as the $\alpha $-mixing coefficients for $Z$ with
suitably rescaled values of $k$ and $l$, since the $\sigma$-algebra generated by $\{\nabla _{\bm{\psi }%
}g_{r,p}(\bm{s},t;{\bm{\psi }})\}$ is contained in $\mathcal{F}_{\Lambda
} $ for some $\Lambda \subseteq \mu\Z^{2}\times \N$. Therefore, the asymptotic
behavior of the $\alpha $-mixing coefficients for $\nabla _{\bm{%
\psi }}g_{r,p}(\bm{s},t;{\bm{\psi }})$ can be completely
understood from the following lemma.

\begin{lemm}
\label{lem:alpha_mixing} Under the conditions of Theorem \ref{thm:normality}%
, there exists a constant $\varepsilon >0$ such that the $\alpha $-mixing coefficients
for $Z$ satisfy 
\begin{equation}\label{eqn:statement_of_lemma_3}
\liminf_{n\to\infty} \frac{-\log \alpha _{k,l}(n)}{n^{\min\{2H^\star, 1\}}} > \varepsilon,
\end{equation}%
for all $k\in \N$ and $l\in \N\cup \{\infty \}$, where $H^\star$ is the Hurst index of the semivariogram (see Section~\ref{Sec_CaseStudy_Performance}).
\end{lemm}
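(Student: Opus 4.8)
The plan is to reduce the multivariate mixing coefficient $\alpha_{k,l}(n)$ to a sum of bivariate cross‑dependence terms, and to control each of those through the decay of the bivariate extremal coefficient $2-\Theta_Z$ that Lemma~\ref{lem:X_mixing} already provides. The first step is the bivariate estimate. Combining items~(a) and~(b) of Lemma~\ref{lem:X_mixing} with the spatial Brown--Resnick identity $\Theta_W(\bm v)=2\Phi\big(\sqrt{\gamma(\bm v)/2}\big)$, one obtains, for every $\bm h\in\R^2$ and $u\in\N$ (the degenerate case $\bm h=u\bm\tau^\star$ being covered by the first term),
\begin{equation*}
2-\Theta_Z(\bm h,u)\ \le\ \min\Big\{\,a^{\star\,u}\,,\ 2\,\overline{\Phi}\big(\sqrt{\gamma(\bm h-u\bm\tau^\star)/2}\big)\Big\}\ \le\ \min\Big\{\,e^{-u\,|\log a^\star|}\,,\ e^{-c_1\,(\|\bm h-u\bm\tau^\star\|/\kappa^\star)^{2H^\star}}\Big\},
\end{equation*}
where $\overline{\Phi}=1-\Phi$ and $c_1>0$ is absolute, using the Gaussian tail bound $\overline{\Phi}(y)\le\tfrac12 e^{-y^2/2}$. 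Writing $m=\max\{\|\bm h\|,u\}$, the inequality obtained as in \eqref{eqn:same_rate_max} gives $\max\{u,\|\bm h-u\bm\tau^\star\|\}\ge m/(\|\bm\tau^\star\|+1)$; substituting whichever of $u$ or $\|\bm h-u\bm\tau^\star\|$ realizes that lower bound into the matching branch above, and comparing the two resulting decay rates, yields a constant $c_0>0$ depending only on $\bm\psi^\star$ such that $2-\Theta_Z(\bm h,u)\le e^{-c_0 m^{\min\{2H^\star,1\}}}$ for all $m$ large enough. The exponent $\min\{2H^\star,1\}$ originates precisely here: when $H^\star\le\tfrac12$ the stretched‑exponential spatial decay (rate $m^{2H^\star}$) is the bottleneck, whereas when $H^\star>\tfrac12$ it is faster than the exponential temporal decay $a^{\star\,u}$, so the overall rate saturates at $m^{1}$.

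The second step passes from the bivariate bound to general $(k,l)$. Here I would follow the max‑stable coupling argument underlying \cite{davis2013b}: building $Z$ over the finite index set $\Lambda_1\cup\Lambda_2$ from its spectral representation (equivalently, from the i.i.d.\ innovation fields $W_t$), the restrictions $Z|_{\Lambda_1}$ and $Z|_{\Lambda_2}$ are functions of the Poisson atoms that are extremal over $\Lambda_1$, resp.\ over $\Lambda_2$, and on the event that no single atom is extremal at a point of $\Lambda_1$ \emph{and} at a point of $\Lambda_2$ one may replace $Z|_{\Lambda_2}$ by an independent copy. This yields
\begin{equation*}
\alpha(\mathcal F_{\Lambda_1},\mathcal F_{\Lambda_2})\ \le\ \beta(\mathcal F_{\Lambda_1},\mathcal F_{\Lambda_2})\ \le\ C\sum_{\rho_1\in\Lambda_1}\sum_{\rho_2\in\Lambda_2}\big(2-\Theta_Z(\rho_1-\rho_2)\big),
\end{equation*}
first for finite $\Lambda_1,\Lambda_2$ and then, by monotone approximation of $\mathcal F_{\Lambda_2}$ (both sides being continuous from below in $\Lambda_2$), also for $|\Lambda_2|=\infty$. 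Since $d(\Lambda_1,\Lambda_2)\ge n$ forces every cross pair to have $d_\infty$‑lag at least $n$, while a $d_\infty$‑shell of radius $m$ in the grid $\mu\Z^2\times\N$ has $O(m^2)$ points, the first step gives
\begin{equation*}
\alpha_{k,l}(n)\ \le\ C\,k\sum_{m\ge n}O(m^2)\,e^{-c_0 m^{\min\{2H^\star,1\}}}\ \le\ C'\,k\,n^{p_0}\,e^{-c_0 n^{\min\{2H^\star,1\}}}
\end{equation*}
for some exponent $p_0$ and all large $n$; hence $\liminf_{n\to\infty}\big(-\log\alpha_{k,l}(n)\big)\big/n^{\min\{2H^\star,1\}}\ge c_0>0$, and one takes $\varepsilon=c_0/2$.

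The first step is essentially arithmetic, so the main obstacle is the second one: showing that the $\alpha$‑mixing coefficient --- a supremum over \emph{all} events of the generated $\sigma$‑algebras, not merely orthant events --- is controlled by the bivariate cross‑terms, uniformly over arbitrary and possibly countably infinite $\Lambda_2$. This is handled most cleanly by bounding the larger $\beta$‑mixing coefficient via the explicit coupling above, and by exploiting that $2-\Theta_Z$ decays stretched‑exponentially whereas the grid shells grow only polynomially, which keeps the cross‑term series convergent at the asserted rate.
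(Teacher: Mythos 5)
Your proposal is correct and follows essentially the same route as the paper: bound $2-\Theta_Z(\bm h,u)$ by $\min\{(a^\star)^u,\,2\overline{\Phi}(\sqrt{\gamma(\bm h-u\bm\tau^\star)/2})\}$ via items (a) and (b) of Lemma~\ref{lem:X_mixing}, use the inequality \eqref{eqn:same_rate_max} to transfer $\max\{\|\bm h\|,u\}\geq n$ into a lower bound on $\max\{u,\|\bm h-u\bm\tau^\star\|\}$, and reduce $\alpha_{k,l}(n)$ to sums of these bivariate quantities. The only difference is that the paper invokes Corollary 2.2 of \cite{dombry2012} directly for that reduction, whereas you sketch a re-derivation of the same inequality through a $\beta$-mixing coupling of the extremal atoms, which is the argument underlying that cited result.
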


\begin{proof}
We use Corollary 2.2 in \cite{dombry2012} to bound the $\alpha $-mixing
coefficients for $Z$ as follows:
\begin{align}
\alpha _{k,l}(n)& \leq 2kl\sup \big\{2-\Theta_Z (\bm{h},u):\max \{||\bm{h}%
||,u\}\geq n\big\}, \label{eqn:first_dombry_eyi}\\
\alpha _{k,\infty }(n)& \leq 2k^2\sum_{m=n}^{\infty }N(m)\sup
\big\{2-\Theta_Z (\bm{h},u):\max \{||\bm{h}||,u\}\geq m\big\},\label{eqn:second_dombry_eyi}
\end{align}%
where $N(m)$ is the number of points in $\mu\Z^2\times \N$ whose distance to the origin is between $m$ and $m+1$, which is of the order $\mathcal{O}(m^2)$. The inequality in~\eqref{eqn:second_dombry_eyi} holds because for any $\Lambda \subseteq \mu\Z%
^{2}\times \N$ such that $|\Lambda |=k$, there are at most $kN(m)$
points in $\mu\Z^{2}\times \N$ whose distance to the closest point in $\Lambda $ is between $m$ and $m+1$.

Using the same arguments as those leading to~\eqref{eqn:same_rate_max}, we can show that, for any $n \in \N$,
$$\max\{||\bm h||, u\} \geq n \Longrightarrow \max\{||\bm h - u\bm\tau^\star||, u\} \geq \frac{n}{1 + ||\bm \tau^\star||},$$
so the supremum in~\eqref{eqn:first_dombry_eyi} and~\eqref{eqn:second_dombry_eyi} can be increased as follows:
\begin{align}
\alpha _{k,l}(n)& \leq 2kl\sup \left\{2-\Theta_Z (\bm{h},u):\max\{||\bm h - u\bm\tau^\star||, u\} \geq \frac{n}{1 + ||\bm \tau^\star||}\right\},\label{eqn:weaker_condition_sup}\\
\alpha _{k,\infty }(n)& \leq 2k^2\sum_{m=n}^{\infty }N(m)\sup
\left\{2-\Theta_Z (\bm{h},u):\max\{||\bm h - u\bm\tau^\star||, u\} \geq \frac{m}{1 + ||\bm \tau^\star||}\right\}.\nonumber
\end{align}%

By Item~(b) in Lemma~\ref{lem:X_mixing}, one has for all $\bm h\in \R^2$ and $u \geq 0$,
\begin{equation}
\Theta_Z (\bm{h},u)\geq \Theta_Z (\bm{h}-u\bm{\tau^\star },0)=2\Phi %
\bigg(\sqrt{\frac{\gamma(\bm{h}-u\bm{%
\tau^\star })}{2}}\bigg)>2\bigg(1-\frac{e^{-\gamma (\bm{%
h}-u\bm{\tau^\star })/4}}{\sqrt{\pi \gamma (\bm{%
h}-u\bm{\tau^\star })}}\bigg), \label{eqn:bound_phi_tail}
\end{equation}%
where the last inequality holds since for any $x\in \R$, 
\begin{equation*}
1-\Phi (x)<\frac{e^{-x^{2}/2}}{x\sqrt{2\pi }}.
\end{equation*}
Likewise, Item~(a) in Lemma~\ref{lem:X_mixing} provides $\Theta_Z(\bm h, u) \geq 2 - (a^\star)^u$. This, combined with~\eqref{eqn:bound_phi_tail}, yields
$$2 - \Theta_Z(\bm h, u) \leq \min\left \{ (a^\star)^u, 2\frac{e^{-\gamma(\bm{%
h}-u\bm{\tau^\star })/4}}{\sqrt{\pi \gamma (\bm{%
h}-u\bm{\tau^\star })}} \right \}.$$
To summarize, for $\bm{h}$ and $u$ satisfying the condition in~\eqref{eqn:weaker_condition_sup}, i.e., $\max\{||\bm h - u\bm\tau^\star||, u\} \geq n /(1 + ||\bm \tau^\star||)$, then it holds that
\begin{equation}\label{eqn:huge}
2 - \Theta_Z(\bm h, u) \leq \max\left\{(a^\star)^{n /(1 + ||\bm \tau^\star||)},
2\,\frac{\exp\left(-\frac14\left(\frac{n}{(1 + ||\bm \tau^\star||)\kappa^\star}\right)^{2H^\star}\right)}{\sqrt{\pi} \left(\frac{n}{(1 + ||\bm \tau^\star||)\kappa^\star}\right)^{H^\star}}
\right\},
\end{equation}
by replacing the semivariogram by its expression. This effectively bounds the $\alpha$-mixing coefficients.
Both expressions on the right-hand side of~\eqref{eqn:huge} tend to 0 as $n \to \infty$, the slower of which determines the rate at which the $\alpha$-mixing coefficients tend to 0. Plugging back the left hand-side of~\eqref{eqn:huge} into~\eqref{eqn:first_dombry_eyi} and~\eqref{eqn:second_dombry_eyi}, one finds that the rate is dominated by the exponential decay, and upon taking the negative logarithms of each expression, one obtains~\eqref{eqn:statement_of_lemma_3}.
\end{proof}

Next, we prove some moment conditions that will be essential in the proof of
Theorem \ref{thm:normality}.

\begin{lemm}
\label{lem:score_L3} Suppose Assumption \ref{ass:case1} holds. Then for any $%
q>0$, 
\begin{equation}
\E[||\nabla_{\bm\psi} g_{ r, p}(\bm{s}_1, 1;
{\bm\psi}^\star)||^q]<\infty  \label{eqn:third_moment_score}
\end{equation}%
and 
\begin{equation*}
\E[\sup_{{\bm\psi}\in\Psi_\varepsilon}||\nabla_{\bm\psi}^2 g_{ r,
p}(\bm{s}_1, 1; {\bm\psi})||]<\infty .
\end{equation*}
\end{lemm}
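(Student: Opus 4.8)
The plan is to reduce both moment bounds to tail estimates on the random variables $Z(\bm 0,0)$ and $Z(\bm h,u)$, exploiting that $1/Z$ has exponential margins (equivalently, $\log Z$ has Gumbel margins and $1/Z$ has finite moments of all orders). Since $g_{r,p}(\bm s_1,1;\bm\psi)$ is a finite sum (over $\bm h\in\mathcal H_r$, $u\in\{1,\dots,p\}$) of terms $\log f_{\bm h,u}(Z(\bm s_1,1),Z(\bm s_1+\bm h,1+u);\bm\psi)$, it suffices by the triangle inequality and Minkowski to bound, for each fixed pair $(\bm h,u)$, the $L^q$ norm of $\nabla_{\bm\psi}\log f_{\bm h,u}(Z_1,Z_2;\bm\psi^\star)$ and the expected supremum over $\Psi_\varepsilon$ of $\|\nabla^2_{\bm\psi}\log f_{\bm h,u}(Z_1,Z_2;\bm\psi)\|$, where $Z_1=Z(\bm s_1,1)$ and $Z_2=Z(\bm s_1+\bm h,1+u)$. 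Note $(Z_1,Z_2)$ has the bivariate law with exponent measure~\eqref{eqn:bivariate_exponent_measure_BR}, and since $\bm\tau^\star\neq\bm h/u$ under Assumption~\ref{ass:case1}, the density~\eqref{eqn:log_f_in_appendix} is the genuine absolutely continuous bivariate density (no Dirac mass).

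First I would differentiate $\log f_{\bm h,u}=V+\log(V_1V_2-V_{12})$ with respect to $\bm\psi$ using the chain rule, the dependence on $\bm\psi$ entering only through $a$ and through $\gamma(\bm h-u\bm\tau)$; the partial derivatives of $q_1,q_2$ with respect to $\gamma$ and $a$ recorded in Section~\ref{sec:density_function} are exactly the ingredients needed, together with $\partial\gamma/\partial\kappa$ and $\partial\gamma/\partial H$, which on $\Psi_\varepsilon$ (where $\|\bm h-u\bm\tau\|\geq\varepsilon$, $\kappa\in[\varepsilon,\varepsilon^{-1}]$, $H\in[\varepsilon,1-\varepsilon]$, $a\in[\varepsilon,1-\varepsilon]$) are bounded and bounded away from $0$, uniformly. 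The key structural observation, already flagged in Section~\ref{sec:density_function}, is that the first and second $\bm\psi$-partials of $f_{\bm h,u}$ are bounded above on $\Psi_\varepsilon$; dividing by $f_{\bm h,u}$ (which for fixed $(z_1,z_2)$ is bounded below away from $0$ on the compact $\Psi_\varepsilon$) and using $\nabla\log f=\nabla f/f$ and the analogous quotient rule for $\nabla^2\log f$, one sees that $\|\nabla_{\bm\psi}\log f_{\bm h,u}(z_1,z_2;\bm\psi)\|$ and $\sup_{\bm\psi\in\Psi_\varepsilon}\|\nabla^2_{\bm\psi}\log f_{\bm h,u}(z_1,z_2;\bm\psi)\|$ are dominated by a deterministic function of $(z_1,z_2)$ of the same polynomial-in-$(\log z_i+1/z_i)$ type as the bound~\eqref{eqn:logx_bound} used in Lemma~\ref{lem:log_f_integrable}. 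Concretely, I would show that both quantities are bounded by $C\big(1+(\log z_1+1/z_1)^{k}\big)\big(1+(\log z_2+1/z_2)^{k}\big)$ for some constants $C,k$ depending only on $\varepsilon,r,p$ — the extra factors picked up from differentiating (powers of $1/z_i$ from $\partial q/\partial z_i$, and the factor $q_i$ already present in $V_{12}$) only raise the exponent $k$, not the type of the bound.

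With this deterministic envelope in hand, the conclusion follows exactly as in Lemma~\ref{lem:log_f_integrable}: raise to the $q$-th power (for the first claim) or leave as is (for the second), apply Hölder/Cauchy--Schwarz to split the product over $z_1$ and $z_2$, and invoke that $\E[(\log Z_i+1/Z_i)^{m}]<\infty$ for every $m\geq 0$ because $\log Z_i$ is Gumbel and $1/Z_i$ is exponential, both having finite moments of all orders. This gives~\eqref{eqn:third_moment_score} for arbitrary $q>0$ and the finiteness of $\E[\sup_{\bm\psi\in\Psi_\varepsilon}\|\nabla^2_{\bm\psi}g_{r,p}(\bm s_1,1;\bm\psi)\|]$. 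The main obstacle is the bookkeeping in the second bound: one must verify that the supremum over $\Psi_\varepsilon$ can be pulled inside a single deterministic envelope uniformly in $\bm\psi$ — this requires checking that all the quantities appearing in $V_1,V_2,V_{12}$ and their $\bm\psi$-derivatives ($\gamma(\bm h-u\bm\tau)$, $1/\gamma$, $a$, $1/a$, and $f_{\bm h,u}$ itself bounded below) are controlled uniformly on the compact set $\Psi_\varepsilon$, which is precisely why the $\varepsilon$-separation condition $\|\bm\tau-\bm h/u\|\geq\varepsilon$ was built into $\Psi_\varepsilon$ in~\eqref{eqn:psi_epsilon}. Once that uniformity is established, the probabilistic content is routine.
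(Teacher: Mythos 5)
Your proposal is correct and follows essentially the same route as the paper: reduce to a single pair $(\bm h,u)$, dominate $\|\nabla_{\bm\psi}\log f_{\bm h,u}\|$ and $\sup_{\bm\psi\in\Psi_\varepsilon}\|\nabla^2_{\bm\psi}\log f_{\bm h,u}\|$ by a deterministic envelope of the type $\big(\log z_1+1/z_1\big)^{k}\big(\log z_2+1/z_2\big)^{k}$ uniformly on $\Psi_\varepsilon$ (the paper organizes this bookkeeping through terms of the form \eqref{eqn:K1to5}, noting that $\bm\psi$-differentiation leaves the Gaussian factor's power unchanged and only raises the logarithmic and power exponents), and then conclude via H\"older's inequality and the finiteness of all moments of the Gumbel and exponential margins of $\log Z$ and $1/Z$. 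The one step you should tighten is the intermediate remark that the envelope follows from the $\bm\psi$-partials of $f_{\bm h,u}$ being bounded above and $f_{\bm h,u}$ being bounded below on the compact $\Psi_\varepsilon$ for fixed $(z_1,z_2)$ --- that gives only pointwise-in-$(z_1,z_2)$ control and not integrability; the substantive work, exactly as in the paper's proof, is the tail bookkeeping for the ratio $\nabla_{\bm\psi}(V_1V_2-V_{12})/(V_1V_2-V_{12})$ that you describe afterwards.
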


\begin{proof}
First, we show \eqref{eqn:third_moment_score}. By~\eqref{eqn:grp_def}, it suffices to show that 
\begin{equation*}
\E[||\nabla_{\bm\psi} \log f_{ \bm{h},u}(Z_1, Z_2;
{\bm\psi}^\star)||^q]<\infty ,
\end{equation*}%
where $Z_{1}$ and $Z_2$ denote $Z(\bm{s}_{1},1)$ and $Z(\bm{s%
}_{1}+\bm{h},1+u)$ for arbitrary $\bm{h}\in \mathcal{H}_{r}$ and $%
u\in \{1,\ldots,p\}$.

Firstly, using the same notation as in Lemma \ref{lem:log_f_integrable},
notice that $-V$ and $V_{1}V_{2}-V_{12}$ are both linear combinations of
terms which are asymptotically equivalent to 
\begin{equation}
\pm \frac{\log ^{K_{1}}(z_{1})\log ^{K_{2}}(z_{2})\varphi (q_{1})^{K_{3}}}{%
z_{1}^{K_{4}}z_{2}^{K_{5}}}  \label{eqn:K1to5}
\end{equation}%
as either $z_{1}$ or $z_{2}$ approach $0$ or $\infty $, for various choices
of $K_{1},K_{2},K_{3},K_{4},K_{5}\in \N$. Since 
\begin{equation*}
\nabla _{\bm{\psi }}\log f_{\bm{h},u}(z_{1},z_{2};{\bm{%
\psi }})=-\nabla _{\bm{\psi }}V+\frac{\nabla _{\bm{\psi }%
}(V_{1}V_{2}-V_{12})}{V_{1}V_{2}-V_{12}},
\end{equation*}%
we need only consider the effect of $\nabla _{\bm{\psi }}$ on the
terms in \eqref{eqn:K1to5}.

From the computations in Section~\ref{sec:density_function}, it follows
that the magnitude of the gradient with respect to ${\bm{\psi }}$ of
any term in the form of \eqref{eqn:K1to5} is asymptotically equivalent
to another term in the form of \eqref{eqn:K1to5} with $K_{3}$
unchanged, and possibly increased $K_{1}$, $K_{2}$, $K_{4}$, and $K_{5}$%
. Thus, the numerator and denominator of 
\begin{equation*}
\frac{\nabla _{\bm{\psi }}(V_{1}V_{2}-V_{12})}{V_{1}V_{2}-V_{12}}
\end{equation*}%
are both in the form of \eqref{eqn:K1to5} and thus the ratio is asymptotically equivalent to 
\begin{equation*}
\bigg|\frac{\log ^{\Delta K_{1}}(z_{1})\log ^{\Delta K_{2}}(z_{2})}{%
z_{1}^{\Delta K_{4}}z_{2}^{\Delta K_{5}}}\bigg|,
\end{equation*}%
where $\Delta K_{1},\Delta
K_{2},\Delta K_{4}$, and $\Delta K_{5}$ are non-negative. This implies that
there exists a sufficiently large $k_{2}\in \R^{+}$, a sufficiently small $%
k_{1}\in \R^{+}$ such that $k_{1}\leq k_{2}$, and a sufficiently large $%
k_{3}\in \R^{+}$ such that 
\begin{equation*}
\bigg{|}\bigg{|}-\nabla _{\bm{\psi }}V+\frac{\nabla _{\bm{%
\psi }}(V_{1}V_{2}-V_{12})}{V_{1}V_{2}-V_{12}}\bigg{|}\bigg{|}\leq \Big(\log
z_{1}+\frac{1}{z_{1}}\Big)^{k_{3}}\Big(\log z_{2}+\frac{1}{z_{2}}\Big)%
^{k_{3}}
\end{equation*}%
whenever $(z_{1},z_{2})\notin \lbrack k_{1},k_{2}]^{2}$.

Following the arguments given in Lemma \ref{lem:log_f_integrable} and using
the calculations in Section~\ref{sec:density_function}, 
\begin{equation*}
K_{k_{1},k_{2}}=\sup_{z_{1},z_{2},\bm{\psi }}\{||\nabla _{%
\bm{\psi }}\log f_{\bm{h},u}(z_{1},z_{2};{\bm{\psi }}%
^{\star })||:z_{1},z_{2}\in \lbrack k_{1},k_{2}],\bm{\psi }\in \Psi_\varepsilon \}
\end{equation*}%
can be shown to be finite for any $k_{1},k_{2}\in \R^{+}$ such that $%
k_{1}\leq k_{2}$. Then, using H\"{o}lder's inequality as in Lemma \ref%
{lem:log_f_integrable}, we show that 
\begin{equation*}
\E[||\nabla_{\bm\psi} \log f_{ \bm{h},u}(X_1, X_2;
{\bm\psi}^\star)||^q]<\infty ,
\end{equation*}%
which proves \eqref{eqn:third_moment_score}.

Similar arguments yield that 
\begin{equation*}
\nabla _{\bm{\psi }}^{2}\log f_{\bm{h},u}(z_{1},z_{2};{%
\bm{\psi }})=-\nabla _{\bm{\psi }}^{2}V+\frac{%
(V_{1}V_{2}-V_{12})\nabla _{\bm{\psi }}^{2}(V_{1}V_{2}-V_{12})-\big(%
\nabla _{\bm{\psi }}(V_{1}V_{2}-V_{12})\big)^{2}}{%
(V_{1}V_{2}-V_{12})^{2}}
\end{equation*}%
can be bounded in absolute value by an integrable function that is
independent of ${\bm{\psi }}$, proving the uniform integrability of $%
||\nabla _{\bm{\psi }}^{2}g_{r,p}(\bm{s}_{1},1;{\bm{\psi 
}})||$ on $\Psi_\varepsilon $.
\end{proof}

\begin{proof}[Proof of Theorem \protect\ref{thm:normality}]
The proof in \cite{davis2013b} for the asymptotic normality of their
estimator implies the asymptotic normality of ours. However, we provide a
summary of their proof for completeness.

Consider the Taylor expansion of $\nabla _{\bm{\psi }}\mathrm{PL}%
^{(m,T)}(\hat{\bm{\psi }})$ around the true parameter vector ${%
\bm{\psi }}^{\star }$:
\begin{equation*}
\nabla _{\bm{\psi }}\mathrm{PL}^{(m,T)}(\hat{\bm{\psi }}%
)=\nabla _{\bm{\psi }}\mathrm{PL}^{(m,T)}({\bm{\psi }}^{\star
})+\nabla _{\bm{\psi }}^{2}\mathrm{PL}^{(m,T)}(\Tilde{\bm{%
\psi }})(\hat{\bm{\psi }}-{\bm{\psi }}^{\star }),
\end{equation*}%
for some $\Tilde{{\bm{\psi }}}\in \Psi_\varepsilon $ whose components are
between those of ${\bm{\psi }}^{\star }$ and $\hat{\bm{\psi }}
$. Since $\mathrm{PL}^{(m,T)}({\bm{\psi }})$ is maximized by $\hat{%
\bm{\psi }}$, we can write 
\begin{equation}
\frac{1}{(m^{2}T)^{1/2}}\nabla _{\bm{\psi }}\mathrm{PL}^{(m,T)}({%
\bm{\psi }}^{\star })=\bigg(-\frac{1}{m^{2}T}\nabla _{\bm{%
\psi }}^{2}\mathrm{PL}^{(m,T)}(\Tilde{\bm{\psi }})\bigg)\bigg(%
(m^{2}T)^{1/2}(\hat{\bm{\psi }}-{\bm{\psi }}^{\star })\bigg).
\label{eqn:taylor_simplified}
\end{equation}

We recall that ${\bm{\psi }}^{\star }$ is the unique maximizer of $%
\E[g_{ r,
p}(\bm{s}_1, 1; {\bm\psi})]$. It follows from Lemma \ref%
{lem:score_L3} and the dominated convergence theorem that 
\begin{equation*}
\E[\nabla_{\bm\psi} g_{ r, p}(\bm{s}_1, 1; {\bm\psi}^\star)]%
=\nabla _{\bm{\psi }}\E[g_{ r, p}(\bm{s}_1, 1;
{\bm\psi}^\star)]=0.
\end{equation*}%
This fact, together with Lemmas \ref{lem:alpha_mixing} and \ref{lem:score_L3}%
, gives sufficient conditions to apply the central limit theorem provided in 
\cite{bolthausen1982}, which implies 
\begin{equation*}
\frac{1}{(m^{2}T)^{1/2}}\sum_{\bm{s}\in S_{m}}\sum_{t=1}^{T}\nabla _{%
\bm{\psi }}g_{r,p}(\bm{s},t;{\bm{\psi }}^{\star })%
\xrightarrow{\mathrm{d}}\mathcal{N}\big(0,\Sigma \big),\qquad m,T\rightarrow
\infty ,
\end{equation*}%
where $\Sigma $ is given by \eqref{eqn:big_sigma}.

We can repeat the arguments in the proof of Theorem \ref{thm:consistency} to
show that ${\bm{\psi }}^{\star }$ is the unique maximizer of $%
\mathcal{R}^{(m,T)}({\bm{\psi }})$. Arguments in Lemma \ref%
{lem:score_L3} justify that we can again use the central limit theorem in 
\cite{bolthausen1982} to achieve 
\begin{equation*}
\frac{1}{(m^{2}+mT)^{1/2}}\nabla _{\bm{\psi }}\mathcal{R}^{(m,T)}({%
\bm{\psi }}^{\star })\xrightarrow{\mathrm{d}}\mathcal{N}\big(0,\Tilde{%
\Sigma}\big),\qquad m,T\rightarrow \infty ,
\end{equation*}%
where $\Tilde{\Sigma}$ is a valid covariance matrix. Therefore, 
\begin{equation*}
\frac{1}{(m^{2}T)^{1/2}}\nabla _{\bm{\psi }}\mathcal{R}^{(m,T)}({%
\bm{\psi }}^{\star })\xrightarrow{\mathrm{p}}0,\qquad m,T\rightarrow
\infty .
\end{equation*}%
These results can be combined with Slutsky's lemma to yield 
\begin{equation}
\frac{1}{(m^{2}T)^{1/2}}\nabla _{\bm{\psi }}\mathrm{PL}^{(m,T)}({%
\bm{\psi }}^{\star })\xrightarrow{\mathrm{d}}\mathcal{N}\big(0,\Sigma %
\big),\qquad m,T\rightarrow \infty .  \label{eqn:score_PL_normal}
\end{equation}

Additionally, Proposition \ref{lem:X_mixing} and Lemma \ref{lem:score_L3}
provide sufficient conditions for the strong law of large numbers from \cite%
{straumann2006} to apply to $\{\nabla _{\bm{\psi }}^{2}g_{r,p}(%
\bm{s},t;{\bm{\psi }})\}_{\bm{s}\in \Z^{2},t\in \N}$.
Therefore, uniformly on $\Psi_\varepsilon $, 
\begin{equation*}
-\frac{1}{m^{2}T}\sum_{\bm{s}\in S_{m}}\sum_{t=1}^{T}\nabla _{%
\bm{\psi }}^{2}g_{r,p}(\bm{s},t;{\bm{\psi }})\overset{%
\mathrm{a.s.}}{\longrightarrow }\E[-\nabla_{\bm\psi}^2 g_{ r,
p}(\bm{s}_1, 1; {\bm\psi})],\qquad m,T\rightarrow \infty ;
\end{equation*}%
similarly, 
\begin{equation*}
-\frac{1}{m^{2}T}\nabla _{\bm{\psi }}^{2}\mathcal{R}^{(m,T)}({%
\bm{\psi }})\overset{\mathrm{a.s.}}{\longrightarrow }0,\qquad
m,T\rightarrow \infty ,
\end{equation*}%
and thus, 
\begin{equation*}
-\frac{1}{m^{2}T}\nabla _{\bm{\psi }}^{2}\mathrm{PL}^{(m,T)}({%
\bm{\psi }})\overset{\mathrm{a.s.}}{\longrightarrow }%
\E[-\nabla_{\bm\psi}^2 g_{ r, p}(\bm{s}_1, 1; {\bm\psi})]%
,\qquad m,T\rightarrow \infty .
\end{equation*}%
Since the convergence is uniform on $\Psi_\varepsilon $, and that $\Tilde{\bm{%
\psi }}\rightarrow {\bm{\psi }}^{\star }$ almost surely from the
consistency of $\hat{\bm{\psi }}$, we have 
\begin{equation*}
-\frac{1}{m^{2}T}\nabla _{\bm{\psi }}^{2}\mathrm{PL}^{(m,T)}(\Tilde{%
\bm{\psi }})\overset{\mathrm{a.s.}}{\longrightarrow }%
F = \E[-\nabla_{\bm\psi}^2 g_{ r, p}(\bm{s}_1, 1;
{\bm\psi}^\star)],\qquad m,T\rightarrow \infty .
\end{equation*}%
By combining this result with~\eqref{eqn:taylor_simplified} and~\eqref{eqn:score_PL_normal} and using Slutsky's lemma, we finally prove the theorem.
\end{proof}

\section{Simulation study and justification of assumptions}
\subsection{Simulation strategy}\label{Sec_Appendix_Simulations}

Here we outline a method for simulating realizations of our model, defined recursively in~\ref{eqn:max_auto}. The recurrence relation serves to reduce computational complexity, in that it suffices to simulate independent replications of the spatial random field $W$ in~\eqref{Eq_InnovationField} on the two-dimensional grid $\mathcal S_m$ in~\eqref{eqn:Sm} for $m\in \N^+$.

To leverage \eqref{eqn:max_auto} when simulating our field at the space-time coordinate $(\bm s, t) \in \mathcal S_m\times \N^+$, one needs the value of the field at $(\bm s - \bm\tau, t-1)$. This limits the permissible values of the advection parameter $\bm\tau$ when simulating our space-time field on all of $\mathcal S_m \times \{1,\ldots, T\}$, since $\bm\tau$ must be aligned with the grid of simulation sites. If this is the case, the simulation method is a trivial application of~\eqref{eqn:max_auto}. The main practical consideration to keep in mind is that information from outside the domain $\mathcal S_m$ ``drifts'' inwards at a speed of $\bm\tau$, and so the innovation field $W_{\tilde t}$ should be simulated sufficiently far outside of $\mathcal S_m$ for each $\tilde t \in \N^+$ less than $t$.

\begin{Rq}
This simulation scheme should be used cautiously when performing inference on the random field using the pairwise likelihood estimation strategy described in Section~\ref{Sec_InferenceSimulations}. We require that $\bm h/u\neq \bm\tau^\star$ for any pair $u\in\{1,\ldots,p\}$ and $\bm h \in \mathcal{H}_r$. However, it holds by construction that $\bm \tau^\star$ is aligned with the grid of simulation sites. An appropriate solution is to simulate on a finer spatial grid than the grid of spatial lags in the design mask $\mathcal H_r$ used for the estimation.
\end{Rq}

\subsection{The ratio random field as a diagnostic tool}
\label{App_Diagnos_ratio}

We now present a method to verify that $\bm\tau \notin \{\bm{h}/u : \bm{h}\in\mathcal{H}_r, u=1,\ldots,p\}$. For $\bm{h}\in\mathcal{H}_r$ and $u\in\{1,\ldots,p\}$, consider the ratio random field
\begin{equation}\label{eqn:chi_def}
\chi_{\bm{h}, u}(\bm{s},t) = \bigg[\frac{Z(\bm{s}+\bm{h},t+u)}{Z(\bm{s},t)}\bigg]^{1/u},\qquad \bm{s}\in\R^2,t\in\N,
\end{equation}
where our space-time model $Z$ in~\eqref{eqn:max_auto}
is assumed to be space-time mixing (see Definition~\ref{def:mixing_in_space_and_time}).

\begin{figure}
    \centering
    \begin{subfigure}[b]{0.495\textwidth}
        \centering
        \includegraphics[width=\textwidth]{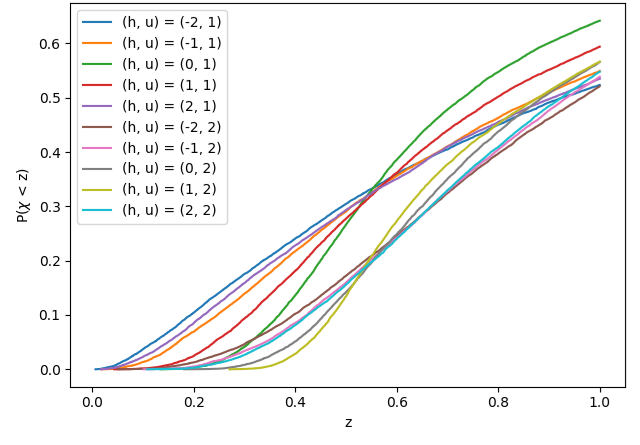}
        \caption
        {{\small $\kappa = 1.2$, $H=0.7$, $\tau = 1/3$, $a=0.5$}}    
    \end{subfigure}
    \hfill
    \begin{subfigure}[b]{0.495\textwidth}  
        \centering 
        \includegraphics[width=\textwidth]{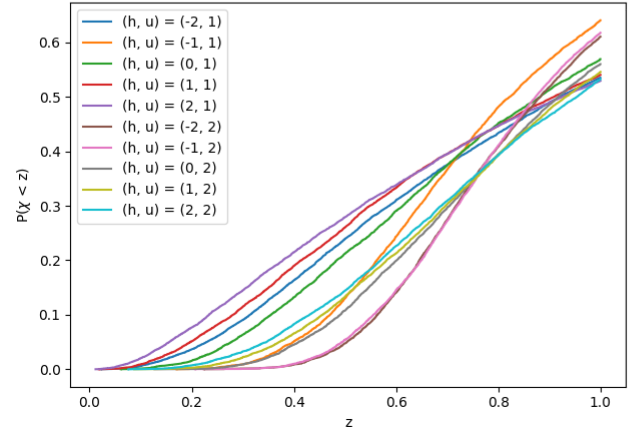}
        \caption
        {{\small $\kappa = 1.5$, $H=0.5$, $\tau = -0.75$, $a=0.7$}}
    \end{subfigure}
    \vskip\baselineskip
    \begin{subfigure}[b]{0.495\textwidth}   
        \centering 
        \includegraphics[width=\textwidth]{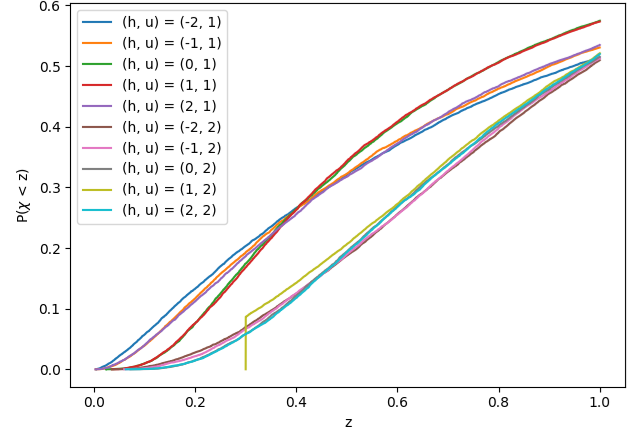}
        \caption
        {{\small $\kappa = 1.0$, $H=0.6$, $\tau = 0.5$, $a=0.3$}}
    \end{subfigure}
    \hfill
    \begin{subfigure}[b]{0.495\textwidth}   
        \centering 
        \includegraphics[width=\textwidth]{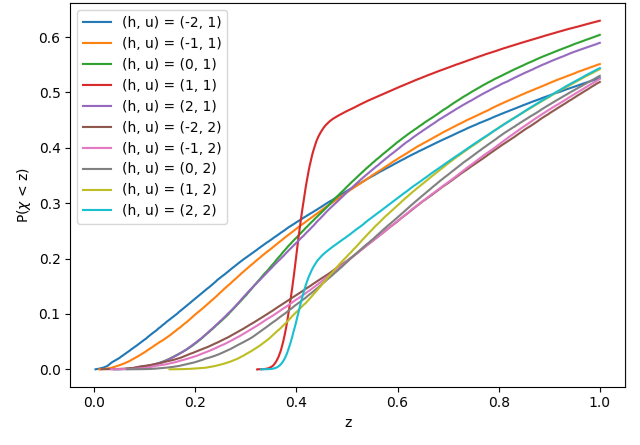}
        \caption
        {{\small $\kappa = 1.7$, $H=0.8$, $\tau = 0.95$, $a=0.4$}}
    \end{subfigure}
    \caption{Empirical distribution of the margins $\chi_{h, u}(s,t)$ computed from a single realization are shown on $[0,1]$ for $(h,u)$ such that $h\in\{-2,-1,0,1,2\}$ and $u = 1, 2$. The four plots correspond to four different parametrizations of our space-time model as indicated below each plot.}
    \label{fig:chi_plots}
\end{figure}

\begin{prop}\label{prp:empirical_taus}
Suppose that the spatial random field $W$ is Brown--Resnick with exponent measure given in~\eqref{eqn:bivariate_exponent_measure_BR}. It holds that
\begin{equation}\label{eqn:inf_a}
    \inf \left\{\chi_{\bm{h}, u}(\bm{s},t) : \bm s\in \R^2, t \in \N \right\} = a
\end{equation}
almost surely, if and only if $\bm\tau=\bm{h}/u$. Moreover, in this case, for any $\bm{s}\in\R^2$ and $t\in\N$, $\P\big(\chi_{\bm{h}, u}(\bm s,t)=a\big) = a^u$. Otherwise, when $\bm\tau\neq\bm{h}/u$, $$\inf \left\{\chi_{\bm{h}, u}(\bm{s},t): \bm s\in \R^2, t \in \N \right\}=0$$
almost surely.
\end{prop}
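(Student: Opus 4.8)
The plan is to split into the cases $\bm\tau=\bm h/u$ and $\bm\tau\neq\bm h/u$, in each combining the closed form~\eqref{Eq_RecurrenceLagu} with an ergodicity argument applied to the spatial slices of $Z$. First I would record the ergodic ingredient. By Lemma~\ref{lem:X_mixing}, $Z$ is space-time mixing in the sense of Definition~\ref{def:mixing_in_space_and_time}; since $Z$ is a space-time max-stable field, the characterization of mixing for max-stable fields \citep{kabluchko2010ergodic} turns this into mixing, hence ergodicity, of the measure-preserving system generated by $Z$ under spatial shifts, and this persists when the shifts are restricted to the sublattice $\mu\Z^2$. As $\chi_{\bm h,u}(\bm s,t)$ is a fixed measurable function of $Z$ at the two coordinates $(\bm s,t)$ and $(\bm s+\bm h,t+u)$, equivariant under spatial shifts, the same holds for $\{\chi_{\bm h,u}(\bm s,t)\}_{\bm s}$ at each fixed $t$. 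The consequence used twice below is: if $B$ is a Borel set with $\P\big(\chi_{\bm h,u}(\bm 0,t)\in B\big)>0$, then almost surely $\chi_{\bm h,u}(\bm s,t)\in B$ for some $\bm s\in\mu\Z^2$, because the event that the $\mu\Z^2$-orbit never meets $\{\chi_{\bm h,u}(\cdot,t)\in B\}$ is shift-invariant and contained in $\{\chi_{\bm h,u}(\bm 0,t)\notin B\}$, hence is null by ergodicity.

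\emph{Case $\bm\tau=\bm h/u$.} Evaluating~\eqref{Eq_RecurrenceLagu} at $(\bm s+\bm h,t+u)$ and using $\bm h=u\bm\tau$ gives $Z(\bm s+\bm h,t+u)=\max\{a^uZ(\bm s,t),(1-a^u)\widetilde W_t^{t+u}(\bm s+u\bm\tau)\}\geq a^uZ(\bm s,t)$, so $\chi_{\bm h,u}(\bm s,t)\geq a$ for every $\bm s\in\R^2$, $t\in\N$, surely. Moreover $\{\chi_{\bm h,u}(\bm s,t)=a\}=\{(1-a^u)\widetilde W_t^{t+u}(\bm s+u\bm\tau)\leq a^uZ(\bm s,t)\}$; since $\widetilde W_t^{t+u}(\bm s+u\bm\tau)$ is standard Fréchet (see after~\eqref{eqn:W_tilde}) and independent of $Z(\bm s,t)$, which is also standard Fréchet, conditioning on $Z(\bm s,t)=z$ and integrating $\exp\!\big(-(1-a^u)/(a^uz)\big)$ against the density $z^{-2}e^{-1/z}$ yields $\P\big(\chi_{\bm h,u}(\bm s,t)=a\big)=a^u$, which is the ``moreover'' assertion and is in particular positive at $\bm s=\bm 0$. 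Taking $B=\{a\}$ in the ergodic consequence, for each $t$ almost surely some $\bm s\in\mu\Z^2$ satisfies $\chi_{\bm h,u}(\bm s,t)=a$; combined with the everywhere-valid bound $\chi_{\bm h,u}\geq a$ this gives $\inf_{\bm s\in\R^2}\chi_{\bm h,u}(\bm s,t)=a$ for every $t$ almost surely, and intersecting over $t\in\N$ shows $\inf\{\chi_{\bm h,u}(\bm s,t):\bm s\in\R^2,t\in\N\}=a$ almost surely.

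\emph{Case $\bm\tau\neq\bm h/u$.} Here $\bm h-u\bm\tau\neq\bm 0$, and~\eqref{Eq_RecurrenceLagu} at $(\bm s+\bm h,t+u)$ reads $Z(\bm s+\bm h,t+u)=\max\{a^uZ(\bm s+\bm h-u\bm\tau,t),(1-a^u)\widetilde W_t^{t+u}(\bm s+\bm h)\}$. Fixing $t$ and $\varepsilon>0$, I would bound $\{\chi_{\bm h,u}(\bm s,t)<\varepsilon\}$ from below by the intersection of $\{Z(\bm s,t)>M\}$, $\{Z(\bm s+\bm h-u\bm\tau,t)<\delta\}$ and $\{\widetilde W_t^{t+u}(\bm s+\bm h)<\delta\}$, with $\delta$ small, $M$ large and $\delta/M<\varepsilon^u$. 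The last event is independent of the first two and has probability $e^{-1/\delta}>0$; and since $(Z(\bm s,t),Z(\bm s+\bm h-u\bm\tau,t))$ is, at fixed $t$, a bivariate spatial Brown--Resnick vector at the nonzero lag $\bm h-u\bm\tau$ with $0<\gamma(\bm h-u\bm\tau)<\infty$, its exponent measure (the Hüsler--Reiss form obtained from~\eqref{eqn:bivariate_exponent_measure_BR} at temporal lag $0$ and spatial lag $\bm h-u\bm\tau$) is strictly decreasing in each argument, so that $V(M,\delta)>\lim_{z_1\to\infty}V(z_1,\delta)=1/\delta$ and hence $\P\big(Z(\bm s,t)>M,\,Z(\bm s+\bm h-u\bm\tau,t)<\delta\big)>0$. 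Thus $\P\big(\chi_{\bm h,u}(\bm 0,t)<\varepsilon\big)>0$ for every $\varepsilon>0$, and applying the ergodic consequence with $B=[0,1/n)$ and intersecting over $n\in\N$ and $t\in\N$ gives $\inf\{\chi_{\bm h,u}(\bm s,t):\bm s\in\R^2,t\in\N\}=0$ almost surely. Combining the two cases, and using that $a\in(0,1)$ so $0\neq a$, yields the equivalence in~\eqref{eqn:inf_a} together with the ``moreover'' and ``otherwise'' statements.

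The step I expect to be the main obstacle is the first paragraph: carefully passing from the pairwise mixing of Definition~\ref{def:mixing_in_space_and_time} (Lemma~\ref{lem:X_mixing}) to genuine ergodic-theoretic mixing/ergodicity of the spatial slices of $Z$, which leans on $Z$ being a space-time max-stable field together with the extremal-coefficient characterization of mixing. A secondary technical point is the full-support-type statement for the bivariate spatial Brown--Resnick law used in the second case.
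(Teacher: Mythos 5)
Your proposal is correct and follows essentially the same route as the paper's proof: the same case split on $\bm\tau=\bm h/u$ versus $\bm\tau\neq\bm h/u$, the same use of the max-autoregressive recursion and the independence/Fréchet law of the innovation to get $\chi_{\bm h,u}\geq a$ and the atom $\P(\chi_{\bm h,u}=a)=a^u$, the same positive-mass-near-zero argument from the bivariate Brown--Resnick law in the other case, and the same appeal to mixing/ergodicity (via Lemma~\ref{lem:X_mixing} and the extremal-coefficient characterization of \citealp{kabluchko2010ergodic}) to upgrade marginal support statements to the almost-sure infimum. You simply spell out the ergodic step and the atom computation in more detail than the paper does, which is fine.
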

\begin{proof}
\begin{description}
\item[Case 1:] $\bm\tau=\bm{h}/u$. For all $\bm{s}\in\R^2$ and $t\in\N$,
\begin{equation*}
    \chi_{\bm{h}, u}(\bm{s},t) \stackrel{\mathrm{d}}{=} \max\bigg\{a, \Big(\big(1-a^u\big)R\Big)^{1/u}\bigg\}
\end{equation*}
follows directly from \eqref{eqn:max_auto}, where $R$ is the ratio of two independent exponential random variables each with unit rate. The field $Z$ is space-time mixing, and since  $\chi_{\bm{h}, u}$ is composed of local operations of $Z$, it is also space-time mixing, proving \eqref{eqn:inf_a}.
Moreover, for any $\bm{s}\in\R^2$ and $t\in\N$,
\begin{equation*}
    \P\big(\chi_{\bm{h}, u}(\bm s,t)=a\big) = \P\bigg(R \leq \frac{a^u}{1-a^u}\bigg) = a^u.
\end{equation*}
\item[Case 2:] $\bm\tau \neq \bm{h}/u$.
It follows directly from the bivariate exponent measures in~\eqref{eqn:bivariate_exponent_measure_BR} that for any $\bm{s}\in\R^2$ and $t\in\N$, the random variable $\chi_{\bm{h}, u}(\bm s,t)$ assigns a non-zero probability measure to the interval $(0,\varepsilon)$ for any $\varepsilon > 0$. Since $\chi_{\bm{h}, u}$ is space-time mixing, the result follows.
\end{description}
\end{proof}

Proposition~\ref{prp:empirical_taus} highlights that the distribution function of the margins of $\chi_{\bm h, u}$ carries information about the decay parameter $a$ whenever $\bm h / u = \bm\tau$. In practice, for some $\bm h \in \mathcal H_r$ and $u \in \{1\ldots,p\}$, the empirical distribution function of the margins of $\chi_{\bm h, u}$, given by
\begin{equation}\label{eqn:empirical_chi_curves}
    \hat{F}_{\chi_{\bm{h},u}}(z) = \frac{1}{\vert\mathcal{S}_{m} \cap \mathcal{S}_m - \bm{h}\vert\times (T-u)}\sum_{\bm{s}\in \mathcal{S}_{m} \cap \mathcal{S}_m - \bm{h}}\sum_{t=1}^{T-u}\mathbb{I}(\chi_{\bm{h},u}(\bm{s},t) \leq z),
\end{equation}
can be computed, and cases where $\bm h / u = \bm\tau$ can be identified. Indeed, when $\bm h / u = \bm\tau$, the empirical distribution function in~\eqref{eqn:empirical_chi_curves} is 0 on the interval $(0,a)$, then it jumps to the value $a^u$ at $a$. This behavior of the empirical distribution function indicates $a$ as the jump location, and $\bm\tau$ as $\bm h/u$. If $a$ and $\bm\tau$ are identified in this way, then the pairwise likelihood function in~\eqref{eqn:pl-likelihoood_spatial} can be used to estimate the spatial parameter $\bm \theta$ to finish the estimation procedure.

To illustrate this in a numerical example, we simulate $Z$ in~\eqref{eqn:max_auto} with a Brown--Resnick spatial dependence structure on a one dimensional spatial domain for computational efficiency. Indeed, $Z(s, t)$ is simulated for $s, t \in \{1,\ldots,100\}$ using the simulation strategy described above for four different parameter vectors $\bm \psi = (\kappa, H, \tau, a)^{\prime}$, where $\kappa$ and $H$ parametrize the semivariogram $\gamma(x) = \left( | x |/\kappa \right)^{2H}$, with $x \in \mathbb{R}$. In a following step, from one realization of our field with parameter vector $\bm \psi$, we compute the corresponding ratio random fields $\chi_{h, u}$ for $h \in \{-2,-1,0,1,2\}$ and $u = 1, 2$. In Figure~\ref{fig:chi_plots}, we see that the empirical distribution function in~\eqref{eqn:empirical_chi_curves} of the margins of the random fields are visually informative for the temporal parameters $a$ and $\tau$. Indeed, when $\tau \approx h/u$, there is a jump in the empirical distribution function at the value $a$ as stated previously.


\subsection{Diagnostics}
\label{App_Diagnos}


In the following, we use the ratio random field to perform a preliminary examination of the hourly wind gust data. This step justifies the assumption that $\bm\tau\in \Psi_\varepsilon$, where $\Psi_\varepsilon$ in~\eqref{eqn:psi_epsilon} is a parameter space that excludes vectors $\bm\psi$ with $\|\bm\tau - \bm h/u\| < \varepsilon$ for all $\bm h \in \mathcal H_r$ and $u = 1,\ldots,p$. The empirical distribution functions plotted in Figure~\ref{Fig_DataCurves} for several of these space-time lags do not exhibit any clear discontinuities on the interval $(0,1)$, in contrast with the discontinuous curve in Figure~\ref{fig:chi_plots}~(c) with $\tau=h/u$. Under the assumption that the data follow our model, the smoothness of the curves in Figure~\ref{Fig_DataCurves} indicates that $\bm\tau^\star\neq \bm h / u$ for all considered $\bm h$ and $u$. 

\begin{figure}
    \centering
    \includegraphics[scale=0.5]{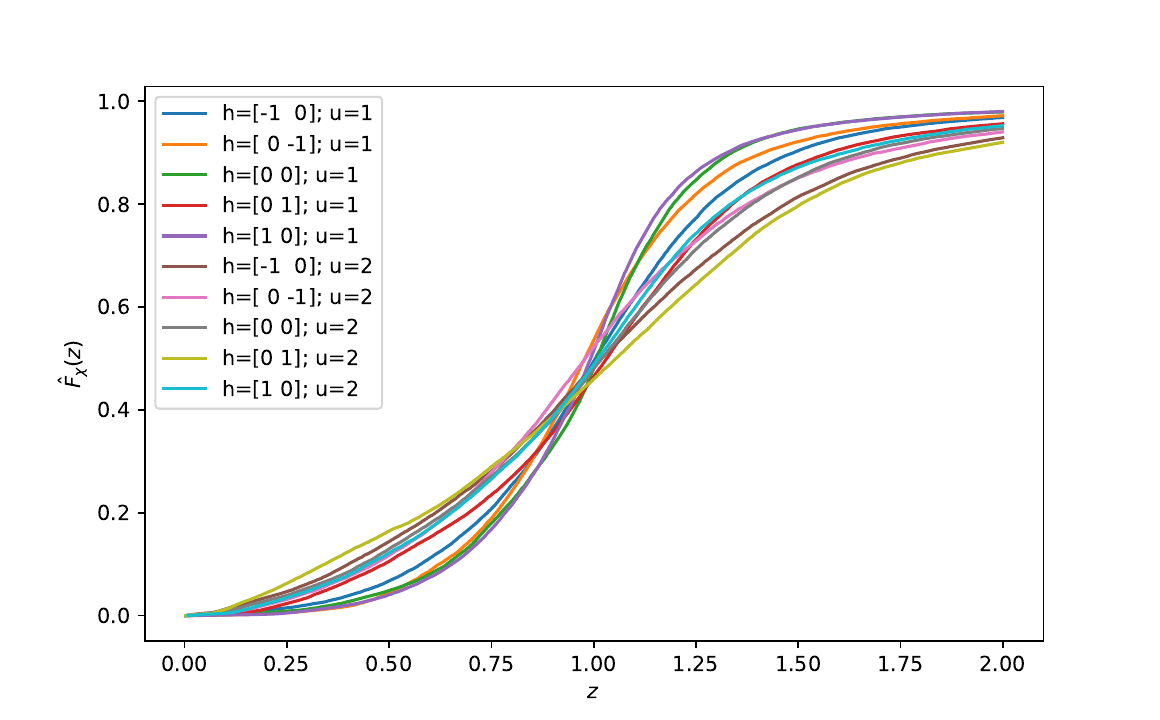}
    \caption{Empirical distribution functions of the margins of the ratio random fields $\chi_{\bm h, u}$, for several space-time lags $(\bm h, u)$, in~\eqref{eqn:empirical_chi_curves}, evaluated from the hourly wind gust data.}
    \label{Fig_DataCurves}
\end{figure}

\section{Single-site marginal parameters}\label{sec:single_site_params}

\begin{figure}[H]
    \centering
    \includegraphics[scale = 0.5]{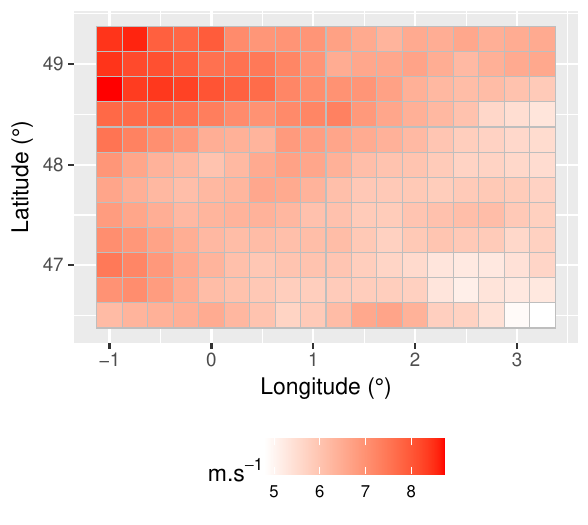}
    \includegraphics[scale = 0.5]{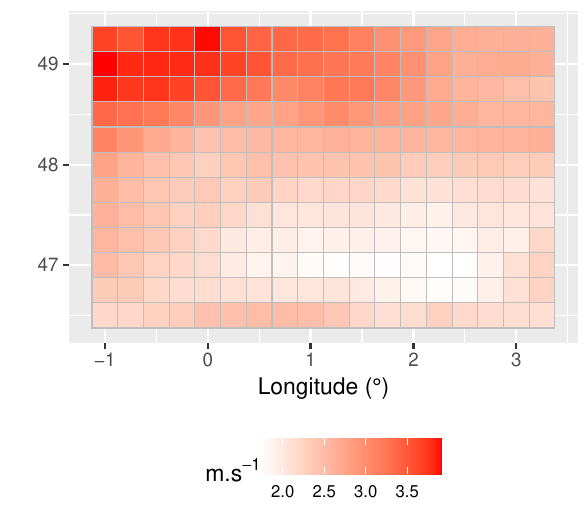}
    \includegraphics[scale = 0.5]{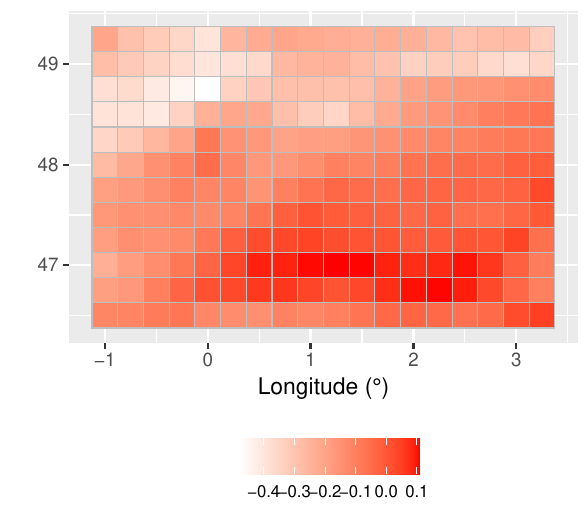}
    \caption{Estimated GEV parameters (location, scale and shape from left to right) used in the marginal rescaling in Section~\ref{Sec_CaseStudy}.}
    \label{fig:gev_params}
\end{figure}

\end{document}